\documentclass[11pt]{article}
\usepackage[utf8]{inputenc}
\usepackage{amsmath, amsthm, amssymb, amsfonts, braket, complexity, enumerate, mathrsfs, tikz, caption, subcaption, xcolor, mathtools, braket, dirtytalk}

\usepackage{tabularx}

\usepackage[linesnumbered,ruled,vlined]{algorithm2e}
\SetKwInput{KwInput}{Input}     
\SetKwInput{KwOutput}{Output}
\SetKwFor{RepTimes}{repeat}{times}{end}

\usepackage[pagebackref,colorlinks,citecolor=blue,linkcolor=magenta,bookmarks=true]{hyperref}
\usepackage[nameinlink,capitalize]{cleveref}

\newcommand{\nn}{\nonumber\\}

\newcommand{\wh}[1]{\widehat{#1}}

\newcommand{\F}{\mathbb{F}}
\renewcommand{\hat}{\widehat}

\theoremstyle{plain}
\newtheorem{theorem}{Theorem}[section]
\newtheorem{corollary}[theorem]{Corollary}
\newtheorem{definition}[theorem]{Definition}
\newtheorem{proposition}[theorem]{Proposition}
\newtheorem{lemma}[theorem]{Lemma}
\newtheorem{claim}[theorem]{Claim}
\newtheorem{fact}[theorem]{Fact}

\newtheorem{question}[theorem]{Question}
\newcommand{\stab}{\textrm{Stab}}
\newcommand{\stabset}{\mathcal{S}}

\newcommand{\eps}{\varepsilon}
\newcommand{\labs}{\left|}
\newcommand{\rabs}{\right|}
\newcommand{\lpar}{\left(}
\newcommand{\rpar}{\right)}
\newcommand{\lbrak}{\left[}
\newcommand{\rbrak}{\right]}

\newcommand{\extent}{\xi}
\newcommand{\fidelity}{F}

\renewcommand{\Pr}{\mathop{\bf Pr\/}}
\newcommand{\Ex}{\mathop{\bf E\/}}
\newcommand{\Es}[1]{\mathop{\bf E\/}_{{\substack{#1}}}}

\newcommand{\calC}{\mathcal{C}}

\newcommand{\calS}{\mathcal{S}}

\newcommand{\abs}[1]{\lvert #1 \rvert}

\newcommand{\norm}[1]{\lVert #1 \rVert}

\newcommand{\ketbra}[2]{\ket{#1}\!\!\bra{#2}}

\usepackage{amsmath,amssymb,amsthm,amsfonts,latexsym,bbm,xspace,graphicx,float,mathtools,braket,}
\usepackage{comment} 
\usepackage[letterpaper,margin=1in]{geometry}
\usepackage{enumitem}

\usepackage[colorinlistoftodos]{todonotes}

\renewcommand{\backref}[1]{}

\renewcommand{\backrefalt}[4]{%
\ifcase #1 %
\or
[p.\ #2]%
\else
[pp.\ #2]%
\fi}

\usepackage{apxproof}

\usepackage{amsmath}

\newcommand{\pauli}{\mathcal{P}_n}
\newcommand{\etamin}{\eta_{min}}

\title{Low-Stabilizer-Complexity Quantum States Are Not Pseudorandom}

\author{Sabee Grewal\thanks{The University of Texas at Austin. \texttt{\{sabee, kretsch\}@cs.utexas.edu}, \texttt{\{vishnu.iyer, dliang\}@utexas.edu}.}\and Vishnu Iyer\footnotemark[1]\and William Kretschmer\footnotemark[1]\and Daniel Liang\footnotemark[1]}

\date{}

\begin{document}

\maketitle
\begin{abstract}

We show that quantum states with ``low stabilizer complexity'' can be efficiently distinguished from Haar-random. 
Specifically, given an $n$-qubit pure state $\ket{\psi}$, we give an efficient algorithm that distinguishes whether $\ket{\psi}$ is (i) Haar-random or (ii) a state with stabilizer fidelity at least $\frac{1}{k}$ (i.e., has fidelity at least $\frac{1}{k}$ with some stabilizer state), promised that one of these is the case.
With black-box access to $\ket{\psi}$, our algorithm uses $O\!\left( k^{12} \log(1/\delta)\right)$ copies of $\ket{\psi}$ and $O\!\left(n k^{12} \log(1/\delta)\right)$ time to succeed with probability at least $1-\delta$, and, with access to a state preparation unitary for $\ket{\psi}$ (and its inverse), $O\!\left( k^{3} \log(1/\delta)\right)$ queries and $O\!\left(n k^{3} \log(1/\delta)\right)$ time suffice.

As a corollary, we prove that $\omega(\log(n))$ $T$-gates are necessary for any Clifford+$T$ circuit to prepare computationally pseudorandom quantum states, a first-of-its-kind lower bound.
\end{abstract}

\section{Introduction}
\label{sec:intro}
The stabilizer formalism \cite{gottesman1997stabilizer} plays a central role in quantum information. Stabilizer states are states that lie in the intersection of the positive eigenspaces of $2^n$ commuting Pauli operators. Stabilizer states can be generated by Clifford circuits, which are the group of unitary transformations that normalize the Pauli group. Stabilizer states and the Clifford group have widespread applications in quantum error correction \cite{shor1995codes, calderbank1996codes}, measurement-based quantum computation \cite{raussendorf2000mbqc}, randomized benchmarking \cite{knill2008benchmarking}, and quantum learning algorithms \cite{huangkuengpresskill}. 
These applications are largely thanks to the rich algebraic structure afforded by the stabilizer formalism. 

Stabilizer states are also one of the few classes of states that admit efficient learning algorithms.
Montanaro \cite{montanaro2017learning} gave an algorithm that takes $O(n)$ copies of an $n$-qubit stabilizer state and correctly identifies the state with high probability in time $O(n^3)$.
Gross, Nezami, and Walter \cite{gross2021schur} gave an algorithm for \textit{property testing} stabilizer states, which is the task of distinguishing whether a state is a stabilizer state or is far from any stabilizer state. Remarkably, this algorithm requires only $6$ copes of the state.

Despite finding numerous applications, Clifford circuits are not universal for quantum computation. Furthermore, in 1998, Gottesman and Knill showed that Clifford circuits acting on stabilizer states can be efficiently classically simulated \cite{gottesman1998heisenberg, aaronson2004stabilizer}. 
However, with the additional ability to apply a $T$-gate (the gate $\ket{0}\!\!\bra{0} + e^{i \pi / 4}\ket{1}\!\!\bra{1}$), the resulting gate set becomes universal.
Therefore, efficient simulation of so-called Clifford+$T$ circuits would imply $\BPP = \BQP$, and a large line of work has been devoted to developing better simulation algorithms \cite{PashayanPhysRevLett.115.070501, BrayviPhysRevLett.116.250501, RallPhysRevA.99.062337, Bravyi2019simulationofquantum}.

Currently, the best-performing simulation algorithms are based on modeling the output state of a quantum circuit as a decomposition of stabilizer states \cite{Bravyi2019simulationofquantum}. 
These decompositions give rise to simulation algorithms whose runtimes scale polynomially in the complexity of the decomposition.
One such complexity measure is the \textit{stabilizer extent}.
Consider the state $\ket{\psi} = \sum_i c_i \ket{\phi_i}$ for $c_i \in \mathbb{C}$ and stabilizer states $\ket{\phi_i}$. 
The stabilizer extent is the minimum $\lpar\sum_i{\abs{c_i}}\rpar^2$ over all such decompositions of $\ket{\psi}$, and scales exponentially in the number of $T$-gates in the circuit producing the state.
A closely-related complexity measure is the \textit{stabilizer fidelity}, which is the maximum overlap between $\ket{\psi}$ and any stabilizer state. 
Indeed, the inverse of stabilizer fidelity lower bounds stabilizer extent \cite{Bravyi2019simulationofquantum}. Collectively, we informally refer to states with either low stabilizer extent or non-negligible stabilizer fidelity as states of low ``stabilizer complexity''.

As a generalization of stabilizer states, it is natural to ask whether states of low stabilizer complexity are also efficiently learnable, and indeed a similar question has been raised before \cite{arunachalam2022phase}. Nevertheless, this problem remains largely open except in some highly restricted settings \cite{lai2022learning}. This could be in part because many of the useful properties of stabilizer states provably fail to generalize to states with low stabilizer complexity. 
For example, \cite{hinsche2022learning} observed that one can efficiently learn the output distribution of any Clifford circuit, given samples from this distribution.\footnote{Indeed, every such distribution is simply an affine subspace of $\mathbb{F}_2^n$.}
However, this task already becomes intractable for circuits with a \textit{single} $T$-gate (producing a state of constant stabilizer extent), where \cite{hinsche2022learning} proved that learning the output distribution is as hard as the learning parities with noise problem. 

Furthermore, it is known that stabilizer states form a \textit{$t$-design} for $t = 3$, meaning that random stabilizer states duplicate the first 3 moments of the Haar measure \cite{ kuenghttps://doi.org/10.48550/arxiv.1510.02767, webb2016clifford}. 
By contrast, \cite{haferkamp2020homeopathy} showed that circuits with $\poly(t)$ non-Clifford gates are sufficient to generate approximate $t$-designs. Thus, for any constant $t$, states of constant stabilizer extent can form approximate $t$-designs. This suggests that states of low stabilizer complexity can give much stronger information-theoretic approximations to the Haar measure than ordinary stabilizer states, because stabilizer states fail to form a $t$-design for any $t > 3$ \cite{ZKGG16}.

In this work, we investigate whether these properties that differentiate stabilizer states from low-stabilizer-complexity states can be pushed further, to prove hardness of learning low-stabilizer-complexity states. 
One natural approach towards proving that low-stabilizer-complexity states are hard to learn would be to show that they are \textit{pseudorandom}. Ji, Liu, and Song \cite{Ji10.1007/978-3-319-96878-0_5} define an ensemble of $n$-qubit states to be (computationally) pseudorandom if every $\poly(n)$-time quantum adversary has at most a negligible advantage in distinguishing copies of a state drawn randomly from the ensemble from copies of a Haar-random $n$-qubit state. Note that pseudorandom states are not efficiently learnable, as any algorithm for learning some set of quantum states gives an algorithm to distinguish those states from the Haar measure.

Our main result is an efficient algorithm for distinguishing states of non-negligible stabilizer fidelity from Haar-random states, showing that such states \textit{cannot} be pseudorandom. This type of distinguishing is sometimes known as \textit{weak learning} in learning theory. 

\begin{theorem}[Informal version of \cref{thm:main-thm-alg}]\label{thm:main-theorem-informal}
Let $\ket{\psi}$ be an unknown $n$-qubit pure state, and let $k \leq \frac{4}{5}2^{n/12}$.
There is an efficient algorithm that distinguishes whether $\ket{\psi}$ is Haar-random or a state with stabilizer fidelity at least $\frac{1}{k}$, promised that one of these is the case. In particular, the algorithm uses $O(k^{12} \log(1/\delta))$ copies of $\ket{\psi}$ and $O(n k^{12} \log(1/\delta))$ time to succeed with probability at least $1 - \delta$.
\end{theorem}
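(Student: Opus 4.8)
The plan is to base the distinguisher on \emph{Bell difference sampling}, following the stabilizer-testing paradigm of Gross, Nezami, and Walter. Let $\{W_a\}_{a \in \F_2^{2n}}$ be the Heisenberg--Weyl (Pauli) operators, and for an unknown state $\ket{\psi}$ define the probability distribution
\[
p_\psi(a) \;=\; \frac{1}{2^n}\,\bra{\psi} W_a \ket{\psi}^2 ,
\]
which is genuinely a distribution since $\sum_a \bra{\psi} W_a\ket{\psi}^2 = 2^n$. A single Bell measurement on two copies of $\ket{\psi}$ returns a sample from $p_\psi$, and XOR-ing two such samples yields a sample $x$ from the self-convolution $q_\psi(x) = \sum_a p_\psi(a)\,p_\psi(a\oplus x)$. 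The key statistic I would track is
\[
S(\psi) \;=\; \Ex_{x \sim q_\psi}\!\big[\bra{\psi} W_x\ket{\psi}^2\big] \;=\; 2^n \!\!\sum_{a+b+c=0}\!\! p_\psi(a)\,p_\psi(b)\,p_\psi(c),
\]
a third-order autocorrelation of $p_\psi$. For an exact stabilizer state $p_\psi$ is uniform over an $n$-dimensional subgroup and one computes $S = 1$, whereas for a Haar-random state every nonidentity Pauli expectation is $O(2^{-n/2})$ and $S = \Theta(2^{-n})$. The entire argument reduces to showing that this gap persists, quantitatively, down to stabilizer fidelity $\tfrac1k$.

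Completeness is the crux. Suppose $F := \abs{\braket{\psi}{\phi}}^2 \ge \tfrac1k$ for a stabilizer state $\ket{\phi}$ with stabilizer group $\calS_\phi = \{\sigma_a W_a\}$ (signs $\sigma_a \in \pmone$), so that $\ketbra{\phi}{\phi} = 2^{-n}\sum_{a\in\calS_\phi}\sigma_a W_a$. Taking the expectation in $\ket{\psi}$ gives $\sum_{a\in\calS_\phi}\sigma_a\bra{\psi} W_a\ket{\psi} = 2^n F$, and Cauchy--Schwarz then yields the first-moment bound
\[
\sum_{a\in\calS_\phi}\bra{\psi} W_a\ket{\psi}^2 \;\ge\; 2^n F^2 .
\]
Since $\calS_\phi$ is closed under $\oplus$, restricting the autocorrelation defining $S(\psi)$ to indices in $\calS_\phi$ keeps every one of the three factors supported on $\calS_\phi$; the ``signal'' $\bra{\psi} W_a\ket{\psi}^2 \approx F^2$ then contributes on the order of $(F^2)^3 = F^6$, so I expect to prove $S(\psi) = \Omega(F^6) = \Omega(k^{-6})$. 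Turning this heuristic into a theorem is the main obstacle: the first-moment bound only controls the \emph{average} of $\bra{\psi} W_a\ket{\psi}^2$ over $\calS_\phi$, and a naive Fourier/convexity bound on the third-order correlation is swamped by the off-diagonal Fourier mass. The work will be to show that the rank-one signal along $\ketbra{\phi}{\phi}$ cannot be cancelled by the noise, e.g.\ by decomposing $\ket{\psi} = \sqrt{F}\ket{\phi} + \sqrt{1-F}\ket{\chi}$ and tracking the $F^6$ term against the cross terms, using the subgroup structure to keep the relevant sums positive.

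Soundness is comparatively routine. For $\ket{\psi}$ Haar-random I would bound $\Ex_{\text{Haar}}[S(\psi)]$ by splitting the triple sum $\sum_{a+b+c=0}$ according to how many of $a,b,c$ are the identity: the all-identity and single-identity terms contribute $O(2^{-2n})$ using $\Ex_{\text{Haar}}\bra{\psi} W_a\ket{\psi}^4 = O(2^{-2n})$, while the $\Theta(4^{2n})$ generic terms each contribute $O(2^{-3n})$, for a total of $O(2^{-n})$. A concentration bound for low-degree polynomials in a Haar-random state (or Levy's lemma) then gives $S(\psi) \le 2^{-n/2}$ except with probability $o(1)$, and the hypothesis $k \le \tfrac45 2^{n/12}$ guarantees $k^{-6} \gg 2^{-n/2}$, so the completeness and soundness values are separated by a factor bounded below by a constant.

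It remains to convert the gap in $S(\psi)$ into an algorithm. In the black-box model, six copies of $\ket{\psi}$ produce one unbiased sample of $S(\psi)$: four copies drive a Bell-difference measurement to sample $x \sim q_\psi$, and measuring $W_x$ on the remaining two copies and multiplying the $\pmone$ outcomes estimates $\bra{\psi} W_x\ket{\psi}^2$ with the correct conditional mean. The resulting estimator is $\pmone$-valued with mean $S(\psi)$, so a Hoeffding bound distinguishes $S \ge \Omega(k^{-6})$ from $S = O(2^{-n})$ using $O(k^{12}\log(1/\delta))$ samples, each costing $O(n)$ time to process, which gives the stated bounds. Given instead a state-preparation unitary and its inverse, I would implement the same $\pmone$ estimator coherently and apply amplitude estimation: since the quantity to be resolved is a probability of magnitude $\Theta(k^{-6})$, amplitude estimation attains additive error $k^{-6}$ with only $O(k^{3})$ queries, yielding the quadratically improved query complexity.
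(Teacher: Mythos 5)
Your architecture matches the paper's exactly --- the same statistic $S(\psi)=\Ex_{x\sim q_\psi}\lbrak\abs{\braket{\psi|W_x|\psi}}^2\rbrak$, the same six-copies-per-sample estimator (four for Bell difference sampling, two for the measurement $W_x^{\otimes 2}$), Hoeffding with $O(k^{12}\log(1/\delta))$ samples, L\'evy-type concentration plus a union bound for soundness, and amplitude estimation for the speedup as in \cref{thm:main-thm-alg-qae} --- but the completeness lemma, which you correctly identify as the crux and explicitly leave open, is the paper's central technical contribution, so the proposal as written has a genuine gap. The missing idea is a duality property of the characteristic distribution: writing $x=(v,w)\in\F_2^{2n}$, the Fourier transform of $p_\psi$ satisfies $\wh{p_\psi}(v,w)=2^{-n}p_\psi(w,v)$ (\cref{prop:fourier-coefficients}, proved by a direct computation in the computational basis). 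Plugged into the Plancherel/convolution identity $S(\psi)=32^n\sum_x\wh{p_\psi}(x)^3$ (\cref{fact:blr-similarity}), this gives
\[
S(\psi) \;=\; 4^n\sum_{x\in\F_2^{2n}} p_\psi(x)^3 \;=\; \frac{1}{2^n}\sum_{x\in\F_2^{2n}} \abs{\braket{\psi|W_x|\psi}}^6,
\]
i.e., the statistic is a purely \emph{diagonal} sixth-moment sum of Weyl expectations: there are no off-diagonal Fourier cross terms to cancel the signal. This is exactly what your sketched route lacks, and your worry is well-founded as stated: for a general nonnegative function on $\F_2^{2n}$, placing mass $F^2$ on a subgroup does not force the triple autocorrelation restricted to that subgroup to be $\Omega(F^6)$, so the decomposition $\ket{\psi}=\sqrt{F}\ket{\phi}+\sqrt{1-F}\ket{\chi}$ with cross-term bookkeeping is fighting a problem that the duality simply dissolves.

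Once you have the identity, your own first-moment bound finishes the job: your Cauchy--Schwarz step over the $2^n$ elements of $\calS_\phi$ gives $\sum_{a\in\calS_\phi}\abs{\braket{\psi|W_a|\psi}}^2\ge 2^nF^2$, and the power-mean inequality (convexity of $t\mapsto t^3$ on $t\ge 0$) yields
\[
\frac{1}{2^n}\sum_{a\in\calS_\phi}\abs{\braket{\psi|W_a|\psi}}^6 \;\ge\; \left(\frac{1}{2^n}\sum_{a\in\calS_\phi}\abs{\braket{\psi|W_a|\psi}}^2\right)^{\!3} \;\ge\; F^6 \;\ge\; k^{-6},
\]
after dropping the nonnegative terms outside $\calS_\phi$. (The paper packages the same computation slightly differently: since the statistic is Clifford-covariant, it rotates so that the closest stabilizer state is $\ket{0^n}$, restricts the sum to the $Z$-type Weyl operators, and uses $\sum_v Z^v = 2^n\ketbra{0^n}{0^n}$ together with a power-mean step to get $\abs{c_0}^{12}\ge k^{-6}$; see \cref{lem:sum_p-hat_cubed}.) Your soundness sketch and both algorithmic conversions --- thresholding at $\Theta(k^{-6})$ with the promise $k\le\frac{4}{5}2^{n/12}$ ensuring $2^{-n/2}\ll k^{-6}$, and QAE at additive error $\Theta(k^{-6})$ costing $O(k^3)$ queries --- agree with \cref{thm:main-thm-alg} and \cref{sec:qae}; only the completeness lemma stands between your proposal and a full proof, and it is the heart of the result.
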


\cref{thm:main-theorem-informal} also generalizes to distinguishing states with low stabilizer extent from Haar-random. 
To the best of our knowledge, prior to our work, it was even unknown whether states of stabilizer extent at most a \textit{constant} could be efficiently distinguished from Haar-random. We also emphasize that the contrast between our positive learning result and the hardness result of \cite{hinsche2022learning} stems in part from the differing access models: we assume access to copies of the quantum state, whereas \cite{hinsche2022learning} considers algorithms that only have outcomes of standard basis measurements of the state.

As a simple corollary, we prove a first-of-its-kind lower bound on the number of $T$-gates required to prepare computationally pseudorandom quantum states.

\begin{corollary}[\cref{cor:main-thm-pseudo}]\label{cor:main-thm-pseudo-informal}
Any family of Clifford+$T$ circuits that produces an ensemble of $n$-qubit computationally pseudorandom quantum states must use at least $\omega(\log n)$ $T$-gates.
\end{corollary}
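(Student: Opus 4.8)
The plan is to prove the contrapositive: I will show that any Clifford+$\tgate$ circuit family using only $O(\log n)$ $\tgate$-gates produces states that are efficiently distinguishable from Haar-random, and hence cannot be pseudorandom. Fix such a family $\{C_\theta\}$, where each $C_\theta$ uses at most $t = t(n)$ $\tgate$-gates and prepares $\ket{\psi_\theta} = C_\theta \ket{0^n}$, and suppose toward a contradiction that $t \le c\log n$ for some constant $c$ and all large $n$.

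The first step is to upper bound the stabilizer extent of each output state in terms of $t$. I would write a single $\tgate$-gate as a linear combination of two Clifford unitaries, $\tgate = e^{i\pi/8}\big(\cos(\pi/8)\,I - i\sin(\pi/8)\,\zgate\big)$, noting that $I$ and $\zgate$ are both Clifford. Expanding each of the $t$ gates in $C_\theta$ this way and collecting terms expresses $\ket{\psi_\theta}$ as a linear combination of at most $2^t$ stabilizer states $C_s\ket{0^n}$ (each $C_s$ a product of Cliffords), and the sum of the magnitudes of the coefficients factorizes as $(\cos(\pi/8)+\sin(\pi/8))^t$. By the definition of stabilizer extent, this yields $\xi(\ket{\psi_\theta}) \le (\cos(\pi/8)+\sin(\pi/8))^{2t} = (1+\tfrac{1}{\sqrt2})^{t}$. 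Since the inverse of the stabilizer fidelity lower bounds the stabilizer extent, the stabilizer fidelity obeys $F(\ket{\psi_\theta}) \ge \xi(\ket{\psi_\theta})^{-1} \ge (1+\tfrac{1}{\sqrt2})^{-t}$, uniformly over $\theta$.

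Next I would set $k := (1+\tfrac{1}{\sqrt2})^{t}$. Because $t \le c\log n$, we have $k = (1+\tfrac{1}{\sqrt2})^{c\log n} = n^{O(1)}$, so $k$ is polynomial in $n$; in particular $k \le \tfrac{4}{5}2^{n/12}$ for all sufficiently large $n$, and every $\ket{\psi_\theta}$ has stabilizer fidelity at least $1/k$. Applying \cref{thm:main-theorem-informal} with this $k$ and (say) $\delta = 1/3$ then gives an algorithm that uses $O(k^{12}) = n^{O(1)}$ copies and $O(nk^{12}) = n^{O(1)}$ time and that, for every $\theta$ and all large $n$, correctly distinguishes $\ket{\psi_\theta}$ from a Haar-random state with probability at least $2/3$.

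This distinguisher is exactly a $\poly(n)$-time adversary against the purported pseudorandomness of $\{\ket{\psi_\theta}\}$: it reports ``Haar'' on a Haar-random input and ``pseudorandom'' on the ensemble, each with probability at least $2/3$, giving a distinguishing advantage bounded below by a constant, and hence non-negligible, for all large $n$. This contradicts the definition of a pseudorandom state ensemble, so $t(n)$ cannot be $O(\log n)$; that is, $t(n) = \omega(\log n)$, as claimed. I expect the only real subtleties here to be bookkeeping rather than conceptual: checking that the extent bound is uniform across the ensemble so that a single polynomial $k$ works for all $\theta$, confirming that $k^{12}$ stays within $\poly(n)$, and matching the one-shot success guarantee of \cref{thm:main-theorem-informal} to the asymptotic negligible-advantage formulation of pseudorandomness (amplifying $\delta$ to be negligible if the definition demands it, which costs only another $\poly(n)$ factor in copies and time).
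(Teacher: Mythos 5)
Your proposal is correct and follows essentially the same route as the paper: bound the stabilizer extent by $(1+\tfrac{1}{\sqrt{2}})^t$ via the Clifford decomposition of the $T$-gate (the paper does this by induction through \cref{claim:extent_linear_comb} and \cref{lemma:clifford-t-extent}, while you expand all $t$ gates at once, which is the same calculation), convert to stabilizer fidelity via \cref{claim:metric-relations}, and then invoke the main distinguishing algorithm with $k = \poly(n)$ to get constant advantage contradicting pseudorandomness. The only differences are cosmetic bookkeeping (the paper phrases the bound as $n^{\alpha K}$ with $\alpha = \log_2(1+\tfrac{1}{\sqrt{2}}) \leq 0.7716$ and routes through \cref{cor:main-thm-alg-rank} rather than restating the fidelity bound), so your argument matches the paper's proof in substance.
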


In some sense, \cref{cor:main-thm-pseudo-informal} contrasts sharply with the result of \cite{haferkamp2020homeopathy}, where circuits containing just a few non-Clifford gates are sufficient to produce strong information-theoretic approximations to the Haar measure (i.e. $t$-designs). Nevertheless, we emphasize that our result and \cite{haferkamp2020homeopathy} are formally incomparable, because computationally pseudorandom states need not form approximate $t$-designs for constant $t$, nor vice-versa.

\subsection{Main Ideas}
Let $x = (p,q) \in \F_2^{2n}$, where $p$ and $q$ are the first and last $n$ bits of $x$, respectively. 
Define $W_x \coloneqq i^{p\cdot q}X^p Z^q$ (a Pauli operator without phase), and let $\ket{\Phi^+} \coloneqq 2^{-n/2} \sum_{x \in \F_2^n} \ket{x, x}$ be a maximimally entangled state. 
Then, the set $\{\ket{W_x} \coloneqq (W_x \otimes I)\ket{\Phi^+} \mid x \in \mathbb{F}_2^{2n}\}$ is the \textit{Bell basis}, an orthonormal basis of $\mathbb{C}^{2^n} \otimes \mathbb{C}^{2^n}$. 

Our algorithm uses \textit{Bell difference sampling} \cite{montanaro2017learning, gross2021schur}, which works as follows (see \cref{ssec:weyl_and_bell} for more detail): Given four copies of an $n$-qubit pure state $\ket{\psi}$, perform a Bell-basis measurement on $\ket{\psi}^{\otimes 2}$ to get measurement outcome $x \in \F_2^{2n}$, repeat this on the remaining two copies to get measurement outcome $y \in \F_2^{2n}$, and return $z = x + y$.

We refer to $p_\psi(x) \coloneqq 2^{-n} \abs{\braket{\psi|W_x|\psi}}^2$ as the \textit{characteristic distribution of} $\ket{\psi}$.
To see that $p_\psi$ is a distribution, recall that since the Pauli operators form an orthonormal basis over Hermitian matrices, we can always decompose $\ketbra{\psi}{\psi} = \frac{1}{2^{n}}\sum_{x \in \F_2^n} \braket{\psi | W_x | \psi} \cdot W_x$.
By assumption, $\abs{\braket{\psi|\psi}}^2 = 1$, so by Parseval's identity, \[\frac{1}{2^n}\sum_{x \in \F_2^n} \abs{\braket{\psi | W_x | \psi}}^2 = 1.\]

Gross, Nezami, and Walter\ \cite{gross2021schur} showed that Bell difference sampling an arbitrary pure state $\ket{\psi}$ corresponds to sampling a random operator $W_x$ according to the following distribution:
\[
q_\psi(x) =\!\!\! \sum_{y \in \F_2^{2n}} p_\psi (y) p_\psi (x + y).
\]
We call $q_\psi$ the \textit{Weyl distribution of} $\ket{\psi}$. Note that the Weyl distribution of $\ket{\psi}$ is the scaled convolution of the characteristic distribution with itself (i.e., $q_\psi = 4^n (p_\psi \ast p_\psi)$, where `$\ast$' is the convolution operator). 

Define the $\{\pm1\}$-outcome measurement $M_x \coloneqq \left\{\frac{I \pm W_x}{2}\right\}$ (projections onto the $\pm1$-eigenspaces of $W_x$).
Our algorithm begins by repeating the following process $m$ times: sample a random Weyl operator $W_{x}$ (via Bell difference sampling) and perform the measurement $M_x^{\otimes 2}$ on $\ket{\psi^{\otimes 2}}$. 
Then, average all of the measurement outcomes. If the average is at least $1/\poly(k)$, we decide that $\ket{\psi}$ has stabilizer fidelity at least $\frac{1}{k}$. Otherwise, we decide that $\ket{\psi}$ is Haar-random.

What statistic are we computing in our algorithm? 
Denote the measurement outcome on the $i$th iteration as $X_i \in \{\pm 1\}$. 
Observe that for all $X_i$,  
\[
\Es{}[X_i] 
=\!\!\! \sum_{x \in \F_2^{2n}} q_\psi(x) \abs{\braket{\psi|W_x|\psi}}^2  
= 2^n \!\!\!\sum_{x \in \F_2^{2n}} q_\psi(x) p_\psi(x)
= 2^n \Es{x \sim q_\psi}[p_\psi(x)],
\]
where the expectation $\Ex[X_i]$ is taken over sampling $x \sim q_\psi$ and the randomness from performing the measurement $M_x^{\otimes 2}$.
Hence, for our algorithm to work, $\Es{x \sim q_\psi}[p_\psi(x)]$ must be ``different enough'' when $\ket{\psi}$ either is Haar-random or has low stabilizer complexity. 
Proving that this is the case is the main technical ingredient of our work: 
\begin{lemma}[Informal version of \cref{lem:expected-measurement}]
Let $\ket{\psi}$ be an $n$-qubit pure state. Suppose the stabilizer fidelity of $\ket{\psi}$ is at least $\frac{1}{k}$. Then,
$$
2^n \Ex_{x \sim q_{\psi}}\lbrak p_\psi(x) \rbrak \geq \dfrac{1}{k^6}.
$$
In contrast, suppose $\ket{\psi}$ is a Haar-random quantum state. Then,
with overwhelming probability over the Haar measure,
$$
2^n \Ex_{x \sim q_{\psi}}\lbrak p_\psi(x) \rbrak \leq 2^{-n/2}.
$$
\end{lemma}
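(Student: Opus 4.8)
The plan is to reduce $2^n\Ex_{x\sim q_\psi}[p_\psi(x)]$ to a clean sixth-moment sum of Pauli expectation values and then bound that sum separately in the two regimes. Write $c_x\coloneqq\braket{\psi|W_x|\psi}$, which is real since each $W_x$ is Hermitian, and recall $p_\psi(x)=2^{-n}|c_x|^2$ and $q_\psi(x)=\sum_y p_\psi(y)p_\psi(x+y)$. Expanding directly,
\[
2^n\Ex_{x\sim q_\psi}[p_\psi(x)]=2^n\sum_x q_\psi(x)p_\psi(x)=2^{-2n}\sum_{x,y}|c_x|^2|c_y|^2|c_{x+y}|^2 .
\]

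The key structural fact I would establish is a self-duality: the symplectic Fourier transform of $x\mapsto|c_x|^2$ is a rescaling of itself. Using $W_aW_xW_a=(-1)^{[a,x]}W_x$ (where $[\cdot,\cdot]$ is the symplectic form) together with $\Tr(W_xW_y)=2^n\delta_{xy}$, one computes, for $\rho=\ketbra{\psi}{\psi}$,
\[
\sum_x|c_x|^2(-1)^{[a,x]}=2^n\Tr(W_a\rho W_a\rho)=2^n|c_a|^2 .
\]
Feeding this into the standard trilinear Fourier identity $\sum_{x,y}g(x)g(y)g(x+y)=2^{-2n}\sum_a\hat g(a)^3$, where $\hat g(a)=\sum_x g(x)(-1)^{[a,x]}$, applied to $g=|c|^2$, collapses the double sum into a single sixth-moment sum and yields
\[
2^n\Ex_{x\sim q_\psi}[p_\psi(x)]=2^{-n}\sum_{a\in\F_2^{2n}}|c_a|^6 .
\]
(As a check, a stabilizer state has $|c_a|^2\in\{0,1\}$ supported on an isotropic subgroup of size $2^n$, giving value exactly $1$.) I expect this reduction to be the crux of the whole argument; once it is in hand, both bounds follow quickly.

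For the stabilizer-fidelity lower bound, let $\ket{s}$ be a stabilizer state with $|\braket{s|\psi}|^2\geq1/k$ and let $S\leq\F_2^{2n}$ be its isotropic stabilizer subgroup, so that $\ketbra{s}{s}=2^{-n}\sum_{x\in S}\epsilon_x W_x$ for signs $\epsilon_x\in\{\pm1\}$. Then $\tfrac{1}{k}\leq|\braket{s|\psi}|^2=2^{-n}\sum_{x\in S}\epsilon_x c_x\leq2^{-n}\sum_{x\in S}|c_x|$, so $\sum_{x\in S}|c_x|\geq2^n/k$. Since $|S|=2^n$, the power-mean inequality gives $\tfrac{1}{2^n}\sum_{x\in S}|c_x|^6\geq\big(\tfrac{1}{2^n}\sum_{x\in S}|c_x|\big)^6\geq1/k^6$, whence $\sum_a|c_a|^6\geq\sum_{x\in S}|c_x|^6\geq2^n/k^6$ and therefore $2^{-n}\sum_a|c_a|^6\geq1/k^6$, as claimed.

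For the Haar-random upper bound I would use concentration of measure. The term $a=0$ contributes exactly $|c_0|^6=1$. For each fixed $a\neq0$, $c_a=\braket{\psi|W_a|\psi}$ is an $O(1)$-Lipschitz function of $\ket{\psi}$ on the sphere with mean $2^{-n}\Tr(W_a)=0$, so Levy's lemma gives $\Pr[|c_a|\geq t]\leq2\exp(-c\,2^n t^2)$. Taking $t=2^{-n/4}$ and union-bounding over the fewer than $4^n$ nontrivial labels shows that with overwhelming probability $\max_{a\neq0}|c_a|\leq2^{-n/4}$. On this event, Parseval's identity $\sum_a|c_a|^2=2^n$ gives $\sum_{a\neq0}|c_a|^6\leq\big(\max_{a\neq0}|c_a|^4\big)\sum_a|c_a|^2\leq2^{-n}\cdot2^n=1$, so $\sum_a|c_a|^6\leq2$ and $2^{-n}\sum_a|c_a|^6\leq2^{1-n}\leq2^{-n/2}$ for $n\geq2$. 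The point needing care here is making the per-$a$ concentration strong enough to survive the union bound over all $4^n$ Pauli labels, which the exponential-in-$2^n$ tail comfortably supplies.
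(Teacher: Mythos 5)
Your proposal is correct and follows essentially the same route as the paper: the identity $2^n\Ex_{x\sim q_\psi}[p_\psi(x)]=2^{-n}\sum_a\abs{\braket{\psi|W_a|\psi}}^6$ is exactly the paper's combination of the BLR-style cubic Fourier identity with the self-duality of $p_\psi$ (the paper phrases it as $\wh{p_\psi}(v,w)=2^{-n}p_\psi(w,v)$ under the ordinary transform, you as exact self-duality under the symplectic transform), and your Haar-random argument (L\'evy's lemma, union bound over all nontrivial Weyl operators, then Parseval to control the sixth-moment sum) matches the paper's. The only cosmetic difference is in the fidelity lower bound, where you work intrinsically with the stabilizer group of the nearby stabilizer state via $\ketbra{s}{s}=2^{-n}\sum_{x\in S}\epsilon_x W_x$ and apply the power-mean inequality over $S$, whereas the paper first Clifford-conjugates so that the nearby stabilizer state is $\ket{0^n}$ and restricts to the $Z$-type Weyl operators---the same argument in a rotated frame.
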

Our proof uses Fourier analysis of Boolean functions, and some parts of our proof are reminiscent of the celebrated Blum-Luby-Rubinfield linearity test \cite{BLRtest}.
Intuitively, $q_\psi$ is significantly closer to linear when $\ket{\psi}$ has non-negligible stabilizer fidelity, as opposed to when $\ket{\psi}$ is a Haar-random state. 

With the above lemma, all that remains is ``merely'' a sample complexity analysis, namely: what $m$ is sufficient to distinguish  whether the average is close to $0$ or $\Omega(1/k^6)$?
In the simplest case, we show that $O(k^{12}\log(1/\delta))$ samples are sufficient by Hoeffding's inequality. However, this complexity can be improved if given access to a unitary that prepares $\ket{\psi}$ (and its inverse). In this model, we are able to achieve a quartic speedup in both sample and time complexity, which we explain in \cref{sec:qae}.

\section{Preliminaries}
\label{sec:preliminaries}
First, we establish some notation used throughout this work. We denote $[n] \coloneqq \{1,\ldots,n\}$. For $v \in \mathbb{C}^n$, $\norm{v}_p \coloneqq (\sum_{i \in [n]}\abs{v_i}^p)^{1/p}$ is the $\ell_p$-norm. Logarithms are assumed to be in base $2$.
For a probability distribution $P$ on a set $S$, we denote drawing a sample $s \in S$ according to $P$ by $s \sim P$. 
We denote drawing a sample $s \in S$ uniformly at random by $s \sim S$.

\subsection{Stabilizer States and Stabilizer Complexity Measures}\label{subsec:stabilizer-prelim}
We define the $1$-qubit Pauli group to be the collection of matrices $\{I, X, Y, Z\}$, where
$$
I = 
\begin{pmatrix}
1 & 0 \\ 0 & 1
\end{pmatrix},\quad
X = 
\begin{pmatrix}
0 & 1 \\ 1 & 0
\end{pmatrix},\quad
Y = 
\begin{pmatrix}
0 & -i \\ i & 0
\end{pmatrix},\quad
Z = 
\begin{pmatrix}
1 & 0 \\ 0 & -1
\end{pmatrix}.
$$
The $n$-qubit Pauli group $\mathcal{P}_n$ is the set $\{\pm 1, \pm i\} \times \{I, X, Y, Z\}^{\otimes n}$. 
The Clifford group $\calC_n$ is the group of unitary transformations generated by $H$, $S$, and $\mathrm{CNOT}$ gates, where $H$ is the Hadamard gate, $S \coloneqq \ket{0}\!\!\bra{0} + i \ket{1}\!\!\bra{1}$ is the phase gate, and $\mathrm{CNOT}$ is the controlled-not gate.
We refer to unitary transformations in the Clifford group as Clifford circuits.
Clifford circuits with the addition of the $T$-gate are universal, where the $T$-gate is defined by $T \coloneqq \ket{0}\!\!\bra{0} + e^{i \pi/4} \ket{1}\!\!\bra{1}$.

A unitary transformation $U$ \textit{stabilizes} a state $\ket{\psi}$ when $U\ket{\psi} = \ket{\psi}$.
It is folklore that if an $n$-qubit state can be reached from the $\ket{0^n}$ state by applying a Clifford circuit, then the state is stabilized by a group of $2^n$ commuting members of the subset $\{\pm 1\} \times \{I, X, Y, Z\}^{\otimes n} \subset \left(\pauli \setminus -I^{\otimes n}\right)$, called its \textit{stabilizer group}. Such states are called \textit{stabilizer states}, and we denote the set of stabilizer states by $\stabset_n$. For $\ket{\psi} \in \stabset_n$, we denote its stabilizer group as $\stab(\ket{\psi})$. For more background on stabilizer states, see, e.g., \cite{nielsen2002quantum}.

We now define some complexity measures that characterize more general states in terms of stabilizer state decompositions.

\begin{definition}[stabilizer extent \cite{Bravyi2019simulationofquantum}]
Suppose $\ket{\psi}$ is a pure $n$-qubit state.
The \emph{stabilizer extent} of $\ket{\psi}$, denoted $(\ket{\psi})$, is the minimum of $\norm{c}_1^2$ over all decompositions $\ket{\psi} = \sum_i c_i \ket{\phi_i}$, where $\ket{\phi_i} \in \calS_n$ and $c$ is some vector in $\mathbb{C}^{|\stabset_n|}$.
\end{definition}
\begin{definition}[stabilizer fidelity \cite{Bravyi2019simulationofquantum}]
Suppose $\ket{\psi}$ is a pure $n$-qubit state.
The \emph{stabilizer fidelity} of $\ket{\psi}$, denoted $\fidelity_{\stabset}$, is 
$$
\fidelity_{\stabset}(\ket\psi) \coloneqq \max_{\ket{\phi} \in \stabset_n}\labs\braket{\phi|\psi}\rabs^2.
$$
\end{definition}

Below we give a useful relation between the complexity measures defined above.

\begin{claim}\label{claim:metric-relations}
Let $\ket{\psi}$ be an $n$-qubit pure state. Then,
\[
 \extent(\ket{\psi}) \geq \frac{1}{\fidelity_{\stabset}(\ket{\psi})} .\]

\end{claim}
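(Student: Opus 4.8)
The plan is to prove the equivalent statement $\fidelity_{\stabset}(\ket{\psi}) \geq 1/\extent(\ket{\psi})$ by exploiting the normalization $\braket{\psi|\psi} = 1$ as a lever against an arbitrary stabilizer decomposition. First I would fix an \emph{arbitrary} decomposition $\ket{\psi} = \sum_i c_i \ket{\phi_i}$ with each $\ket{\phi_i} \in \stabset_n$, and expand the squared norm by pairing one copy of $\ket{\psi}$ against the decomposition:
\[
1 = \braket{\psi|\psi} = \sum_i c_i \braket{\psi | \phi_i}.
\]

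Next I would take absolute values and apply the triangle inequality followed by the trivial bound that pulls out the largest overlap:
\[
1 = \Abs{\sum_i c_i \braket{\psi|\phi_i}} \leq \sum_i \abs{c_i}\, \abs{\braket{\psi|\phi_i}} \leq \Paren{\sum_i \abs{c_i}} \cdot \max_i \abs{\braket{\psi|\phi_i}}.
\]
The crucial observation is that each $\ket{\phi_i}$ is itself a stabilizer state, so $\max_i \abs{\braket{\psi|\phi_i}} \leq \max_{\ket{\phi}\in\stabset_n}\abs{\braket{\psi|\phi}} = \sqrt{\fidelity_{\stabset}(\ket{\psi})}$ directly by the definition of stabilizer fidelity. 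Substituting and recognizing $\sum_i\abs{c_i} = \norm{c}_1$ gives $1 \leq \norm{c}_1 \sqrt{\fidelity_{\stabset}(\ket{\psi})}$, and squaring yields $\norm{c}_1^2 \geq 1/\fidelity_{\stabset}(\ket{\psi})$.

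Finally, since this inequality holds for \emph{every} stabilizer decomposition of $\ket{\psi}$ and the right-hand side does not depend on the choice of decomposition, I would take the infimum of the left-hand side over all decompositions. By the definition of stabilizer extent as the minimum of $\norm{c}_1^2$, this infimum is exactly $\extent(\ket{\psi})$, delivering the claim.

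This argument is short and contains no serious obstacle; the only subtlety worth stating carefully is that the bound must be established for an \emph{arbitrary} decomposition \emph{before} minimizing, so that the fidelity bound (which is intrinsic to $\ket{\psi}$ and decomposition-independent) can be pulled through the minimization cleanly. One should also note for completeness that the set of stabilizer states $\stabset_n$ is finite, so the maxima and minima are genuinely attained and there is no need to worry about infima versus minima.
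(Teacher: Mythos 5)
Your proof is correct and is essentially the same argument as the paper's: both hinge on the chain $1 = \abs{\braket{\psi|\psi}} \leq \sum_i \abs{c_i}\,\abs{\braket{\phi_i|\psi}} \leq \norm{c}_1 \sqrt{\fidelity_{\stabset}(\ket{\psi})}$. The only difference is presentational: the paper runs this as a proof by contradiction on the extent-achieving decomposition, whereas you argue directly for an arbitrary decomposition and then minimize, which is if anything slightly cleaner.
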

\begin{proof}

Let $\ket{\psi} = \sum_{\ket{\phi} \in \stabset_n} c_{\phi} \ket{\phi}$ be such that $\left(\sum_\phi \abs{c_\phi}\right)^2 = \extent(\ket{\psi})$.
Suppose towards a contradiction that $\fidelity_{\stabset}(\ket{\psi}) <  \frac{1}{\extent(\ket{\psi})}$ and therefore $\abs{\braket{\phi|\psi}} <  \frac{1}{\extent(\ket{\psi})}$ for all $\ket{\phi} \in \stabset_n$.
Then,
\begin{align*}
1 = \abs{\braket{\psi|\psi}} = \labs \sum_{\ket{\phi_S} \in \stabset_n} c_{\phi}^\ast \braket{\phi|\psi} \rabs & \leq  \sum_{\ket{\phi_S} \in \stabset_n} \labs c_{\phi} \rabs \labs \braket{\phi|\psi} \rabs \\
& \leq \max_i \abs{\braket{\phi_i|\psi}} \sum_{\ket{\phi_S} \in \stabset_n} \labs c_{\phi} \rabs \\
& \le \fidelity_{\stabset}(\ket{\psi})\sqrt{\extent(\ket{\psi})}\\
& < \sqrt{\fidelity_{\stabset}(\ket{\psi})} \leq 1
\end{align*}
The last line follows from the fact that $\fidelity_{\stabset}(\ket{\psi}) \leq 1$ due to Cauchy-Schwarz and the definition of stabilizer fidelity. We then have $1 < 1$ as a clear contradiction.
\end{proof}

The claim above also follows as a special case of \cite[Theorem 4]{Bravyi2019simulationofquantum}, though its proof is more complicated.

To prove lower bounds on the number of $T$-gates necessary to prepare pseudorandom quantum states, we need to upper bound the stabilizer extent of a quantum state prepared by a Clifford+$T$ circuit comprised of $t$ $T$-gates.

\begin{claim}\label{claim:extent_linear_comb}
For $\ket{\psi} = \alpha \ket{v} + \beta \ket{w}$,
$$\extent(\ket{\psi}) \leq \left(\abs{\alpha} \sqrt{\extent(\ket{v})} + \abs{\beta} \sqrt{\extent(\ket{w})}\right)^2.$$
\end{claim}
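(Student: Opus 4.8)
The plan is to work directly from the definition of stabilizer extent as the minimum $\ell_1$-norm-squared over all stabilizer decompositions, and to construct an explicit (not necessarily optimal) decomposition of $\ket{\psi}$ by concatenating optimal decompositions of $\ket{v}$ and $\ket{w}$. First I would fix decompositions $\ket{v} = \sum_i a_i \ket{\phi_i}$ and $\ket{w} = \sum_j b_j \ket{\chi_j}$, with each $\ket{\phi_i}, \ket{\chi_j} \in \stabset_n$, that achieve the extent, so that $\norm{a}_1 = \sqrt{\extent(\ket{v})}$ and $\norm{b}_1 = \sqrt{\extent(\ket{w})}$.

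Next I would substitute these into $\ket{\psi} = \alpha \ket{v} + \beta \ket{w}$ to obtain the decomposition
\[
\ket{\psi} = \sum_i (\alpha a_i)\ket{\phi_i} + \sum_j (\beta b_j)\ket{\chi_j},
\]
which is a valid stabilizer decomposition of $\ket{\psi}$ since every ket appearing is a stabilizer state. The key estimate is then a one-line triangle-inequality bound on the $\ell_1$ norm of the combined coefficient vector $c$:
\[
\norm{c}_1 \leq \sum_i \abs{\alpha a_i} + \sum_j \abs{\beta b_j} = \abs{\alpha}\norm{a}_1 + \abs{\beta}\norm{b}_1 = \abs{\alpha}\sqrt{\extent(\ket{v})} + \abs{\beta}\sqrt{\extent(\ket{w})}.
\]
Because $\extent(\ket{\psi}) \leq \norm{c}_1^2$ for \emph{any} such decomposition (extent being a minimum), squaring the displayed inequality yields the claim.

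The only point requiring a word of care—and it is where I would pause to check correctness rather than treat it as a genuine obstacle—is that the stabilizer states appearing in the two decompositions may coincide, so that the honest coefficient vector indexed over $\stabset_n$ is obtained by \emph{summing} the contributions $\alpha a_i$ and $\beta b_j$ on shared stabilizer states. This can only \emph{decrease} the $\ell_1$ norm relative to the bound above (again by the triangle inequality applied coordinatewise), so the upper bound is preserved, and no separate case analysis is needed. I expect no substantive difficulty: the statement is essentially the subadditivity of $\sqrt{\extent(\cdot)}$ under linear combination, inherited from the triangle inequality for the $\ell_1$ norm together with the variational (minimization) definition of extent.
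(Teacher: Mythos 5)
Your proof is correct and takes essentially the same route as the paper: both fix extent-optimal stabilizer decompositions of $\ket{v}$ and $\ket{w}$, combine them into a decomposition of $\ket{\psi} = \alpha\ket{v} + \beta\ket{w}$, bound the $\ell_1$ norm of the resulting coefficient vector by the triangle inequality, and conclude via the variational (minimization) definition of $\extent$. The overlap issue you flag is handled implicitly in the paper through the coordinatewise bound $\abs{\alpha c_i + \beta d_i} \le \abs{\alpha}\abs{c_i} + \abs{\beta}\abs{d_i}$, so your extra remark only makes explicit what the paper's first inequality already absorbs.
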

\begin{proof}
Let $\ket{v} = \sum_i c_i \ket{\phi_i}$ and $\ket{w} = \sum_j d_j \ket{\varphi_j}$ be the minimal decompositions in terms of stabilizer extent (i.e., $\left(\sum_i \abs{c_i}\right)^2 = \extent(\ket{v})$).
Since $\ket{\psi} = \alpha \ket{v} + \beta \ket{w} = \alpha \sum_i c \ket{\phi_i} + \beta \sum_j d \ket{\varphi_j}$, we have a stabilizer decomposition of $\ket{\psi}$.
The stabilizer extent of this decomposition is at most 

\[
\left( \sum_i \abs{\alpha c_i + \beta d_i}\right)^2 \leq \left( \abs{\alpha} \sum_i \abs{c_i} + \abs{\beta}\sum_i \abs{d_i}\right)^2 \leq \left( \abs{\alpha} \sqrt{\extent(v)} + \abs{\beta}\sqrt{\extent(w)}\right)^2.\qedhere 
\]
\end{proof}

\begin{lemma}\label{lemma:clifford-t-extent}
Let $C$ be any Clifford+$T$ circuit comprised of $t$ $T$-gates and $\ket\psi = C \ket{0^n}$. Then, 

$$\extent(\ket{\psi}) \leq \left(1 + \frac{1}{\sqrt{2}}\right)^t.$$

\end{lemma}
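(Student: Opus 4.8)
The plan is to track how the stabilizer extent evolves as we apply the gates of $C$ one at a time, using two facts: stabilizer extent is invariant under Clifford gates, and a single $T$-gate multiplies the extent by at most $1 + \frac{1}{\sqrt{2}}$. Starting from $\extent(\ket{0^n}) = 1$ (a stabilizer state), after applying all $t$ $T$-gates (and arbitrarily many Clifford gates, which leave the extent unchanged) we reach the claimed bound. I would set this up as an induction on $t$, or equivalently by writing $C = G_t\,(T)_{j_t}\,G_{t-1}\cdots (T)_{j_1}\,G_0$ with each $G_i$ Clifford and each $(T)_{j_i}$ a $T$-gate on wire $j_i$.

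The key sub-step is the single-$T$ bound. I would decompose the $T$-gate as a linear combination of two Clifford operators: writing $T$ as a diagonal matrix and factoring out a global phase gives
\[
T = e^{i\pi/8}\Paren{\cos(\pi/8)\, I - i\sin(\pi/8)\, Z},
\]
so $T$ acting on wire $j$ of an arbitrary state $\ket\phi$ yields $(T)_j\ket\phi = e^{i\pi/8}\cos(\pi/8)\ket\phi - i\,e^{i\pi/8}\sin(\pi/8)\,Z_j\ket\phi$, a linear combination of $\ket\phi$ and $Z_j\ket\phi$. Applying \cref{claim:extent_linear_comb} with $\ket v = \ket\phi$, $\ket w = Z_j\ket\phi$, $\abs{\alpha} = \cos(\pi/8)$, $\abs{\beta} = \sin(\pi/8)$, and using that $Z_j$ is a Pauli (hence Clifford) so $\extent(Z_j\ket\phi) = \extent(\ket\phi)$, gives
\[
\extent\!\Paren{(T)_j\ket\phi} \le \Paren{\cos(\pi/8) + \sin(\pi/8)}^2 \extent(\ket\phi) = \Paren{1 + \tfrac{1}{\sqrt{2}}}\extent(\ket\phi),
\]
where the last equality uses $(\cos\theta + \sin\theta)^2 = 1 + \sin(2\theta)$ with $\theta = \pi/8$.

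The remaining ingredient is the Clifford-invariance of extent, which I would justify as follows: for a Clifford $U$ and a minimal decomposition $\ket\phi = \sum_i c_i \ket{\phi_i}$, each $U\ket{\phi_i}$ is again a stabilizer state, so $U\ket\phi = \sum_i c_i\,U\ket{\phi_i}$ is a stabilizer decomposition with the same coefficient vector; hence $\extent(U\ket\phi) \le \extent(\ket\phi)$, and applying the same argument to $U^{-1}$ (also Clifford) gives equality. With these two facts, the induction closes: Clifford gates leave the extent unchanged, each of the $t$ $T$-gates multiplies it by at most $1 + \frac{1}{\sqrt{2}}$, and the base value is $\extent(\ket{0^n}) = 1$, yielding $\extent(\ket\psi) \le \Paren{1 + \frac{1}{\sqrt{2}}}^t$. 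I do not expect a genuine obstacle here; the only points requiring care are verifying the trigonometric identity that produces the exact constant $1 + \frac{1}{\sqrt{2}}$, and noting that \cref{claim:extent_linear_comb} depends only on $\abs{\alpha}$ and $\abs{\beta}$, so the global and relative phases appearing in the $T$-decomposition are harmless.
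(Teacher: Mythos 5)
Your proposal is correct and follows essentially the same route as the paper: the same decomposition of the $T$-gate into a phase times $\cos(\pi/8)I - i\sin(\pi/8)Z$ (the paper writes the phases as $\cos(\pi/8)e^{i\pi/8}I + \sin(\pi/8)e^{i13\pi/8}Z$, which is the same operator), the same application of \cref{claim:extent_linear_comb}, and the same induction over the circuit using Clifford-invariance of extent. Your only addition is an explicit justification of that invariance (which the paper asserts without proof), and your handling of the $T$-gate on an arbitrary wire $j$ via $Z_j$ replaces the paper's SWAP-to-the-first-qubit convention; both are fine.
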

\begin{proof}
We note that a Clifford+$T$ circuit can be broken into layers of Clifford circuits, followed by a single $T$-gate, followed by more layers of Clifford circuits, and so on.
Since Clifford circuits preserve stabilizer extent, we only need to show that the $T$-gate increases the stabilizer extent of any state by at most a constant multiplicative factor.
Since the SWAP gate is a Clifford operation, we assume without loss of generality that each $T$-gate is applied to the first qubit.

We proceed by induction on the layers of the circuit. In the first layer, when no $T$-gates have been applied, the bound is trivially true because the stabilizer extent of any stabilizer state is $1$.
Now, assume that, after applying some portion of the circuit $C^\prime$ to $\ket{0^n}$ with $t-1$ $T$-gates, we get the state $\ket{\varphi}$.
Observe that the $T$-gate can be expressed as $\cos(\pi/8)e^{i\pi/8} I + \sin(\pi/8)e^{i 13 \pi/8} Z$.
Thus, $(T \otimes I^{\otimes n-1}) \ket{\varphi} = \cos(\pi/8)e^{i\pi/8} \ket{\varphi} + \sin(\pi/8)e^{i 13 \pi/8} \left(Z\otimes I^{\otimes n-1}\right)\ket{\varphi}$.
Since $Z\otimes I^{\otimes n-1}$ is a Clifford operation, $\left(Z\otimes I^{\otimes n-1}\right) \ket{\varphi}$ has the same extent as $\ket{\varphi}$.
Therefore, applying \cref{claim:extent_linear_comb}, 
\[
\extent(\ket\psi) \leq \left(\cos(\pi/8) + \sin(\pi/8)\right)^2\extent(\ket{\varphi}) \leq \left(1 + \frac{1}{\sqrt{2}}\right)^{t}. \qedhere
\]
\end{proof}

\subsection{Boolean Fourier Analysis}\label{ssec:bool_fourier_analysis}

We review the basics of Fourier analysis over the Boolean hypercube.

\begin{definition}
Let $S \subseteq [n]$ be an index of bits. Then the \emph{parity function} on $S$ is defined to be
$$\chi_S(x) \coloneqq \prod_{i \in S}(-1)^{x_i}.$$
\end{definition}

Alternatively, we can define $\chi_S(x) = (-1)^{x \cdot s}$ where $s_i = 1$ if and only if $i \in S$.
This form will prove to be more natural for our purposes.

The parity functions are orthonormal under the inner product $\langle f, g\rangle = \frac{1}{2^n} \sum_{x \in \F_2^n} f(x)g(x)$.
Since there are $2^n$ distinct parity functions, this gives a complete basis.
Given a function $f: \F_2^n \to \mathbb{R}$, we can then write
$$
f(x) = \sum_{S \subseteq[n]}\hat{f}(S) \chi_S(x).
$$
The $\hat{f}(S)$ are real numbers known as the \textit{Fourier coefficients} (collectively known as the \textit{Fourier spectrum}), and are equivalently given by the formula
$$
\hat{f}(S) = \langle f(x), \chi_S(x) \rangle.
$$
As a basis change, we can then rethink inner products to be over the Fourier coefficients as well.
\begin{fact}[Plancherel's theorem]\label{fact:plancherel}
$$
\langle f, g \rangle = \frac{1}{2^n}\sum_{S \subseteq [n]} \hat{f}(S)\hat{g}(S).
$$
\end{fact}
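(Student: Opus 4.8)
The plan is to prove Plancherel's theorem by the standard route: expand both $f$ and $g$ in the parity basis, use bilinearity of the inner product to pull the (real) Fourier coefficients outside, and then collapse the resulting double sum using the orthonormality of the parity functions. Since the parities already form an orthonormal basis under the stated inner product, this is a routine basis-change computation rather than anything requiring a clever idea; the only thing demanding care is the bookkeeping of the $2^n$ normalization factor.

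First I would record the orthonormality relation in exactly the form I need it. Writing $\chi_S(x)\chi_T(x) = (-1)^{x\cdot s}(-1)^{x\cdot t} = (-1)^{x\cdot(s+t)} = \chi_{S\triangle T}(x)$, where $s+t$ is taken in $\F_2^n$ and $S \triangle T$ is the symmetric difference, I would observe that $\langle \chi_S, \chi_T\rangle = \frac{1}{2^n}\sum_{x \in \F_2^n}\chi_{S\triangle T}(x)$. A nontrivial parity sums to $0$ over $\F_2^n$, whereas the trivial parity (the case $S = T$) sums to $2^n$, so $\langle \chi_S, \chi_T\rangle = \delta_{S,T}$. This is precisely the orthonormality already asserted in the surrounding text, so this step could alternatively just be cited rather than reproved.

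Then I would substitute $f = \sum_S \hat{f}(S)\chi_S$ and $g = \sum_T \hat{g}(T)\chi_T$ into $\langle f, g\rangle$ and use bilinearity to obtain $\langle f, g\rangle = \sum_{S \subseteq [n]}\sum_{T \subseteq [n]} \hat{f}(S)\hat{g}(T)\,\langle \chi_S, \chi_T\rangle$. Applying the orthonormality relation annihilates every cross term with $S \neq T$, leaving $\sum_{S \subseteq [n]} \hat{f}(S)\hat{g}(S)$. I do not anticipate a genuine obstacle here; the single point that merits verification is the placement of the $1/2^n$. Because the $2^n$-normalization is already folded into the definition of $\langle \cdot, \cdot\rangle$ — which is exactly what makes the $\chi_S$ orthonormal rather than merely orthogonal — the collapsed sum carries no residual factor, and I would carefully reconcile the final normalization constant against this convention before committing to the stated form.
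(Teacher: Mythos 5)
Your proof is correct and is the standard argument; the paper itself offers no proof of this fact at all (it defers to O'Donnell's book), so there is nothing to diverge from. Expanding $f$ and $g$ in the parity basis, using bilinearity (the coefficients are real, so no conjugation issues arise), and collapsing the double sum via orthonormality of the $\chi_S$ is exactly the textbook route.

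The one substantive point is that your closing caution about the normalization is not idle bookkeeping: carried to completion, your computation proves $\langle f, g\rangle = \sum_{S \subseteq [n]} \hat{f}(S)\hat{g}(S)$ \emph{without} the $\frac{1}{2^n}$ prefactor, and that is in fact the correct statement under the paper's conventions. With $\langle f, g\rangle = 2^{-n}\sum_x f(x)g(x)$ and $\hat{f}(S) = \langle f, \chi_S\rangle$, the $2^n$-normalization is already absorbed into the inner product, exactly as you observe; the printed prefactor would be right only if the Fourier coefficients carried a $2^{-n/2}$ normalization, which they do not here. Moreover, the version you derived is the one the paper actually uses: in the proof of \cref{fact:blr-similarity}, the step
\[
32^n \Ex_{x \sim \F_2^{2n}}\lbrak p_\psi(x)\, (p_\psi \ast p_\psi)(x) \rbrak = 32^n \sum_{x \in \F_2^{2n}} \hat{p}_\psi(x)\, \widehat{p_\psi \ast p_\psi}(x)
\]
equates the normalized inner product directly with the unweighted sum of products of Fourier coefficients. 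A quick sanity check confirms this: for a stabilizer state, $q_\psi = p_\psi$ is uniform over a group of $2^n$ Weyl operators with $\abs{\braket{\psi|W_x|\psi}} = 1$ on its support, so $\Ex_{x \sim q_\psi}[\abs{\braket{\psi|W_x|\psi}}^2] = 1$, which matches $32^n \sum_x \hat{p}_\psi(x)^3$ under your normalization but would be off by an exponential factor under the printed one. So commit to the form without the prefactor; it is the Fact's statement, not your proof, that needs amending.
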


Finally, the convolution is an operation that appears frequently in Fourier analysis over the reals. We can similarly define it over Boolean inputs.

\begin{definition}
For functions $f, g: \F_2^n \to \mathbb{R}$, we define the \emph{convolution} $f \ast g$ as
$$
(f \ast g)(x) \coloneqq \frac{1}{2^n}\sum_{t \in \F_2^n} f(t)g(x + t).
$$
\end{definition}

Much like Fourier transforms over the reals, convolution maps to multiplication in the Fourier domain.

\begin{fact}[Convolution theorem]\label{fact:convolution_theorem}
$\hat{f \ast g}(S) = \hat{f}(S)\hat{g}(S)$
\end{fact}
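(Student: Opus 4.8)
The plan is to unfold both definitions and reduce everything to a single algebraic fact: each parity function is a character of the additive group $(\F_2^n, +)$, i.e. $\chi_S(x+t) = \chi_S(x)\chi_S(t)$. This multiplicativity is precisely what converts convolution (an additive operation on the domain) into a pointwise product in the Fourier domain, so essentially all of the content of the theorem is packaged into that identity.

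First I would write $\widehat{f\ast g}(S) = \la f\ast g, \chi_S\ra = \frac{1}{2^n}\sum_{x\in\F_2^n}(f\ast g)(x)\chi_S(x)$ and substitute the definition of convolution, obtaining the double sum $\frac{1}{4^n}\sum_{x}\sum_{t} f(t)g(x+t)\chi_S(x)$. Since everything is finite, I can freely exchange the order of summation to pull $f(t)$ outside, isolating the inner sum $\frac{1}{2^n}\sum_x g(x+t)\chi_S(x)$ for each fixed $t$.

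The key step is to evaluate this inner sum. Writing $\chi_S(x) = (-1)^{x\cdot s}$ (the form noted just after the definition of the parity function), and using that addition in $\F_2^n$ is its own inverse, I have $\chi_S(x) = \chi_S((x+t)+t) = \chi_S(x+t)\chi_S(t)$. Substituting $u = x+t$, which is a bijection of $\F_2^n$ for each fixed $t$, the inner sum becomes $\chi_S(t)\cdot\frac{1}{2^n}\sum_{u} g(u)\chi_S(u) = \chi_S(t)\hat{g}(S)$.

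Finally, plugging this back yields $\widehat{f\ast g}(S) = \hat{g}(S)\cdot\frac{1}{2^n}\sum_t f(t)\chi_S(t) = \hat{f}(S)\hat{g}(S)$, as claimed. I do not expect a genuine obstacle here; the only points requiring care are the bookkeeping of the two $\frac{1}{2^n}$ normalization factors and applying the change of variables $u = x+t$ separately for each fixed $t$. The entire proof rests on the character identity $\chi_S(x+t)=\chi_S(x)\chi_S(t)$.
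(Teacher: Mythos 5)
Your proof is correct, and the paper in fact states this fact without proof, deferring to the standard reference \cite{o2014analysis}; the argument you give---exchange the order of summation, apply the character identity $\chi_S(x+t)=\chi_S(x)\chi_S(t)$, and change variables $u=x+t$---is precisely the canonical textbook proof, with the $\frac{1}{2^n}$ normalizations handled correctly. Nothing is missing.
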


For proofs of all of these facts, as well as for a comprehensive reference on analysis of Boolean functions, we recommend \cite{o2014analysis}.

\subsection{Weyl Operators and Bell Difference Sampling}\label{ssec:weyl_and_bell}
For $x = (p,q) \in \F_2^{2n}$, define the \textit{Weyl operator} as 
\[W_x \coloneqq i^{p \cdot q} (X^{p_1}Z^{q_1}) \otimes \ldots \otimes (X^{p_n}Z^{q_n}) = i^{p \cdot q}X^pZ^q. \]
Each Weyl operator is a Pauli operator, and every Pauli operator is a Weyl operator (up to a phase). 
Note also that $W_x W_y = W_{x + y}$, up to a phase.
We use Weyl operators (rather than Pauli operators) when it is convenient to identify members of the Pauli group with length-$2n$ bit strings. 

A critical subroutine in our work is \textit{Bell difference sampling}, which was introduced in \cite{montanaro2017learning, gross2021schur}. 
Let $\ket{\Phi^+} \coloneqq 2^{-n/2} \sum_{x \in \F_2^n} \ket{x, x}$. Then, the set of quantum states $\{\ket{W_x} \coloneqq (W_x \otimes I) \ket{\Phi^+} \mid x\in\mathbb{F}_2^{2n}\}$ forms an orthonormal basis of $\mathbb{C}^{2^n} \otimes \mathbb{C}^{2^n}$, which we call the \textit{Bell basis}. 
Bell sampling a state $\ket{\psi}$ refers to measuring $\ket{\psi}^{\otimes 2}$ in the Bell basis, and the measurement outcome is a length-$2n$ bit string $x$ that corresponds to a Weyl operator $W_x$.
Bell difference sampling a state $\ket{\psi}$ refers to Bell sampling twice to get measurement outcomes $x, y \in \F_2^{2n}$ and returning $z = x + y$, which corresponds to a Weyl operator $W_z$ and uses four copies of $\ket{\psi}$.
Montanaro showed Bell difference sampling can be performed in $O(n)$ time \cite{montanaro2017learning}.

Bell difference sampling returns a random Weyl operator, but according to what distribution?
Gross, Nezami, and Walter\ \cite{gross2021schur} showed that the underlying distribution depends on the so-called characteristic distribution of $\ket{\psi}$.
\begin{definition}[characteristic distribution]
The \emph{characteristic distribution} of $\ket{\psi}$ is defined as
\[
p_{{\psi}}(x) \coloneqq 2^{-n} \abs{\braket{\psi|W_x|\psi}}^2.
\]
\end{definition}

\begin{lemma}[{\cite[Theorem 3.2]{gross2021schur}}]\label{lem:bell_diff_sampling}
Let $\ket{\psi}$ be an arbitrary $n$-qubit pure state. 
Bell difference sampling corresponds to drawing a sample from the following distribution:
\[
q_{\psi}(x) \coloneqq 4^n (p_\psi \ast p_\psi)(x) = \sum_{y \in \F_2^{2n}} p_{\psi}(y) p_{{\psi}}(x + y). 
\]
Additionally, if $\ket{\psi} \in \calS_n$ is a stabilizer state, then 
\[
q_{\psi}(x) = p_{{\psi}}(x) = 2^{-n} \abs{\braket{\psi|W_x|\psi}}^2.
\]
\end{lemma}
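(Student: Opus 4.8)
The plan is to split the statement into two independent claims: (A) that the output of Bell difference sampling is distributed according to $q_\psi = 4^n(p_\psi \ast p_\psi)$, and (B) that $q_\psi = p_\psi$ when $\ket\psi$ is a stabilizer state. For (A), I would first compute the distribution of a \emph{single} Bell sample. Since a Bell measurement on $\ket\psi^{\otimes 2}$ is a projective measurement in the orthonormal Bell basis $\{\ket{W_x}\}$, the outcome $x$ occurs with probability $\abs{\braket{W_x|\psi\otimes\psi}}^2$. Expanding $\ket{W_x} = (W_x\otimes I)\ket{\Phi^+}$ with $\ket{\Phi^+} = 2^{-n/2}\sum_z\ket{z,z}$ and using that $W_x$ is Hermitian, a one-line computation gives $\braket{W_x|\psi\otimes\psi} = 2^{-n/2}\braket{\psi^*|W_x|\psi}$, where $\ket{\psi^*}$ is the entrywise complex conjugate of $\ket\psi$ in the computational basis. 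Hence a single Bell sample is drawn from $b_\psi(x) \coloneqq 2^{-n}\abs{\braket{\psi^*|W_x|\psi}}^2$. Because Bell difference sampling returns $z = x+y$ for two independent Bell samples $x,y$, its output distribution is the (unnormalized) self-convolution $\sum_y b_\psi(y)b_\psi(y+z)$. Thus (A) reduces to the identity $b_\psi \ast b_\psi = p_\psi \ast p_\psi$.

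The subtle point is that $b_\psi \neq p_\psi$ in general — the conjugate $\ket{\psi^*}$ genuinely differs from $\ket\psi$ — so the two distributions coincide only after convolving. I would prove $b_\psi \ast b_\psi = p_\psi \ast p_\psi$ in the Fourier domain over $\F_2^{2n}$, pairing $x$ with $a$ via the symplectic form $[a,x]$ (the maps $x\mapsto(-1)^{[a,x]}$ form a complete orthonormal character basis, so \cref{fact:convolution_theorem} applies). The engine is the operator identity
\[
\sum_{x\in\F_2^{2n}} (-1)^{[a,x]}\, W_x\otimes W_x \;=\; 2^n\,(W_a\otimes W_a)\,\mathrm{SWAP},
\]
whose $a=0$ case is the familiar $\sum_x W_x\otimes W_x = 2^n\,\mathrm{SWAP}$; the general case follows from $(-1)^{[a,x]}W_x = W_a W_x W_a$ applied to one tensor factor together with $\mathrm{SWAP}(W_a\otimes I) = (I\otimes W_a)\mathrm{SWAP}$. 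Writing $\abs{\braket{\psi|W_x|\psi}}^2 = \tr[(\rho\otimes\rho)(W_x\otimes W_x)]$ with $\rho=\ketbra\psi\psi$ and applying this identity with $\tr[(A\otimes B)\,\mathrm{SWAP}]=\tr[AB]$ shows that the (symplectic) Fourier transform of $p_\psi$ is $p_\psi$ itself. The analogous computation for $b_\psi$, now tracking the sign $\overline{W_a}=(-1)^{p\cdot q}W_a$ for $a=(p,q)\in\F_2^{2n}$, yields the transform $(-1)^{p\cdot q}p_\psi(a)$. Since the two transforms differ only by a sign that squares to $1$, the convolution theorem gives $\widehat{b_\psi\ast b_\psi} = \widehat{p_\psi\ast p_\psi}$, and hence $b_\psi\ast b_\psi = p_\psi\ast p_\psi$, proving (A).

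For (B), I would invoke the standard description $\rho = 2^{-n}\sum_{g\in\stab(\ket\psi)}g$ of a stabilizer state, from which $\braket{\psi|W_x|\psi}\in\{0,\pm1\}$, nonzero exactly when $\pm W_x$ lies in the stabilizer group. Consequently $p_\psi$ is the uniform distribution (value $2^{-n}$) supported on $T=\{x : W_x\in\pm\stab(\ket\psi)\}$, which is a subgroup of $\F_2^{2n}$ of size $2^n$. The self-convolution of a distribution uniform on a subgroup is the distribution itself: for $x\in T$ there are exactly $\abs{T}$ choices of $y$ with both $y\in T$ and $x+y\in T$, while for $x\notin T$ there are none. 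This gives $q_\psi(x)=2^{-n}$ for $x\in T$ and $0$ otherwise, i.e. $q_\psi = p_\psi$.

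The main obstacle is the Fourier computation in (A): deriving and correctly applying the twisted operator identity, and in particular bookkeeping the conjugation sign $\overline{W_a}=(-1)^{p\cdot q}W_a$ that separates $b_\psi$ from $p_\psi$ at the level of individual Fourier coefficients. Everything hinges on the observation that this sign vanishes upon squaring, which is exactly the reason Bell \emph{difference} sampling — rather than a single Bell sample — is the operation that realizes $q_\psi$.
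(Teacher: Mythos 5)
Your proposal is correct, but note that the paper contains no proof of \cref{lem:bell_diff_sampling} at all: the lemma is imported wholesale from \cite[Theorem 3.2]{gross2021schur}, so the only meaningful comparison is with that source. Measured against Gross--Nezami--Walter, your route is essentially a reconstruction of theirs: they likewise first show that a single Bell sample is drawn from $b_\psi(x) = 2^{-n}\abs{\braket{\psi^*|W_x|\psi}}^2$, with the conjugate state $\ket{\psi^*}$ appearing for exactly the reason you identify, and then show that the conjugation sign washes out under self-convolution via a character computation; your twisted identity $\sum_{x}(-1)^{[a,x]}W_x\otimes W_x = 2^n(W_a\otimes W_a)\,\mathrm{SWAP}$ is a clean way to package that step. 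I checked the load-bearing computations and they all hold: $\braket{W_x|\psi\otimes\psi} = 2^{-n/2}\braket{\psi^*|W_x|\psi}$ (using $W_x^\dagger = W_x$), the commutation relation $W_aW_xW_a = (-1)^{[a,x]}W_x$ together with $W_a^2 = I$, the conjugation sign $\overline{W_a} = (-1)^{p\cdot q}W_a$ for $a=(p,q)$, and hence the symplectic transforms $\widehat{p_\psi}(a)\propto p_\psi(a)$ and $\widehat{b_\psi}(a)\propto (-1)^{p\cdot q}p_\psi(a)$ with the same constant, so the squares agree and $b_\psi \ast b_\psi = p_\psi\ast p_\psi$ follows by Fourier inversion, as you argue. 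Your part (B) is the standard subgroup argument and is also correct. Two cosmetic remarks: first, your self-duality of $p_\psi$ under the symplectic transform is the paper's \cref{prop:fourier-coefficients} in disguise, since the coordinate swap $(v,w)\mapsto(w,v)$ there is precisely composition with the symplectic pairing, so your key Fourier fact is consistent with (and could be outsourced to) the paper's internal computation; second, in part (B) you should say explicitly that $-I^{\otimes n}\notin\stab(\ket{\psi})$, which is what makes the map from the stabilizer group to $T\subseteq\F_2^{2n}$ injective and gives $\abs{T}=2^n$ --- a one-line patch. Your closing observation, that the sign $(-1)^{p\cdot q}$ distinguishing $b_\psi$ from $p_\psi$ vanishes only after squaring and is thus the reason Bell \emph{difference} sampling rather than single Bell sampling realizes $q_\psi$, is exactly the right conceptual takeaway.
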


We refer to $q_\psi$ as the \textit{Weyl distribution}. 
Using this terminology, the characteristic distribution and Weyl distribution are equal only when $\ket{\psi}$ is a stabilizer state (i.e., when $4^n (p_\psi \ast p_\psi) = p_\psi$). 

\section{Certificate of Low Stabilizer Complexity}\label{sec:analysis}
To efficiently distinguish a state with low stabilizer complexity (meaning, a state with low stabilizer extent or non-negligible stabilizer fidelity) from a Haar-random one, we require a property or statistic of the state that distinguishes it from Haar-random.
As such, we present the following technical lemma, which forms the backbone of our algorithm.

\begin{lemma}
\label{lem:expected-measurement}
Let $\ket\psi$ be an $n$-qubit pure state.
If the stabilizer fidelity of $\ket{\psi}$ is at least $\frac{1}{k}$, then
$$
\Ex_{x \sim q_{\psi}}\lbrak \abs{\braket{\psi|W_x|\psi}}^2 \rbrak \geq \dfrac{1}{k^6}.
$$
In contrast, if $\ket{\psi}$ is Haar-random and $n \geq 33$, then, with probability at least $1-\exp\left(-2^{n/2-15}\right)$ over the Haar measure,
$$
\Ex_{x \sim q_{\psi}}\lbrak \abs{\braket{\psi|W_x|\psi}}^2 \rbrak \leq 2^{-n/2}.
$$
\end{lemma}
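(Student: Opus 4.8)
The plan is to reduce both halves of the lemma to controlling the single quantity $\sum_{x \in \F_2^{2n}} \abs{\braket{\psi|W_x|\psi}}^6$. Write $c_x \coloneqq \braket{\psi|W_x|\psi}$, so that $p_\psi(x) = 2^{-n}\abs{c_x}^2$ and each $c_x$ is real (the $W_x$ are Hermitian). Unfolding the definition of $q_\psi$ from \cref{lem:bell_diff_sampling}, the target quantity is
\[
\Ex_{x \sim q_\psi}\lbrak \abs{c_x}^2 \rbrak = \sum_x q_\psi(x)\abs{c_x}^2 = 2^n \sum_{x,y} p_\psi(x)p_\psi(y)p_\psi(x+y).
\]
The crux is to show that this triple convolution collapses to the clean identity $\Ex_{x \sim q_\psi}\lbrak \abs{c_x}^2 \rbrak = 2^{-n}\sum_x \abs{c_x}^6$. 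I would obtain it from a \emph{self-duality} of $p_\psi$ under the symplectic Fourier transform: writing $[s,x] \coloneqq s_p \cdot x_q + s_q \cdot x_p$ for the symplectic form on $\F_2^{2n}$ and $\mathcal F[p_\psi](s) \coloneqq \sum_x (-1)^{[s,x]} p_\psi(x)$, I claim $\mathcal F[p_\psi](s) = 2^n p_\psi(s)$. Granting this, I substitute the inverse transform $p_\psi(x+y) = 2^{-n}\sum_w (-1)^{[x+y,w]} p_\psi(w)$ and use bilinearity $(-1)^{[x+y,w]} = (-1)^{[x,w]}(-1)^{[y,w]}$ to factor the sum:
\[
\sum_{x,y} p_\psi(x)p_\psi(y)p_\psi(x+y) = 2^{-n}\sum_w p_\psi(w)\,\mathcal F[p_\psi](w)^2 = 2^n \sum_w p_\psi(w)^3,
\]
and multiplying by $2^n$ and rewriting $p_\psi = 2^{-n}\abs{c}^2$ gives the identity above.

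To prove self-duality I expand in Pauli operators. With $\rho = \ketbra{\psi}{\psi}$ we have $\abs{c_x}^2 = \Tr[W_x\rho]^2 = \Tr\lbrak(\rho\otimes\rho)(W_x\otimes W_x)\rbrak$, so $\sum_x (-1)^{[s,x]}\abs{c_x}^2 = \Tr\lbrak(\rho\otimes\rho)\sum_x (-1)^{[s,x]} W_x\otimes W_x\rbrak$. The conjugation rule $W_s W_x W_s = (-1)^{[s,x]} W_x$ lets me write $(-1)^{[s,x]}W_x\otimes W_x = (I\otimes W_s)(W_x\otimes W_x)(I\otimes W_s)$, and summing against the standard resolution $\sum_x W_x\otimes W_x = 2^n\,\mathrm{SWAP}$ yields $\sum_x (-1)^{[s,x]} W_x\otimes W_x = 2^n (W_s\otimes W_s)\,\mathrm{SWAP}$. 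Applying $\Tr[(A\otimes B)\,\mathrm{SWAP}] = \Tr[AB]$ then gives $2^n\Tr[\rho W_s \rho W_s] = 2^n\abs{c_s}^2$, as claimed. This computation is the heart of the argument, and I expect it to be the main obstacle: everything else reduces to standard estimates once the expectation is expressed as $2^{-n}\sum_x\abs{c_x}^6$.

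For the lower bound, suppose $\fidelity_{\stabset}(\ket{\psi}) \geq \frac{1}{k}$ and let $\ket{\phi}\in\stabset_n$ achieve $\abs{\braket{\phi|\psi}}^2 \geq \frac{1}{k}$. Its stabilizer group is $\{\epsilon_x W_x : x \in M\}$ for a Lagrangian subspace $M\subseteq\F_2^{2n}$ with $\abs{M} = 2^n$ and signs $\epsilon_x \in \pmone$, and $\ketbra{\phi}{\phi} = 2^{-n}\sum_{x\in M}\epsilon_x W_x$. Hence $\frac{1}{k} \leq \abs{\braket{\phi|\psi}}^2 = 2^{-n}\sum_{x\in M}\epsilon_x c_x \leq 2^{-n}\sum_{x\in M}\abs{c_x}$, and Cauchy--Schwarz over the $2^n$ elements of $M$ gives $\sum_{x\in M}\abs{c_x}^2 \geq 2^n/k^2$. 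Restricting the sixth-power sum to $M$ and applying the power-mean inequality to the $2^n$ nonnegative numbers $\abs{c_x}^2$ then yields $\sum_x \abs{c_x}^6 \geq \sum_{x\in M}\abs{c_x}^6 \geq 2^{-2n}\PAREN{\sum_{x\in M}\abs{c_x}^2}^3 \geq 2^n/k^6$, so $\Ex_{x\sim q_\psi}\lbrak\abs{c_x}^2\rbrak = 2^{-n}\sum_x\abs{c_x}^6 \geq \frac{1}{k^6}$.

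For the Haar-random case it remains to show $\sum_x \abs{c_x}^6 \leq 2^{n/2}$ with the stated probability. Since $c_0 = 1$, split off $\abs{c_0}^6 = 1$ and bound the rest using the normalization $\sum_x \abs{c_x}^2 = 2^n$: namely $\sum_{x\neq 0}\abs{c_x}^6 \leq \PAREN{\max_{x\neq 0}\abs{c_x}^4}\sum_{x\neq 0}\abs{c_x}^2 \leq 2^n \max_{x\neq 0}\abs{c_x}^4$. So it suffices to show that, with overwhelming probability, $\max_{x\neq 0}\abs{c_x}$ is at most about $2^{-n/8}$. For each fixed $x\neq 0$ the operator $W_x$ is traceless, so $\Ex_\psi[c_x] = 2^{-n}\Tr W_x = 0$, and $\ket{\psi}\mapsto\braket{\psi|W_x|\psi}$ is $O(1)$-Lipschitz on the sphere; Levy's lemma gives $\Pr[\abs{c_x}\geq\tau] \leq \exp(-\Omega(2^n\tau^2))$. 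A union bound over the $4^n - 1$ nonidentity Weyl operators, with $\tau$ chosen so that $2^n\tau^4 \leq 2^{n/2}$, forces $\max_{x\neq 0}\abs{c_x}\leq\tau$ except with probability $4^n\exp(-\Omega(2^{3n/4}))$, which is comfortably within the claimed $\exp(-2^{n/2-15})$ once $n \geq 33$; on that event $\Ex_{x\sim q_\psi}\lbrak\abs{c_x}^2\rbrak = 2^{-n}\sum_x\abs{c_x}^6 \leq 2^{-n}(1 + 2^{n/2}) \leq 2^{-n/2}$.
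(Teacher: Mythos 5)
Your proposal is correct and, while it estimates the same statistic through the same three-stage skeleton (reduce to $\sum_x p_\psi(x)^3$, lower-bound via the nearby stabilizer state, upper-bound via L\'evy's lemma plus a union bound), both of your key steps are carried out by genuinely different arguments than the paper's. For the central identity, the paper computes the Fourier coefficients of $p_\psi$ by expanding $\ket{\psi}$ in computational-basis amplitudes and pairing against an arbitrary character (\cref{prop:fourier-coefficients}: $\wh{p_\psi}(v,w) = 2^{-n} p_\psi(w,v)$), then applies Plancherel and the convolution theorem (\cref{fact:blr-similarity}); your symplectic self-duality $\mathcal{F}[p_\psi](s) = 2^n p_\psi(s)$ is exactly the same fact with the $(v,w)$-swap absorbed into the symplectic character, but your operator-algebraic derivation via $\sum_x W_x \otimes W_x = 2^n\,\mathrm{SWAP}$ and $\Tr[(A \otimes B)\,\mathrm{SWAP}] = \Tr[AB]$ is arguably cleaner and more conceptual, since it makes plain that the identity is basis-independent. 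For the fidelity lower bound, the paper rotates by a Clifford so that the closest stabilizer state becomes $\ket{0^n}$, restricts the cube-sum to the $Z$-type Weyl operators, and invokes Clifford-invariance of the sum; you instead work directly with the Lagrangian subspace $M$ of the stabilizer group, obtaining $2^{-n}\sum_{x \in M} \abs{c_x} \geq \frac{1}{k}$ and chaining Cauchy--Schwarz with the power-mean inequality --- the same bound in a frame-independent form that dispenses with the conjugation step (the paper's $Z$-type restriction is just your $M$ for the state $\ket{0^n}$). One pedantic repair is needed in the Haar case: with your threshold ``$2^n \tau^4 \leq 2^{n/2}$'' you end with $2^{-n}(1 + 2^{n/2}) = 2^{-n/2} + 2^{-n} > 2^{-n/2}$, so take instead $\tau^4 = (2^{n/2}-1)/2^n$, mirroring the paper's choice of $\eps$, which subtracts off the identity term; your concentration exponent $\Omega(2^n \tau^2) = \Omega(2^{3n/4})$ still dwarfs the claimed $2^{n/2 - 15}$ for $n \geq 33$ even after the $4^n$-term union bound. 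Indeed, your max-times-Parseval bound $\sum_{x \neq 0} \abs{c_x}^6 \leq \tau^4 \sum_{x \neq 0} \abs{c_x}^2$ tolerates a larger deviation parameter than the paper's term-by-term $\eps^6$ estimate, so your route yields a slightly stronger probability guarantee at no extra cost.
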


Our algorithm then amounts to estimating the quantity $\Ex_{x \sim q_{\psi}}\lbrak \abs{\braket{\psi|W_x|\psi}}^2 \rbrak$ via a procedure involving Bell difference sampling.

To prove \cref{lem:expected-measurement}, as a first step, we relate $\Ex_{x \sim q_{\psi}}\lbrak \abs{\braket{\psi|W_x|\psi}}^2 \rbrak$ to the Fourier coefficients of $p_\psi$.
Note that this analysis closely resembles the BLR linearity test \cite{BLRtest} (see also \cite[Section 1.6]{o2014analysis}). 

\begin{fact}\label{fact:blr-similarity}
Let $\ket{\psi}$ be an $n$-qubit pure state. Then, 
\[\Ex_{x \sim q_{\psi}}\lbrak \abs{\braket{\psi|W_x|\psi}}^2 \rbrak = 32^n \!\!\!\sum_{x \in \F_2^{2n}} \wh{p_\psi}(x)^3. \]
\end{fact}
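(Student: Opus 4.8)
The plan is to convert the expectation, which is a symmetric ``triple correlation'' of the characteristic distribution with itself, into a sum of cubes of Fourier coefficients, and then to carefully bookkeep the powers of two. First I would eliminate the matrix element in favor of $p_\psi$: since $p_\psi(x) = 2^{-n}\abs{\braket{\psi|W_x|\psi}}^2$, we have $\abs{\braket{\psi|W_x|\psi}}^2 = 2^n p_\psi(x)$, so that
\[
\Ex_{x\sim q_\psi}\lbrak\abs{\braket{\psi|W_x|\psi}}^2\rbrak = \sum_{x\in\F_2^{2n}} q_\psi(x)\cdot 2^n p_\psi(x) = 2^n\sum_{x\in\F_2^{2n}} q_\psi(x)\,p_\psi(x).
\]

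Next I would substitute the Weyl distribution's definition $q_\psi(x) = \sum_{y} p_\psi(y)p_\psi(x+y)$ from \cref{lem:bell_diff_sampling}, turning the inner sum into the fully symmetric triple sum $T \coloneqq \sum_{x,y\in\F_2^{2n}} p_\psi(x)p_\psi(y)p_\psi(x+y)$. There are two equivalent ways to evaluate $T$. The slick route uses the Fourier dictionary of \cref{ssec:bool_fourier_analysis}: recognize $T$ (up to the $\F_2^{2n}$ normalization) as an inner product $\la q_\psi, p_\psi\ra$, apply the convolution theorem (\cref{fact:convolution_theorem}) to write $\wh{q_\psi}$ as a scalar multiple of $\wh{p_\psi}^2$, and then invoke Plancherel (\cref{fact:plancherel}) to collapse the inner product to a multiple of $\sum_a \wh{p_\psi}(a)^3$. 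The robust route, which I would use to pin down the constant, expands each of the three factors as $p_\psi = \sum_a \wh{p_\psi}(a)\chi_a$ directly, uses $\chi_c(x+y)=\chi_c(x)\chi_c(y)$, and applies orthogonality of characters (namely $\sum_x \chi_{a+c}(x) = 4^n\,[a=c]$) twice to kill the two free sums, leaving $T = 4^{2n}\sum_c \wh{p_\psi}(c)^3$.

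Either route yields $\Ex_{x\sim q_\psi}\lbrak\abs{\braket{\psi|W_x|\psi}}^2\rbrak = 2^n\cdot T = 2^n\cdot 4^{2n}\sum_a \wh{p_\psi}(a)^3 = 32^n\sum_a \wh{p_\psi}(a)^3$, which is exactly the claim once subsets $S\subseteq[2n]$ are identified with strings $a\in\F_2^{2n}$.

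The only genuine obstacle is the arithmetic bookkeeping of normalization constants, which is error-prone here: the underlying group is $\F_2^{2n}$ rather than $\F_2^{n}$, so every character sum and every Fourier/Plancherel normalization carries $2^{2n}=4^n$ instead of $2^n$, and on top of that there is an extra $2^n$ from rewriting $\abs{\braket{\psi|W_x|\psi}}^2$ in terms of $p_\psi$ and a further $4^n$ hidden inside $q_\psi = 4^n(p_\psi\ast p_\psi)$. A careless pass can easily land on $8^n$ rather than $32^n = 2^{5n}$. For this reason I would treat the direct character-orthogonality evaluation of $T$ as the primary argument, since it makes every factor of two explicit and is insensitive to the precise normalization convention chosen for the Fourier transform, and use the convolution-plus-Plancherel computation only as an independent cross-check.
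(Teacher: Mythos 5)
Your proposal is correct, and your ``slick route'' is precisely the paper's proof: the paper rewrites the expectation as $2^n\sum_x p_\psi(x)q_\psi(x)$, substitutes $q_\psi = 4^n(p_\psi \ast p_\psi)$, converts the sum over $\F_2^{2n}$ into an expectation (picking up another $4^n$, for $32^n$ total), and then applies Plancherel followed by the convolution theorem to land on $32^n\sum_x \wh{p_\psi}(x)^3$. What you do differently is to make the \emph{direct} character-orthogonality evaluation of the triple sum $T = \sum_{x,y}p_\psi(x)p_\psi(y)p_\psi(x+y)$ your primary argument: expanding each factor in the character basis and using $\sum_x \chi_{a+c}(x) = 4^n\,[a=c]$ twice gives $T = 16^n\sum_c \wh{p_\psi}(c)^3$, hence $2^n T = 32^n\sum_c\wh{p_\psi}(c)^3$, matching the paper's constant. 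Your instinct that the normalization bookkeeping is the genuine hazard here is well placed, and your convention-insensitive route buys real protection against it: note that the paper's own statement of \cref{fact:plancherel} carries a factor $\frac{1}{2^n}$ that is inconsistent with its orthonormal-character convention (under $\langle f,g\rangle = \Ex_x[f(x)g(x)]$ one has $\langle f,g\rangle = \sum_S \wh{f}(S)\wh{g}(S)$ with no extra factor, and this standard form is what the paper's displayed chain actually uses), so a reader who applied the stated fact literally over $\F_2^{2n}$ would be off by $4^n$. Your direct expansion, where every power of two arises from an explicit character sum over a domain of size $4^n$, sidesteps exactly this pitfall, at the modest cost of being longer than the paper's three-line Fourier-dictionary computation.
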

\begin{proof}
\begin{align*}
\Ex_{x \sim q_{\psi}}\lbrak \abs{\braket{\psi|W_x|\psi}}^2 \rbrak 
&= 2^n \Ex_{x \sim q_{\psi}}\lbrak p_\psi(x) \rbrak \\
&= 2^n \sum_{x \in \F_2^{2n}} p_\psi(x) q_\psi(x) \\
&= 8^n \sum_{x \in \F_2^{2n}} p_\psi(x) (p_\psi \ast p_\psi)(x) \\
&= 32^n \Ex_{x \sim \F_2^{2n}}[p_\psi(x) (p_\psi \ast p_\psi)(x)] \\
&= 32^n \sum_{x \in \F_2^{2n}} \wh{p_\psi}(x) \wh{p_\psi\ast p_\psi}(x)) && (\mathrm{\cref{fact:plancherel}})\\
&= 32^n \sum_{x \in \F_2^{2n}} \wh{p_\psi}(x)^3. && (\mathrm{\cref{fact:convolution_theorem}}) \qedhere
\end{align*}
\end{proof}

For the remainder of this section, we use the following convention: when $x = (v, w) \in \F_2^{2n}$, $v$ and $w$ denote the first and last $n$ bits of $x$, respectively, and, we will sometimes write $p_\psi(v,w)$ and $q_\psi(v,w)$, rather than $p_\psi(x)$ and $q_\psi(x)$.

\subsection{The Fourier Spectrum of the Characteristic Distribution}
By \cref{fact:blr-similarity}, it is clear that understanding the Fourier spectrum of $p_\psi$ is one avenue to proving \cref{lem:expected-measurement}.

\begin{proposition}\label{prop:fourier-coefficients}
The Fourier coefficients of $p_\psi(v,w)$ are
$\wh{p_\psi}(v,w) = \frac{1}{2^n} p_\psi(w,v)$. 
\end{proposition}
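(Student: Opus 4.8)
The plan is to compute the Fourier coefficient directly from its definition and reduce it, via a standard Weyl-operator identity, to a single characteristic value at a swapped index. Throughout write $s = (v,w) \in \F_2^{2n}$ for the index of the coefficient, let $c_\psi(x) \coloneqq \braket{\psi | W_x | \psi}$, and let $\rho \coloneqq \ketbra{\psi}{\psi}$. Since each Weyl operator is Hermitian, $c_\psi(x)$ is real and $p_\psi(x) = 2^{-n} c_\psi(x)^2$. Because $p_\psi$ is a function on $\F_2^{2n}$, its Fourier coefficient carries normalization $4^{-n}$, so
$$\wh{p_\psi}(s) = \frac{1}{4^n} \sum_{x \in \F_2^{2n}} p_\psi(x)(-1)^{s \cdot x} = \frac{1}{8^n} \sum_{x \in \F_2^{2n}} c_\psi(x)^2 (-1)^{s \cdot x}.$$

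First I would rewrite the squared characteristic value as a trace on two copies, $c_\psi(x)^2 = \tr\!\big((W_x \ot W_x)(\rho \ot \rho)\big)$, which turns the sum into $\tr\!\big(A_s (\rho \ot \rho)\big)$ where $A_s \coloneqq \sum_x (-1)^{s \cdot x} W_x \ot W_x$. The engine of the proof is the operator identity $\sum_{x \in \F_2^{2n}} W_x \ot W_x = 2^n\, \mathsf{SWAP}$, which follows because the Weyl operators form an orthogonal Hermitian basis (the one-qubit case $I\ot I + X \ot X + Y \ot Y + Z \ot Z = 2\,\mathsf{SWAP}$ tensorizes). To handle the character weight $(-1)^{s\cdot x}$, I would absorb it into a conjugation: the commutation relation for Weyl operators gives $W_{s'} W_x W_{s'}^{\adj} = (-1)^{[s',x]} W_x$, where $[\cdot,\cdot]$ is the symplectic form, and choosing $s' = (w,v)$ (the coordinate swap of $s$) makes $[s',x] = s \cdot x$ exactly. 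Hence $A_s = (W_{s'} \ot I)\big(\sum_x W_x \ot W_x\big)(W_{s'}^{\adj} \ot I) = 2^n (W_{s'} \ot W_{s'}^{\adj})\,\mathsf{SWAP}$.

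It then remains to evaluate the trace. Using $\mathsf{SWAP}(\rho \ot \rho) = (\rho \ot \rho)\mathsf{SWAP}$ together with $\tr\!\big(\mathsf{SWAP}(B \ot C)\big) = \tr(BC)$, the expression collapses to $\tr\!\big(A_s (\rho\ot\rho)\big) = 2^n \tr\!\big(W_{s'} \rho W_{s'}^{\adj} \rho\big) = 2^n \abs{c_\psi(s')}^2 = 4^n p_\psi(s')$. Substituting back, the powers of two telescope to $\wh{p_\psi}(s) = \frac{1}{8^n} \cdot 4^n p_\psi(s') = \frac{1}{2^n} p_\psi(w,v)$, which is exactly the claim.

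I expect the main obstacle to be the index and phase bookkeeping in the middle step: verifying that converting the Euclidean character $(-1)^{s\cdot x}$ into a Weyl conjugation forces the symplectic partner $s' = (w,v)$, since this coordinate swap is precisely the source of the $p_\psi(w,v)$ appearing on the right-hand side. An alternative, more pedestrian route is to expand $\ket{\psi}$ in the computational basis and compute $\braket{\psi|X^v Z^w|\psi}$ as a Fourier sum over $w$ directly; this avoids the operator identities but requires more careful manual summation, so I would prefer the $\mathsf{SWAP}$-based argument above.
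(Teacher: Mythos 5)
Your proof is correct, but it takes a genuinely different route from the paper's. The paper works entirely in the computational basis: it expands $\ket{\psi} = \sum_x c_x \ket{x}$, writes $p_\psi(v,w)$ explicitly as a quadruple sum over amplitudes, and evaluates the character sum directly, with the key collapse coming from $\sum_{t}(-1)^{t\cdot(x+w)} = 2^n \delta_{x,w}$ — this is exactly the ``more pedestrian route'' you mention at the end. You instead stay coordinate-free: you reduce $\sum_x c_\psi(x)^2 (-1)^{s\cdot x}$ to a two-copy trace, invoke the identity $\sum_{x \in \F_2^{2n}} W_x \ot W_x = 2^n\,\mathsf{SWAP}$ (which is valid — the phase convention $W_x = i^{p\cdot q}X^pZ^q$ makes each $W_x$ exactly a tensor product of $I,X,Y,Z$ with no residual phase, so the one-qubit identity tensorizes after regrouping registers), and absorb the character into a Weyl conjugation via $W_{s'}W_xW_{s'}^{\adj} = (-1)^{[s',x]}W_x$ with $s'=(w,v)$, since $[(w,v),(p,q)] = w\cdot q + v\cdot p = (v,w)\cdot(p,q)$. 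I verified the bookkeeping you flagged as the main risk: the normalization $\wh{p_\psi}(s) = 4^{-n}\sum_x p_\psi(x)(-1)^{s\cdot x}$ matches the paper's convention, $W_{s'}$ is Hermitian so $W_{s'}^{-1}=W_{s'}^{\adj}$, and the final chain $\frac{1}{8^n}\cdot 2^n \abs{c_\psi(s')}^2 = \frac{1}{8^n}\cdot 4^n p_\psi(s') = \frac{1}{2^n}p_\psi(w,v)$ is right. What each approach buys: the paper's computation is elementary and self-contained given the Fourier toolkit it sets up, but the appearance of the swapped index $(w,v)$ emerges only at the end of a mechanical summation; your argument is shorter, avoids amplitude bookkeeping, and makes the swap conceptually transparent — it is precisely the symplectic partner of $s$ under the duality converting the Euclidean character into Weyl conjugation. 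The price is reliance on two operator identities (the $\mathsf{SWAP}$ expansion and the commutation relation) that the paper never introduces, so if this were spliced into the paper those would need to be stated and proved.
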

\begin{proof}
Define $f: \F_2^{2n} \xrightarrow[]{} [-1,1]$ as $f(v,w) \coloneqq \braket{\psi|i^{v\cdot w}X^v Z^w |\psi}$, where $v,w \in \F_2^n$.  
We begin by computing the Fourier expansion of $f$.

\begin{align}\label{eq:expansion-of-f}
   f(v, w) 
   &= \bra{\psi}i^{v\cdot w} X^v Z^{w} \ket{\psi} \nn
   &=  \left( \sum_{x \in \F_2^n} c^*_x \bra{x} \right)i^{v\cdot w} X^v Z^{w}\left( \sum_{x \in \F_2^n} c_x \ket{x} \right) \nn
   &=  i^{v\cdot w}\left( \sum_{x \in \F_2^n} c^*_x \bra{x + v} \right) \left( \sum_{x \in \F_2^n} (-1)^{x \cdot w} c_x \ket{x} \right) \nn
   &=  i^{v\cdot w}\sum_{x \in \F_2^n} c_{x+v}^* c_x (-1)^{w \cdot x}.
\end{align}
In the second line we are simply writing $\ket{\psi}$ in the computational basis. 

Observe now that $p_\psi(v,w) = \frac{1}{2^n} \abs{f(v,w)}^2$, which we can also treat as a function on Boolean variables. Hence, 
\begin{align*}
   p_\psi(v,w) 
   &= \frac{1}{2^n}\left(i^{v\cdot w}\sum_{x \in \F_2^n} c_{x+v}^* c_x (-1)^{w \cdot x} \right)\left((-i)^{v\cdot w}\sum_{x \in \F_2^n} c_{x+v} c^*_x (-1)^{w \cdot x} \right) \nn
   &= \frac{1}{2^n}\sum_{x,y \in \F_2^n} c^*_{v+y}c_y c_{v+x+y}c^*_{x+y} (-1)^{w \cdot x},
\end{align*}
where the first equality follows by substituting in \cref{eq:expansion-of-f}.

We can now compute the Fourier spectrum of $p_\psi$ by taking the inner product between $p_\psi$ and an arbitrary Fourier character (this is the simplest approach to computing Fourier coefficients).
\begin{align*}
    \wh{p_\psi}(v,w) 
    &= \frac{1}{4^n}\sum_{s, t \in \F_2^n} p_\psi(s, t) (-1)^{s\cdot v + t\cdot w}\\
    &= \frac{1}{8^n}\sum_{s,t,x,y \in \F_2^n} c^*_{s+y}c_y c_{s+x+y}c^*_{x+y} (-1)^{t \cdot x + v \cdot s + w \cdot t}\\
    &= \frac{1}{8^n}\sum_{s,x,y \in \F_2^n} c^*_{s+y}c_y c_{s+x+y}c^*_{x+y} (-1)^{v \cdot s} \sum_{t\in\F_2^n}(-1)^{t \cdot (x+w)}\\
    &= \frac{1}{4^n}\sum_{s,y \in \F_2^n} c^*_{s+y}c_y c_{s+w+y}c^*_{w+y} (-1)^{v \cdot s} \\
    &= \frac{1}{2^n}p_\psi(w,v).\qedhere
\end{align*}
\end{proof}

\subsection{Low-Stabilizer-Complexity States}
We prove the first part of \cref{lem:expected-measurement}; namely, that
\[\Ex_{x \sim q_{\psi}}\lbrak \abs{\braket{\psi|W_x|\psi}}^2 \rbrak \geq \dfrac{1}{k^6}\]
when $\ket\psi$ has low stabilizer complexity.

\begin{claim} \label{lem:sum_p-hat_cubed}
For an $n$-qubit pure state $\ket{\psi} = \sum_{x \in \F_2^n} c_x \ket{x}$,  
\[
32^n \sum_{x \in \F_2^{2n}} \wh{p_\psi}(x)^3 \geq \abs{c_0}^{12}.
\]

\end{claim}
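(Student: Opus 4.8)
The plan is to convert the Fourier-space sum into a statement purely about the characteristic distribution, and then exploit nonnegativity to collapse the problem onto a one-dimensional slice where a convexity argument closes the bound. First I would substitute \cref{prop:fourier-coefficients}, which gives $\wh{p_\psi}(v,w) = 2^{-n}p_\psi(w,v)$, so that $32^n\sum_{x}\wh{p_\psi}(x)^3 = 4^n\sum_{x \in \F_2^{2n}} p_\psi(x)^3$ after reindexing $(w,v)\mapsto x$ (the map is a bijection of $\F_2^{2n}$, and $32^n \cdot 8^{-n} = 4^n$). Writing $f(x) \coloneqq \braket{\psi|W_x|\psi}$ so that $p_\psi(x) = 2^{-n}\abs{f(x)}^2$, this equals $2^{-n}\sum_{x\in\F_2^{2n}}\abs{f(x)}^6$. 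The target is therefore reduced to showing $2^{-n}\sum_{x}\abs{f(x)}^6 \ge \abs{c_0}^{12}$.

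Next, since every summand is nonnegative, I would discard all terms except those on the slice $v = 0$, reducing the goal to $\Ex_{w\sim\F_2^n}\lbrak \abs{f(0,w)}^6\rbrak \ge \abs{c_0}^{12}$. On this slice $W_{(0,w)} = Z^w$, so $f(0,w) = \braket{\psi|Z^w|\psi} = \sum_{x\in\F_2^n}\abs{c_x}^2(-1)^{x\cdot w}$ is \emph{real}, and averaging over $w$ kills every Fourier character except the trivial one: $\Ex_{w}\lbrak f(0,w)\rbrak = \abs{c_0}^2 \ge 0$.

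Finally I would invoke convexity. Because $t\mapsto t^6$ is convex on $[0,\infty)$, Jensen's inequality followed by the triangle inequality gives $\Ex_w\lbrak\abs{f(0,w)}^6\rbrak \ge \lpar\Ex_w\lbrak\abs{f(0,w)}\rbrak\rpar^6 \ge \abs{\Ex_w\lbrak f(0,w)\rbrak}^6 = \abs{c_0}^{12}$, which is exactly the desired inequality.

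The calculational steps (the substitution, the $p_\psi \to \abs{f}^6$ rewriting, and the reindexing) are routine; the one genuine idea—and the step I would expect to need the most care in getting right—is recognizing that restricting to the $v=0$ slice is the correct move. A naive attempt to lower-bound $\sum_x p_\psi(x)^3$ by a single term (say $x=0$, where $p_\psi(0)=2^{-n}$) only yields $2^{-n}$, which is independent of $\abs{c_0}$ and useless here. The slice $v=0$ is special precisely because there $f$ becomes the real Fourier transform of the measurement distribution $\abs{c_x}^2$, whose mean over $w$ is exactly $\abs{c_0}^2$, so that the convexity of the sixth power recovers the full twelfth power of $\abs{c_0}$.
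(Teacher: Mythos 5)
Your proposal is correct and takes essentially the same route as the paper's proof: substitute \cref{prop:fourier-coefficients}, drop all nonnegative terms except the $Z$-slice $v=0$, then close with convexity. Your Jensen-plus-triangle-inequality step is exactly the paper's power-mean inequality $\sum_{i=1}^m \abs{a_i}^6 \ge m^{-5}\lpar\sum_{i=1}^m \abs{a_i}\rpar^6$ written in expectation form, and your computation $\Ex_w\lbrak f(0,w)\rbrak = \abs{c_0}^2$ is the paper's operator identity $\sum_{v} Z^v = 2^n \ketbra{0^n}{0^n}$.
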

\begin{proof}
\begin{align*}
32^n\!\! \sum_{v,w \in \F_2^{2n}} \wh{p_\psi}(v,w)^3 
&= 4^n  \sum_{v,w \in \F_2^{2n}}\!\! p_\psi(w,v)^3 && (\mathrm{\cref{prop:fourier-coefficients}}.) \\
&\geq 4^n \sum_{v \in \F_2^n}p_\psi(0,v)^3 && (\forall x, y, p_\psi(x,y) \geq 0.)\\
&= \frac{1}{2^{n}} \sum_{v \in \F_2^n} \abs{\braket{\psi | Z^v  | \psi}}^{6}\\ 
& \geq \frac{1}{2^{6n}} \left(\sum_{v \in \F_2^n} \braket{\psi |Z^v| \psi}\right)^6  && \left(\sum_{i=1}^m \abs{a_i}^6 \geq \frac{1}{m^5}\left(\sum_{i = 1}^m \abs{a_i} \right)^6.\right)\\
& \geq \abs{c_0}^{12}.  && \left(\sum_{v \in \F_2^n} Z^v = 2^n \ketbra{0^n}{0^n}.\right)\qedhere
\end{align*}
\end{proof}

\begin{proof}[Proof of first part of \cref{lem:expected-measurement}]
Let $\ket{\psi}$ be an $n$-qubit pure state, and suppose that the stabilizer fidelity of $\ket{\psi}$ is at least $\frac{1}{k}$.
Then there exists a Clifford circuit $C \in \calC_n$ such that $C\ket{\psi} = \sum_{x \in \F_2^n} c_x \ket{x}$ where $\abs{c_{0}}^2 \geq \frac{1}{k}$. 
Call $\ket{\phi} \coloneqq C\ket{\psi}$.
By \cref{lem:sum_p-hat_cubed}, 
\[
32^n \sum_{v,w \in \F_2^n} \wh{p_{\phi}}(v, w)^3\geq \abs{c_0}^{12} \geq \frac{1}{k^6}.
\]
A Clifford circuit $C$ is a permutation of the Pauli group under conjugation (i.e., $C^\dagger \pauli C  = \pauli$ for any $C \in \calC_n$). Hence, for all $C \in \mathcal{C}_n$ and $g:\pauli \rightarrow \mathbb{R}$, $$\sum_{x \in \F_2^{2n}} g(W_x) = \sum_{x \in \F_2^{2n}} g(C^\dagger W_x C).$$
Therefore, we conclude that 
\[
32^n \sum_{v,w \in \F_2^n} \wh{p_{\psi}}(v, w)^3 \geq \frac{1}{k^6}
\]
as well. Combining this bound with \cref{fact:blr-similarity} completes the proof.
\end{proof}

\subsection{Haar-Random States}
We complete the proof of \cref{lem:expected-measurement} by showing that 
$\Ex_{x \sim q_{\psi}}\lbrak \abs{\braket{\psi|W_x|\psi}}^2 \rbrak$ is small when $\ket{\psi}$ is a Haar-random state. 
We begin by showing that, for a Haar-random state, all of the Weyl measurements (except $W_x = I$) are exponentially close to $0$ with overwhelming probability. 

\begin{lemma}[L\'{e}vy's Lemma, see e.g. \cite{Gerken13measureconcentration}]
\label{lem:levy}
Let $\mathbb{S}^N$ denote the set of all $N$-dimensional pure quantum states, and let $f: \mathbb{S}^N \to \mathbb{R}$ be $L$-Lipschitz, meaning that $\abs{f(\ket{\psi}) - f(\ket{\varphi})} \le L \cdot \norm{\ket{\psi} - \ket{\varphi}}_2$. Then:
\[
\Pr_{\ket{\psi} \sim \mu_{\rm{Haar}}} \left[\abs{f(\ket{\psi}) - \Ex[f]} \ge \eps  \right] \le 2\exp\left(-\frac{N \eps^2}{9\pi^3 L^2} \right).
\]

\end{lemma}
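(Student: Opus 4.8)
The plan is to recognize this as the standard concentration-of-measure phenomenon on the Euclidean sphere and to import it. First I would make the identification explicit: an $N$-dimensional pure state is a unit vector in $\mathbb{C}^N \cong \mathbb{R}^{2N}$ (a global phase only shifts a phase-invariant function and is harmless), the quantity $\norm{\ket{\psi}-\ket{\varphi}}_2$ is exactly the chordal Euclidean distance, and $\mu_{\mathrm{Haar}}$ is precisely the uniform surface measure on the sphere $S^{2N-1}\subset\mathbb{R}^{2N}$. Thus the claim is the assertion that an $L$-Lipschitz function on $S^{2N-1}$ concentrates around its mean with sub-Gaussian tails whose exponent scales with the real dimension $2N$; the factor of $N$ (rather than $2N$) and the $9\pi^3$ constant will be absorbed into the bookkeeping.

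My main route would be L\'evy's original isoperimetric argument. I would invoke the spherical isoperimetric inequality (L\'evy--Schmidt): among all measurable subsets of $S^{2N-1}$ of a fixed measure, geodesic caps minimize the measure of their $t$-neighborhoods. Applying this to a set of measure $\tfrac12$ (whose extremal comparison is a hemisphere) yields the key estimate that any $A$ with $\mu(A)\ge\tfrac12$ has $\mu(A_t)\ge 1-\exp(-\Omega(N t^2))$, where $A_t$ is the $t$-neighborhood in geodesic distance. Let $m$ be a median of $f$. Taking $A=\{f\le m\}$ and $B=\{f\ge m\}$, each of measure $\ge\tfrac12$, the Lipschitz condition gives $\{f\le m+Lt\}\supseteq A_t$ (after converting geodesic to chordal distance, which costs at most a factor $\pi/2$) and symmetrically for $B$; combining the two tails gives $\Pr[\abs{f-m}\ge \eps]\le 2\exp(-\Omega(N\eps^2/L^2))$, i.e.\ concentration around the \emph{median}.

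Two pieces of bookkeeping then finish the proof and constitute the only real obstacles. First, the statement is phrased around the \emph{mean}, so I would convert: integrating the median tail bound shows $\abs{\Ex[f]-m}=O(L/\sqrt{N})$, which can be folded into the exponent at the cost of adjusting the constant. Second, I would have to track the geodesic-versus-chordal conversion and the precise Gaussian form of the cap measure carefully enough to land the stated constant $9\pi^3$; these $\pi$ factors are exactly where that constant originates. The genuinely hard ingredient---the spherical isoperimetric inequality---I would cite rather than reprove, since its proof via two-point symmetrization is classical and orthogonal to the present paper.

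Alternatively, and perhaps more cleanly, I would use the log-Sobolev route, which produces mean-concentration directly and sidesteps the median conversion. The unit sphere has positive Ricci curvature $(2N-2)$, so by the Bakry--\'Emery criterion its uniform measure satisfies a log-Sobolev inequality with constant $\Theta(1/N)$; Herbst's argument then turns any $L$-Lipschitz $f$ into a sub-Gaussian random variable with variance proxy $O(L^2/N)$, giving exactly the form $\Pr[\abs{f-\Ex f}\ge\eps]\le 2\exp(-\Omega(N\eps^2/L^2))$. Either way the substantive input is a single sharp geometric inequality for the sphere, which I would quote from the literature, leaving only routine constant-tracking to recover the bound as stated.
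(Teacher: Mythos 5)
The paper never proves this lemma: it is imported as a black box from the measure-concentration literature (the citation to Gerken's notes), so there is no internal proof to compare against, and your decision to ultimately cite the one hard geometric ingredient matches the paper's own treatment in spirit. Your outline is the standard and correct derivation: identifying $\mu_{\mathrm{Haar}}$ with the uniform measure on $S^{2N-1}\subset\mathbb{R}^{2N}$, applying the L\'evy--Schmidt isoperimetric inequality to get sub-Gaussian enlargement of half-measure sets, concentrating around the median, and then converting median to mean --- and the specific constant $9\pi^3$ in the statement is indeed exactly the artifact of the geodesic-versus-chordal conversion and the median-to-mean step in the standard references (e.g.\ Milman--Schechtman, and the form quoted by Hayden--Leung--Winter). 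One simplification worth noting: since chordal distance is dominated by geodesic distance, an $L$-Lipschitz function in the $\ell_2$ (chordal) sense is also $L$-Lipschitz in the geodesic sense, so your second route via Bakry--\'Emery (Ricci curvature $2N-2$ on $S^{2N-1}$) and Herbst gives mean-concentration directly with an exponent constant strictly better than $1/(9\pi^3)$; the stated bound then follows a fortiori with no delicate constant-tracking at all, which is the cleanest way to discharge your remaining ``bookkeeping'' obstacles.
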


\begin{lemma}
\label{lem:weyl_lipschitz}
For any $n$-qubit Weyl operator $W_x$, the function $f_x: \mathbb{S}^{2^n} \to \mathbb{R}$ defined by $f_x(\ket{\psi}) = \bra{\psi}W_x\ket{\psi}$ is $2$-Lipschitz.
\end{lemma}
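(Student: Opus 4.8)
The plan is to show that $f_x$ is Lipschitz by bounding how much the expectation value $\braket{\psi|W_x|\psi}$ can change between two nearby states. The key observation is that $W_x$ is a unitary operator (in fact, a Pauli operator up to phase), so it has operator norm $\norm{W_x}_{\mathrm{op}} = 1$. This bound on the operator norm will be the crucial quantity controlling the Lipschitz constant.

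First I would take two arbitrary pure states $\ket{\psi}, \ket{\varphi} \in \mathbb{S}^{2^n}$ and write the difference of the function values as
\[
f_x(\ket{\psi}) - f_x(\ket{\varphi}) = \braket{\psi|W_x|\psi} - \braket{\varphi|W_x|\varphi}.
\]
The standard trick here is to insert a telescoping term, rewriting this as
\[
\braket{\psi|W_x|\psi} - \braket{\varphi|W_x|\varphi} = \braket{\psi|W_x|\psi - \varphi} + \braket{\psi - \varphi|W_x|\varphi}.
\]
Each of the two terms can then be bounded by Cauchy-Schwarz together with the fact that $W_x$ is unitary: for instance, $\abs{\braket{\psi|W_x|\psi - \varphi}} \le \norm{W_x^\dagger \ket{\psi}}_2 \cdot \norm{\ket{\psi} - \ket{\varphi}}_2 = \norm{\ket{\psi} - \ket{\varphi}}_2$, since $\norm{W_x^\dagger\ket{\psi}}_2 = \norm{\ket{\psi}}_2 = 1$. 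The same bound applies to the second term using $\norm{\ket{\varphi}}_2 = 1$.

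Combining the two bounds via the triangle inequality yields
\[
\abs{f_x(\ket{\psi}) - f_x(\ket{\varphi})} \le 2 \norm{\ket{\psi} - \ket{\varphi}}_2,
\]
which is exactly the claim that $f_x$ is $2$-Lipschitz. One subtlety worth noting is that $f_x(\ket{\psi}) = \braket{\psi|W_x|\psi}$ is in general complex-valued (since $W_x$ need not be Hermitian), whereas L\'evy's Lemma as stated applies to real-valued functions; I would address this either by restricting attention to the real part (which is what ultimately matters, as $\abs{\braket{\psi|W_x|\psi}}^2$ is what appears downstream) or by observing that the $2$-Lipschitz bound holds for the modulus as well by the reverse triangle inequality, so both real and imaginary parts are individually $2$-Lipschitz. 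I do not anticipate any serious obstacle here; the entire argument is a routine application of Cauchy-Schwarz and the unitarity of $W_x$, and the only care needed is in handling the telescoping decomposition and the real-versus-complex issue cleanly.
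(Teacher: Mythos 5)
Your proof is correct, but it takes a genuinely different route from the paper. You telescope the difference as $\braket{\psi|W_x|\psi-\varphi} + \braket{\psi-\varphi|W_x|\varphi}$ and bound each term by Cauchy--Schwarz using only $\norm{W_x}_{\mathrm{op}} = 1$; the paper instead uses the spectral structure of $W_x$, writing $W_x = \Pi_+ - \Pi_-$ for the projectors onto its $\pm 1$ eigenspaces, bounding the $\Pi_+$ and $\Pi_-$ contributions separately via the triangle inequality and the contractivity of projections. Your argument is arguably cleaner and strictly more general: it applies verbatim to any operator of operator norm at most $1$, Hermitian or not, whereas the paper's decomposition presupposes that $W_x$ is Hermitian with $\pm 1$ eigenvalues. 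One remark on your final caveat: the worry that $f_x$ might be complex-valued is unfounded, since the phase convention $W_x = i^{p\cdot q}X^pZ^q$ makes $W_x$ Hermitian --- it is unitary and satisfies $W_x^2 = i^{2p\cdot q}(-1)^{p\cdot q}\,I = I$, so $W_x^\dagger = W_x$ and $\braket{\psi|W_x|\psi} \in \mathbb{R}$ (this is exactly what licenses the paper's $\Pi_+ - \Pi_-$ decomposition). So your fallback via real parts or the modulus is unnecessary, though harmless; with that observation removed, your proof stands as a complete and slightly more elementary alternative to the paper's.
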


\begin{proof}
Write $W_x = \Pi_+ - \Pi_-$ where $\Pi_+$ and $\Pi_-$ are the projectors onto the positive and negative eigenspaces of $W_x$, respectively. 
Then,

\begin{align*}
    \abs{f_x(\ket{\psi}) - f_x(\ket{\varphi})}
    &= \abs{\bra{\psi}W_x\ket{\psi} - \bra{\varphi}W_x\ket{\varphi}}\\
    &= \abs{\bra{\psi}\Pi_+\ket{\psi} - \bra{\varphi}\Pi_+\ket{\varphi} - \bra{\psi}\Pi_-\ket{\psi} + \bra{\varphi}\Pi_-\ket{\varphi}}\\
    &\le \abs{\bra{\psi}\Pi_+\ket{\psi} - \bra{\varphi}\Pi_+\ket{\varphi}} + \abs{\bra{\psi}\Pi_-\ket{\psi} + \bra{\varphi}\Pi_-\ket{\varphi}}\\
    &= \abs{\ \norm{\Pi_+\ket{\psi}}_2 - \norm{\Pi_+\ket{\varphi}}}_2 + \abs{ \norm{\Pi_-\ket{\psi}}_2 - \norm{\Pi_-\ket{\varphi}}_2}\\
    &\le \norm{\Pi_+(\ket{\psi} - \ket{\varphi})}_2 + \norm{\Pi_-(\ket{\psi} - \ket{\varphi})}_2\\
    &\le 2\norm{\ket{\psi} - \ket{\varphi}}_2,
\end{align*}
where the third and fifth lines apply the triangle inequality, and the fourth and sixth lines use the fact that $\Pi_+$ and $\Pi_-$ are projectors.
\end{proof}

\begin{corollary}
\label{cor:single_weyl_concentration}
Let $W_x$ be any $n$-qubit Weyl operator in which $x \neq 0$ (i.e. $W_x \neq I$). Then:
\[
\Pr_{\ket{\psi} \sim \mu_{\rm{Haar}}} \left[\abs{\braket{\psi | W_x | \psi}} \ge \eps \right] \le 2\exp\left(-\frac{2^n \eps^2}{36 \pi^3} \right).
\]
\end{corollary}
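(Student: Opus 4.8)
The plan is to apply L\'evy's Lemma (\cref{lem:levy}) directly to the function $f_x(\ket{\psi}) = \braket{\psi | W_x | \psi}$, which \cref{lem:weyl_lipschitz} has already shown to be $2$-Lipschitz. The only missing ingredient is the value of the Haar expectation $\Ex_{\ket{\psi} \sim \mu_{\rm{Haar}}}[f_x(\ket{\psi})]$, and once I establish that this equals $0$ for $x \neq 0$, the corollary follows immediately by substituting $N = 2^n$, $L = 2$, and $\Ex[f_x] = 0$ into \cref{lem:levy}.

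First I would compute the expectation. Writing $f_x(\ket{\psi}) = \trace(W_x \ketbra{\psi}{\psi})$ and using linearity of the trace,
\[
\Ex_{\ket{\psi} \sim \mu_{\rm{Haar}}}[f_x(\ket{\psi})] = \trace\!\left(W_x \, \Ex_{\ket{\psi} \sim \mu_{\rm{Haar}}}[\ketbra{\psi}{\psi}]\right) = \trace\!\left(W_x \cdot \frac{I}{2^n}\right) = \frac{\trace(W_x)}{2^n},
\]
where I invoke the first-moment identity $\Ex[\ketbra{\psi}{\psi}] = I/2^n$ for the Haar measure (by unitary invariance the mean density matrix commutes with every unitary, hence is a multiple of $I$, and the trace constraint fixes the constant to be maximally mixed). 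Since $x \neq 0$, the operator $W_x = i^{p \cdot q} X^p Z^q$ is a tensor product of single-qubit Paulis in which at least one factor is nontrivial ($X$, $Y$, or $Z$); as every nontrivial Pauli is traceless, $\trace(W_x) = 0$, and therefore $\Ex[f_x] = 0$.

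With the expectation in hand, I would invoke \cref{lem:levy} with $N = 2^n$ and $L = 2$. Because $\abs{f_x(\ket{\psi})} = \abs{\braket{\psi | W_x | \psi}}$ and $\Ex[f_x] = 0$, the bound becomes
\[
\Pr_{\ket{\psi} \sim \mu_{\rm{Haar}}}\!\left[\abs{\braket{\psi | W_x | \psi}} \ge \eps\right] \le 2\exp\!\left(-\frac{2^n \eps^2}{9\pi^3 \cdot 2^2}\right) = 2\exp\!\left(-\frac{2^n \eps^2}{36\pi^3}\right),
\]
which is exactly the claimed inequality. I do not expect a genuine obstacle here: the result is a routine concentration-of-measure consequence of the two preceding lemmas. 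The only point requiring care is the vanishing of the Haar mean, which rests on the tracelessness of nontrivial Weyl operators; this in turn relies on the convention that each $W_x$ is an (unsigned) Pauli string, so that for $x \neq 0$ it has at least one nonidentity tensor factor.
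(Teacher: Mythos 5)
Your proposal is correct and matches the paper's proof: both apply \cref{lem:levy} to the $2$-Lipschitz function $f_x$ from \cref{lem:weyl_lipschitz} with $N = 2^n$ and zero Haar mean. The only cosmetic difference is that you justify $\Ex[f_x] = 0$ via $\Ex[\ketbra{\psi}{\psi}] = I/2^n$ and the tracelessness of nontrivial Weyl operators, whereas the paper notes the equivalent fact that the eigenvalues of $W_x$ split evenly between $+1$ and $-1$.
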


\begin{proof}
    Define $f_x(\ket{\psi}) = \bra{\psi}W_x\ket{\psi}$ as in \Cref{lem:weyl_lipschitz}. By \Cref{lem:weyl_lipschitz}, we know that $f_x$ is $2$-Lipschitz. Additionally, observe that $\Ex[f] = 0$ over the Haar measure because exactly half of the eigenvalues of $W_x$ are $1$ and the other half are $-1$. Then the corollary follows from \Cref{lem:levy}.
\end{proof}

\begin{corollary}
\label{cor:all_weyl_concentration}
\[
\Pr_{\ket{\psi} \sim \mu_{\rm{Haar}}} \left[\exists x \neq 0 : \abs{\braket{\psi | W_x | \psi}} \ge \eps \right] \le 2^{2n + 1}\exp\left(-\frac{2^n \eps^2}{36 \pi^3} \right).
\]
\end{corollary}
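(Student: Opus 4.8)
The plan is to derive \Cref{cor:all_weyl_concentration} from \Cref{cor:single_weyl_concentration} by a simple union bound over all nonzero Weyl operators. First I would note that the strings $x \in \F_2^{2n}$ with $x \neq 0$ index exactly $2^{2n} - 1$ distinct Weyl operators $W_x \neq I$, and that for each such fixed $x$, \Cref{cor:single_weyl_concentration} bounds the probability (over the Haar-random draw of $\ket{\psi}$) that $\abs{\braket{\psi|W_x|\psi}} \ge \eps$ by $2\exp\!\left(-\tfrac{2^n \eps^2}{36\pi^3}\right)$.

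Next I would observe that the event $\{\exists x \neq 0 : \abs{\braket{\psi|W_x|\psi}} \ge \eps\}$ is precisely the union, over all nonzero $x$, of the individual tail events appearing in \Cref{cor:single_weyl_concentration}. Applying the union bound and then the single-operator estimate yields
\[
\Pr_{\ket{\psi} \sim \mu_{\rm{Haar}}}\!\left[\exists x \neq 0 : \abs{\braket{\psi|W_x|\psi}} \ge \eps\right] \le (2^{2n}-1)\cdot 2\exp\!\left(-\frac{2^n\eps^2}{36\pi^3}\right) \le 2^{2n+1}\exp\!\left(-\frac{2^n\eps^2}{36\pi^3}\right),
\]
where the final inequality merely uses $2(2^{2n}-1) \le 2^{2n+1}$.

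There is no genuine obstacle here: the concentration over the Haar measure was already established for each individual operator in \Cref{cor:single_weyl_concentration} via \Cref{lem:levy}, and crucially that bound holds for each fixed $x$ against the \emph{same} random state $\ket{\psi}$. Hence the union bound applies directly, with no need for any independence among the events, and the only quantitative input is the crude count of $2^{2n}-1 < 2^{2n}$ nonzero Weyl operators.
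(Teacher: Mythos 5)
Your proposal is correct and matches the paper's proof exactly: the paper likewise derives \cref{cor:all_weyl_concentration} by applying \cref{cor:single_weyl_concentration} together with a union bound over the (at most $2^{2n}$) Weyl operators. Your writeup merely makes the count $2(2^{2n}-1) \le 2^{2n+1}$ explicit, which is a fine elaboration of the same one-line argument.
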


\begin{proof}
    This follows from \Cref{cor:single_weyl_concentration} and a union bound over all $2^{2n}$ possible Weyl operators.
\end{proof}

Note that if $\eps \ge \frac{1}{\mathrm{poly}(n)}$, then the probability bound in \Cref{cor:all_weyl_concentration} is doubly-exponentially small.

We have shown that, with high probability, all Weyl measurements (except $W_x = I$) are close to $0$. 
We use this to complete the proof of \cref{lem:expected-measurement}.

\begin{proof}[Proof of second part of \cref{lem:expected-measurement}]
Suppose $\ket{\psi}$ is a Haar-random state.
By \cref{cor:all_weyl_concentration}, for all $W_x \neq I$, $\abs{\braket{\psi|W_x|\psi}}^2  = 2^n p(x) \leq \eps^2$ with probability $1-2^{2n+1}\exp\left(-\frac{2^n\eps^2}{36 \pi^3}\right)$. Therefore by \cref{fact:blr-similarity} and \cref{prop:fourier-coefficients}, 
\begin{align*}
\Ex_{x \sim q_{\psi}}\lbrak \abs{\braket{\psi|W_x|\psi}}^2 \rbrak 
&= 32^n \sum_{x, y \in \F_2^n} \widehat{p}(x, y)^3 \\
&= 4^n\sum_{w, v \in \F_2^n}p(v,w)^3\\
&= 4^n \left(\frac{1}{8^n} + \sum_{\substack{ w, v \in \F_2^n\\w,v \neq 0}} p(v,w)^3\right)\\
& \leq \frac{1 + (4^n-1)\eps^6}{2^n},
\end{align*}
with probability at least $1-2^{2n+1}\exp\left(-\frac{2^n\eps^2}{36 \pi^3}\right)$.
By setting $\epsilon^2 = \frac{1}{2^{n/6}}\left(\frac{2^n-2^{n/2}}{4^n-1}\right)^{1/3}$, we get 
\[
\Ex_{x \sim q_\psi}\lbrak \abs{\braket{\psi|W_x|\psi}}^2 \rbrak  \leq \frac{1}{2^{n/2}}
\]
with probability at least $1-2^{2n+1}\exp\left(-\frac{2^{5n/6}}{36 \pi^3}\left(\frac{2^n-2^{n/2}}{4^n-1}\right)^{1/3}\right)$, which is at least $1-\exp\left(-2^{n/2-15}\right)$ for $n \geq 33$.
\end{proof}
\section{Algorithm and Sample Complexity Analysis}\label{sec:main-result}

We are now ready to state and analyze our algorithm that distinguishes between Haar-random states and states with low stabilizer complexity. 
Our algorithm uses the fact that we can efficiently sample from $q_\psi$ (via Bell difference sampling) and efficiently estimate $\abs{\braket{\psi|W_x|\psi}}^2$ for any given $x \in \F_2^{2n}$, using quantum measurements.
By combining these subroutines, we construct an unbiased estimator for $\Ex_{x \sim q}\left[\abs{\braket{\psi| W_x |\psi}}^2\right]$. Motivated by \cref{lem:expected-measurement}, if our estimator exceeds a certain threshold we determine that the input state has low stabilizer complexity; otherwise, we determine that the state is Haar-random.
For the remainder of this section, $\eta \coloneqq  \Ex_{x \sim q}\left[\abs{\braket{\psi| W_x |\psi}}^2\right]$. 

\vspace{\baselineskip}
\begin{algorithm}[H]
\SetKwInOut{Promise}{Promise}
\caption{Distinguishing Low-Stabilizer-Complexity States from Haar-Random}
\label{alg:distinguisher}
\DontPrintSemicolon
\KwInput{Black-box access to copies of $\ket\psi$}
\Promise{$\ket{\psi}$ is Haar-random or has stabilizer fidelity at least $\frac{1}{k}$}
\KwOutput{$1$ if $\ket{\psi}$ has stabilizer fidelity at least $\frac{1}{k}$ and $0$ if $\ket{\psi}$ is Haar-random}
Let $m = 60k^{12}\ln(1/\delta)$.

\RepTimes{$m$}{
    Perform Bell difference sampling to obtain $W_x \sim q_{\psi}$.
    
    Perform the measurement $W_x^{\otimes 2}$ on $\ket{\psi}^{\otimes 2}$. Let $X_i \in \{\pm 1\}$ denote the measurement outcome.
}

Set $\hat{\eta} = \frac{1}{m} \sum_i X_i$. Return $1$ if $\hat{\eta} \geq \frac{2}{3k^6}$, and $0$ otherwise.
\end{algorithm}
\vspace{\baselineskip}

\begin{theorem}
\label{thm:main-thm-alg}
Let $\ket{\psi}$ be an unknown $n$-qubit pure state for some $n \geq 33$, and let $k \leq \frac{4}{5} 2^{n/12}$. \cref{alg:distinguisher} distinguishes whether $\ket{\psi}$ is Haar-random or a state with stabilizer fidelity at least $\frac{1}{k}$, promised that one of these is the case. 
The algorithm uses $O\left(k^{12}\log(1/\delta)\right)$ copies of $\ket{\psi}$ and $O(n k^{12} \log(1/\delta))$ time, and distinguishes the two cases with success probability at least $1-\delta$.
\end{theorem}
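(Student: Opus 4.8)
The plan is to show that $\hat\eta$ is an unbiased estimator of $\eta$, use \cref{lem:expected-measurement} to pin down $\eta$ in the two cases, verify that the hypothesis $k \le \frac{4}{5}2^{n/12}$ places the decision threshold $\frac{2}{3k^6}$ strictly between the two regimes, and finally invoke Hoeffding's inequality to conclude that $m = 60k^{12}\ln(1/\delta)$ samples drive the error below $\delta$.

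First I would argue unbiasedness. Conditioned on Bell difference sampling returning $W_x$, which by \cref{lem:bell_diff_sampling} occurs with probability $q_\psi(x)$, measuring the observable $W_x \otimes W_x$ on $\ket{\psi}^{\otimes 2}$ produces a $\{\pm 1\}$-valued outcome with conditional expectation $\bra{\psi}^{\otimes 2}(W_x \otimes W_x)\ket{\psi}^{\otimes 2} = \braket{\psi|W_x|\psi}^2 = \abs{\braket{\psi|W_x|\psi}}^2$, where the last equality uses that $W_x$ is Hermitian so the amplitude is real. Averaging over $x \sim q_\psi$ gives $\Ex[X_i] = \eta$. Since the $m$ iterations act on fresh copies with independent measurement randomness, the $X_i$ are i.i.d.\ $\{\pm1\}$-valued with mean $\eta$, so $\Ex[\hat\eta] = \eta$.

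Next I would verify that the threshold separates the two cases by a margin of $\Omega(1/k^6)$. If the stabilizer fidelity is at least $\frac{1}{k}$, then \cref{lem:expected-measurement} gives $\eta \ge \frac{1}{k^6}$, which exceeds $\frac{2}{3k^6}$ by $\frac{1}{3k^6}$. If $\ket{\psi}$ is Haar-random, then for $n \ge 33$, except with probability $\exp(-2^{n/2-15})$, we have $\eta \le 2^{-n/2}$; since $k \le \frac{4}{5}2^{n/12}$ gives $2^{-n/2} \le (4/5)^6 k^{-6}$ and $(4/5)^6 < \frac{2}{3}$, the value $\eta$ falls below $\frac{2}{3k^6}$ by at least $\bigl(\frac{2}{3} - (4/5)^6\bigr)k^{-6} = \Omega(1/k^6)$. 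Thus, conditioned on the overwhelmingly likely Haar event, $\eta$ lies on the correct side of $\frac{2}{3k^6}$ in both cases, separated by a gap $t = \Omega(1/k^6)$.

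Finally, Hoeffding's inequality applied to the relevant one-sided deviation of the $\{\pm 1\}$-valued i.i.d.\ average bounds each misclassification probability by $\exp(-mt^2/2)$; substituting $t = \Theta(1/k^6)$ and $m = 60k^{12}\ln(1/\delta)$ makes the exponent a constant multiple of $\ln(1/\delta)$ strictly exceeding $1$ in both cases, so each error is at most $\delta$ (after adding the $\exp(-2^{n/2-15})$ Haar-failure term, which is negligible for large $n$). For resources, each of the $m$ iterations uses a constant number of copies—four for Bell difference sampling and two for the measurement—and runs in $O(n)$ time, since Bell difference sampling takes $O(n)$ time \cite{montanaro2017learning} and measuring a Pauli observable on two copies is likewise $O(n)$; this gives $O(k^{12}\log(1/\delta))$ copies and $O(nk^{12}\log(1/\delta))$ time. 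I expect the only real subtlety to be the constant bookkeeping: confirming the gap is genuinely $\Omega(1/k^6)$ on \emph{both} sides of the threshold under the hypothesis $k \le \frac{4}{5}2^{n/12}$, and that the constant $60$ is large enough to clear $\ln(1/\delta)$ in the Hoeffding exponent for both error events simultaneously; the conceptual ingredients (unbiasedness and \cref{lem:expected-measurement}) are immediate.
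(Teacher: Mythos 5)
Your proposal is correct and takes essentially the same route as the paper's proof: the same unbiasedness computation $\Ex[X_i] = \Ex_{x\sim q_\psi}\abs{\braket{\psi|W_x|\psi}}^2$, the same invocation of \cref{lem:expected-measurement} around the threshold $\frac{2}{3k^6}$, the same use of $k \le \frac{4}{5}2^{n/12}$ (via $(4/5)^6 < \frac{1}{3}$) to place $2^{-n/2}$ below $\frac{1}{3k^6}$, and Hoeffding on both one-sided deviations, with identical per-iteration accounting of $6$ copies and $O(n)$ time. The only cosmetic difference is bookkeeping: you run Hoeffding once with the full $m = 60k^{12}\ln(1/\delta)$ to get error $\delta^{\Omega(1)}$ directly, whereas the paper first achieves constant failure probability $\frac{1}{15}$ with $O(k^{12})$ samples and then boosts by majority vote over $O(\log(1/\delta))$ repetitions, both handling the additive $\exp\left(-2^{n/2-15}\right)$ Haar-failure term in the same way.
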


\begin{proof}
Following the notation in \cref{alg:distinguisher},
$X_i$ is the outcome of the measurement on the $i$th iteration of the algorithm loop.
Observe that for any $X_i$, 
\[
\Es{x\sim q_\psi,\\\text{meas. by $W_x^{\otimes 2}$}}[X_i] = \Ex_{x \sim q_\psi} \bra{\psi^{\otimes 2}}W_x^{\otimes 2}\ket{\psi^{\otimes 2}} = \Ex_{x \sim q_\psi} \abs{\braket{\psi|W_x|\psi}}^2= \eta.
\]
Therefore, $\hat{\eta} = \frac{1}{m} \sum_i X_i$ is an unbiased estimator of $\eta$ (i.e., $\Ex[\hat{\eta}] = \eta$).

Suppose $\ket{\psi}$ has stabilizer fidelity at least $\frac{1}{k}$. Then, our algorithm fails when $\hat{\eta} < \frac{2}{3k^6}$. Hence,
\[
\Pr[\text{\cref{alg:distinguisher} fails}] = \Pr\left[\hat{\eta} < \frac{2}{3k^6} \right] = \Pr\left[ \hat{\eta} - \eta < \frac{2}{3k^6} - \eta \right]  \leq \Pr\left[\hat{\eta} - \eta \leq -\frac{1}{3k^6}\right],
\]
where the last inequality follows from \cref{lem:expected-measurement}.
By Hoeffding's inequality, 
\[
\Pr\left[\hat{\eta} - \eta \leq -\frac{1}{3k^6}\right] \leq \exp\left(-\frac{m }{18 k^{12}} \right).
\]
Therefore, $m \geq 18k^{12} \ln(15) = 49k^{12}$ samples suffice for the failure probability to be at most $\frac{1}{15}$.

Now suppose $\ket{\psi}$ is Haar-random. Then, our algorithm fails when $\hat{\eta} \geq \frac{2}{3k^6}$.
By \cref{lem:expected-measurement}, $\eta \leq 2^{-n/2}$ with probability at least $1-\exp\left(-2^{n/2-15}\right) >= 1-e^{-2\sqrt{2}}$ for $n \geq 33$.
Assuming that $\eta \leq 2^{-n/2}$, 
\[
\Pr[\text{\cref{alg:distinguisher} fails}] = \Pr\left[\hat{\eta} \geq \frac{2}{3k^6} \right] = \Pr\left[ \hat{\eta} - \eta \geq \frac{2}{3k^6} - \eta \right]  \leq \Pr\left[\hat{\eta} - \eta \geq \frac{1}{2k^6} - 2^{-n/2} \right].
\]
Once again, by Hoeffding's inequality, 
\begin{align*}
\Pr\left[\hat{\eta} - \eta \geq \frac{1}{2k^6} - 2^{-n/2} \right] 
&\leq \exp\left(- \frac{m}{2} \left(\frac{2}{3k^6} - 2^{-n/2}\right)^2 \right) \\
&\leq \exp\left(- \frac{m}{2} \left(\frac{2}{3k^6} - \frac{1}{3k^6}\right)^2 \right) \\
&\leq \exp\left(- \frac{m}{18 k^{12}}\right).
\end{align*}
Therefore, $m \geq -18 k^{12} \ln\left(\frac{1}{15} - e^{-2\sqrt{2}}\right) \geq 88k^{12}$ samples suffice for the failure probability to be at most $\frac{1}{15} - e^{-2\sqrt{2}}$.   
By the union bound, the failure probability is at most $\frac{1}{15}$, where the randomness is over both the Haar measure and the quantum measurements.

We have shown that in either case we output the correct answer with probability at least $\frac{14}{15}$.
Using the Chernoff-Hoeffding theorem, the success probability can be boosted from $\frac{14}{15}$ to at least $1 - \delta$ by doing $\frac{2}{3}\ln (1/\delta)$ repetitions of \cref{alg:distinguisher} and taking the majority answer.
Since each iteration of the algorithm loop uses $6$ copies of $\ket{\psi}$, \cref{alg:distinguisher} consumes $O\left(k^{12} \log(1/\delta)\right)$ copies in total.
Finally, Bell difference sampling and performing the measurement $W_x^{\otimes 2}$ takes $O(n)$ time, so the total runtime is $O\left(n k^{12} \log(1/\delta)\right)$.
\end{proof}

All of these results also apply to states with stabilizer extent at most $k$, since by \cref{claim:metric-relations}, such states have stabilizer fidelity at least $\frac{1}{k}$.
\begin{corollary}
\label{cor:main-thm-alg-rank}
Let $\ket{\psi}$ be an unknown $n$-qubit pure state for $n \geq 33$, and let $k \leq \frac{4}{5} 2^{n/12}$. \cref{alg:distinguisher} distinguishes whether $\ket{\psi}$ is Haar-random or a state with stabilizer extent at most $k$, promised that one of these is the case. 
The algorithm uses $O\left(k^{12}\log(1/\delta)\right)$ copies of $\ket{\psi}$ and distinguishes the two cases with success probability at least $1-\delta$.
\end{corollary}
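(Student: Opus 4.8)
The plan is to reduce the corollary to \cref{thm:main-thm-alg} by a single structural observation: every state of low stabilizer extent automatically has non-negligible stabilizer fidelity. Concretely, \cref{claim:metric-relations} gives $\extent(\ket{\psi}) \geq 1/\fidelity_{\stabset}(\ket{\psi})$, so whenever $\extent(\ket{\psi}) \leq k$ we may rearrange to obtain $\fidelity_{\stabset}(\ket{\psi}) \geq 1/\extent(\ket{\psi}) \geq 1/k$. Thus the promise ``$\ket{\psi}$ is Haar-random or has stabilizer extent at most $k$'' is subsumed by the promise ``$\ket{\psi}$ is Haar-random or has stabilizer fidelity at least $1/k$'' that is already handled by \cref{thm:main-thm-alg}: in the extent case the input satisfies the fidelity hypothesis, while the Haar-random case is literally identical.

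Given this, I would simply invoke \cref{thm:main-thm-alg} verbatim. The same \cref{alg:distinguisher}, with the same threshold $\tfrac{2}{3k^6}$ and the same choice $m = O(k^{12}\log(1/\delta))$, distinguishes the two cases with success probability at least $1-\delta$, and the copy and time complexities are inherited unchanged. There is essentially no technical obstacle to overcome here: the entire content lives in \cref{claim:metric-relations} and \cref{thm:main-thm-alg}, and the corollary is a one-line consequence of combining them. The only point worth verifying is that the parameter regime $n \geq 33$ and $k \leq \tfrac{4}{5}\,2^{n/12}$ transfers without modification, which it does, since these hypotheses are phrased purely in terms of $n$ and $k$ and are insensitive to whether the promise is stated via extent or fidelity.
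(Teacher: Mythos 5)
Your proposal is correct and matches the paper's own reasoning exactly: the paper derives this corollary by noting that, by \cref{claim:metric-relations}, stabilizer extent at most $k$ implies stabilizer fidelity at least $\frac{1}{k}$, and then invoking \cref{thm:main-thm-alg} unchanged. Nothing is missing, and the parameter regime transfers just as you say.
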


The above result immediately implies that output states of Clifford+$T$ circuits with few $T$-gates cannot be computationally pseudorandom.

\begin{corollary}
\label{cor:main-thm-pseudo}
Any family of Clifford+$T$ circuits that produces an ensemble of $n$-qubit computationally pseudorandom quantum states must use at least $\omega(\log n)$ $T$-gates.
\end{corollary}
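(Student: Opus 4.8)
The plan is to derive Corollary~\ref{cor:main-thm-pseudo} by contraposition: I will show that if a family of Clifford+$T$ circuits uses only $O(\log n)$ $T$-gates, then its output states can be efficiently distinguished from Haar-random, and therefore cannot be computationally pseudorandom. The entire argument is a chain linking the number of $T$-gates to stabilizer extent, stabilizer extent to the efficiency of our distinguisher, and efficient distinguishing to the failure of pseudorandomness.

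First I would suppose, toward the contrapositive, that a family of Clifford+$T$ circuits uses $t = O(\log n)$ $T$-gates to produce an ensemble of $n$-qubit states $\ket{\psi}$. By \cref{lemma:clifford-t-extent}, each such state satisfies
\[
\extent(\ket{\psi}) \leq \left(1 + \tfrac{1}{\sqrt{2}}\right)^t = 2^{O(t)} = \poly(n),
\]
since $\left(1 + \tfrac{1}{\sqrt 2}\right)^{c \log n} = n^{c \log_2(1 + 1/\sqrt 2)}$ is polynomial in $n$ whenever $t \leq c \log n$. Setting $k \coloneqq \poly(n)$ to be this extent bound, the states have stabilizer extent at most $k$.

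Next I would invoke \cref{cor:main-thm-alg-rank} (equivalently \cref{thm:main-thm-alg} via \cref{claim:metric-relations}) with this polynomial $k$. The side condition $k \leq \tfrac{4}{5} 2^{n/12}$ is satisfied for all sufficiently large $n$, since $k = \poly(n)$ grows subexponentially. The distinguisher then uses $O(k^{12} \log(1/\delta)) = \poly(n)$ copies and $O(n k^{12} \log(1/\delta)) = \poly(n)$ time to distinguish $\ket{\psi}$ from Haar-random with, say, constant advantage. This is precisely an efficient (polynomial-time, polynomial-copy) quantum algorithm achieving non-negligible distinguishing advantage, which by the definition of computational pseudorandomness (\cite{Ji10.1007/978-3-319-96878-0_5}) contradicts the assumption that the ensemble is pseudorandom.

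\emph{The main obstacle}—though it is more bookkeeping than a genuine difficulty—is tracking the quantifiers carefully so the polynomial in $n$ that bounds the extent translates into a genuinely efficient distinguisher for \emph{every} $n$ in the contrapositive. One must confirm that (i) the logarithmic-to-polynomial conversion via $\left(1+\tfrac{1}{\sqrt 2}\right)^t$ indeed yields $k = \poly(n)$ (requiring $t = O(\log n)$, which is the negation of $\omega(\log n)$), and (ii) the resulting copy and time complexities $\poly(n)$ remain subexponential so the promise $k \leq \tfrac45 2^{n/12}$ eventually holds. Once these are in place, the contradiction with pseudorandomness is immediate, completing the proof.
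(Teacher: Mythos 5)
Your proposal is correct and follows essentially the same route as the paper: both argue by contraposition, use \cref{lemma:clifford-t-extent} to convert $O(\log n)$ $T$-gates into a $\poly(n)$ bound on stabilizer extent, and then invoke \cref{cor:main-thm-alg-rank} to get a polynomial-time distinguisher with constant advantage, contradicting the pseudorandomness definition of \cite{Ji10.1007/978-3-319-96878-0_5}. Your explicit check of the side condition $k \leq \frac{4}{5}2^{n/12}$ and the exponent computation $n^{c\log_2(1+1/\sqrt{2})}$ match the paper's $n^{\alpha K}$ with $\alpha \leq 0.7716$.
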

\begin{proof}
Consider any ensemble of states wherein each state in the ensemble is the output of some Clifford+$T$ circuit with at most $K \log n$ $T$-gates. By \cref{lemma:clifford-t-extent}, the stabilizer extent of any such state $\ket{\psi}$ is at most $n^{\alpha K}$ for $\alpha \leq 0.7716$. 
By \cref{cor:main-thm-alg-rank}, on input copies of $\ket{\psi}$, \cref{alg:distinguisher} 
takes $O(n^{12\alpha K+1}) \le \poly(n)$ time and outputs $1$ with probability at least $2/3$. 
On the other hand, if $\ket{\psi}$ is a Haar-random state then the same algorithm outputs $1$ with probability at most $\frac{1}{3}$. As such, the algorithm's distinguishing advantage between the ensemble and the Haar measure is non-negligible. This is to say that the ensemble cannot be pseudorandom under the definition of \cite{Ji10.1007/978-3-319-96878-0_5}.
\end{proof}

\section{Open Problems}
\label{sec:discussion}

An immediate direction for future work is to improve the sample complexity of our algorithm, or to prove sample complexity lower bounds.
One can also endeavour to improve other features of our algorithm: Is it possible to remove the need for entangled measurements?\footnote{
The optimal algorithms for learning and testing stabilizer states use entangled measurements. So, a first step would be to understand how many separable measurements are required to separate stabilizer states from Haar-random.} Or, is it possible to show that entangled measurements are in any sense necessary?
Are there quantum measurements that allow us to sample from $p_\psi$ directly (rather than $q_\psi$)?

Beyond that, can Bell difference sampling be used for learning and/or property testing stabilizer-extent-$k$ states?
For stabilizer states ($k=1$), a $6$-query property testing algorithm is given by \cite{gross2021schur} and a $\Theta(n)$-query learning algorithm is given by \cite{montanaro2017learning}. Both algorithms rely on Bell difference sampling and are asymptotically optimal. We ask if there are generalizations of these algorithms for states with higher stabilizer complexity, similar to the question that was raised in \cite{arunachalam2022phase}.

\begin{question}
    Is there a $\poly(k)$-query algorithm for property testing stabilizer-extent-$k$ states? Likewise, is there a $\poly(n,k)$-time algorithm for learning stabilizer-extent-$k$ states?
\end{question}

Our results on stabilizer extent are due to the fact that extent and fidelity are inversely related. 
Is it possible to relate \textit{stabilizer rank} (a closely-related complexity measure, denoted by $\chi$) and stabilizer fidelity?
For instance, proving that, for all states $\ket{\psi}$,
\[
\fidelity_{\stabset}(\ket{\psi})^{-1} \leq \chi(\ket{\psi})^c,\quad\text{for any constant $c$,}
\] 
would imply that our algorithm can distinguish low-stabilizer-rank states from Haar-random. 
However, proving such a relation for even $c < \frac{\alpha n}{\log n}$ for $\alpha \leq 0.2284$ would imply super-linear lower bounds on the stabilizer rank of Clifford magic states, a long-standing open problem.

One can also ask if the lower bound on the number of $T$-gates necessary for computationally pseudorandom states can be improved.
\begin{question}
How many $T$-gates are necessary for a family of Clifford+$T$ circuits to produce an ensemble of $n$-qubit computationally pseudorandom states?
\end{question}

We remark that any improvements to our $\log n$ lower bound would require techniques beyond the ones used in our paper. 
Indeed, in \cref{sec:magic-extent} we show that one can hope for at most a quadratic improvement in the relationship between $\eta$ and stabilizer fidelity. Such an improvement would only yield constant-factor improvements on the number of $T$-gates necessary to prepare computationally pseudorandom states. 

On the other hand, we are not aware of any attempts to optimize the $T$-gate count for plausible constructions of $n$-qubit pseudorandom states. The best upper bound we know of is the essentially trivial bound of $O(n)$, based on constructions of with $O(n)$ general gates. This is because pseudorandom states can be constructed from pseudorandom functions (PRFs) with constant overhead \cite{brakerski10.1007/978-3-030-36030-6_10}, and PRFs are believed to be constructible in linear time \cite{ishai_10.1145/1374376.1374438, fan_10.1145/3519935.3520010}.\footnote{Technically, we are not sure whether the PRFs constructed in \cite{ishai_10.1145/1374376.1374438, fan_10.1145/3519935.3520010} are secure against quantum adversaries, which is necessary for instantiating \cite{brakerski10.1007/978-3-030-36030-6_10}'s construction, but we consider it reasonable to conjecture that linear-time quantum-secure PRFs exist.}

\section*{Acknowledgments}
We thank Scott Aaronson for helpful comments. SG, VI, DL are supported by Scott Aaronson's Vannevar Bush Fellowship from the US Department of Defense, the Berkeley NSF-QLCI CIQC Center, a Simons Investigator Award, and the Simons ``It from Qubit'' collaboration. WK is supported by an NDSEG Fellowship.
\bibliographystyle{alphaurl}
\bibliography{refs}

\newpage
\appendix

\section{Algorithm Improvements via State Preparation Unitary}\label{sec:qae}

When given access to a state preparation unitary for $\ket{\psi}$ (and its inverse), denoted by $U$ and $U^\dagger$, 
we can improve the sample and time complexities of our algorithm to $O\left( k^3 \log(1/\delta) \right)$ and $O\left( nk^3 \log(1/\delta) \right)$, respectively,  at the cost of $O\left( k^3 \log(1/\delta) \right)$ queries to $U$ and $U^\dagger$. 

Access to $U$ and $U^\dagger$ allows us to run quantum amplitude estimation (QAE) as a subroutine in our algorithm. 
Recall the well-known result of Brassard, H{\o}yer, Mosca, and Tapp:

\begin{theorem}[Quantum Amplitude Estimation (Theorem 12 in \cite{Brassard_2002})]\label{thm:qae}
Let $\Pi$ be a projector and $\ket{\psi}$ be an $n$-qubit pure state such that $\braket{\psi|\Pi|\psi} = \eta$.
Given access to the unitary transformations $R_{\Pi} = 2\Pi - I$ and $R_{\psi} = 2 \ket\psi\!\!\bra\psi - I$, there exists a quantum algorithm that outputs $\hat{\eta}$ such that 
\[\abs{\hat{\eta}- \eta} \leq \frac{2 \pi \sqrt{\eta(1-\eta)}}{m} + \frac{\pi^2}{m^2} \]
with probability at least $\frac{8}{\pi^2}$. The algorithm makes $m$ calls to $R_\Pi$ and $R_\psi$.
\end{theorem}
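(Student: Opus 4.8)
The plan is to prove this via the standard reduction of amplitude estimation to \emph{quantum phase estimation} applied to a Grover-type iterate. First I would define the amplitude amplification operator
$$
Q \coloneqq -R_\psi R_\Pi = -(2\ketbra{\psi}{\psi} - I)(2\Pi - I),
$$
which is built entirely from the two reflections we are given, so that $m$ applications of $Q$ cost exactly $m$ calls to each of $R_\Pi$ and $R_\psi$.

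The key structural observation is that $Q$ preserves a two-dimensional subspace. Writing $\eta = \sin^2\theta$ for some $\theta \in [0, \pi/2]$, I would decompose $\ket{\psi} = \sin\theta \ket{\psi_1} + \cos\theta\ket{\psi_0}$, where $\ket{\psi_1} \coloneqq \Pi\ket{\psi}/\norm{\Pi\ket{\psi}}_2$ and $\ket{\psi_0} \coloneqq (I-\Pi)\ket{\psi}/\norm{(I-\Pi)\ket{\psi}}_2$ are the normalized ``good'' and ``bad'' components. On $\mathrm{span}\{\ket{\psi_0}, \ket{\psi_1}\}$, the operator $R_\Pi$ acts as $+1$ on $\ket{\psi_1}$ and $-1$ on $\ket{\psi_0}$, while $R_\psi$ reflects about $\ket{\psi}$; a direct calculation shows $Q$ acts as a rotation by angle $2\theta$ in this plane. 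Consequently $Q$ restricted to this subspace has eigenvalues $e^{\pm 2 i \theta}$ with eigenvectors $\ket{\psi_\pm}$, and $\ket{\psi}$ lies entirely inside the subspace as a balanced superposition of $\ket{\psi_+}$ and $\ket{\psi_-}$.

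Next I would run quantum phase estimation on the unitary $Q$, using $\ket{\psi}$ as the input register together with $m$ controlled applications of $Q$ followed by an inverse quantum Fourier transform. Because $\ket{\psi}$ is an equal-weight superposition of the two eigenvectors $\ket{\psi_\pm}$ carrying phases $\pm 2\theta$, measuring the phase register returns an $m$-outcome estimate of either $\theta/\pi$ or $1 - \theta/\pi$; in both branches the final map $\hat\eta = \sin^2(\pi \cdot \text{outcome})$ yields the same value, so the two-fold degeneracy does not corrupt the estimate. The standard analysis of phase estimation --- controlling the Fej\'er-kernel measurement distribution produced by the inverse QFT --- shows that with probability at least $8/\pi^2$ the recovered angle $\tilde\theta$ satisfies $\abs{\tilde\theta - \theta} \le \pi/m$. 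This is the step I expect to be the main obstacle: obtaining the precise constant $8/\pi^2$ and the precision $\pi/m$ requires the careful Fourier-analytic bound on the phase-estimation output distribution, rather than any crude concentration argument.

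Finally, I would convert the angle estimate into the amplitude estimate $\hat\eta = \sin^2\tilde\theta$ and bound the induced error. Using the identity $\sin^2\tilde\theta - \sin^2\theta = \sin(\tilde\theta - \theta)\,\sin(\tilde\theta + \theta)$, the bounds $\abs{\tilde\theta - \theta}\le \pi/m$ and $\abs{\sin\Delta} \le \abs{\Delta}$, and the fact that $\sin(2\theta) = 2\sqrt{\eta(1-\eta)}$, one obtains
$$
\abs{\hat\eta - \eta} \le \abs{\sin(2\theta)}\cdot\frac{\pi}{m} + \frac{\pi^2}{m^2} = \frac{2\pi\sqrt{\eta(1-\eta)}}{m} + \frac{\pi^2}{m^2},
$$
which matches the claimed guarantee and holds with probability at least $8/\pi^2$ using $m$ calls to each of $R_\Pi$ and $R_\psi$.
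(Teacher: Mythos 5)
Your proposal is correct, and it reconstructs essentially the canonical argument: the paper itself does not prove this statement but imports it verbatim as Theorem 12 of Brassard--H{\o}yer--Mosca--Tapp, whose original proof is exactly your route (the iterate $Q = -R_\psi R_\Pi$ acting as a rotation by $2\theta$ on $\mathrm{span}\{\ket{\psi_0},\ket{\psi_1}\}$ with $\eta = \sin^2\theta$, phase estimation on the degenerate pair $e^{\pm 2i\theta}$, the Fej\'er-kernel $8/\pi^2$ bound, and the error conversion via $\sin^2\tilde\theta - \sin^2\theta = \sin(\tilde\theta+\theta)\sin(\tilde\theta-\theta)$, which is their Lemma 7). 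The one step you should make explicit in the final write-up is the bound $\abs{\sin(\tilde\theta+\theta)} \le \abs{\sin 2\theta} + \abs{\tilde\theta - \theta}$, which is what turns your identity into the stated $2\pi\sqrt{\eta(1-\eta)}/m + \pi^2/m^2$ guarantee.
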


\begin{corollary}\label{cor:qae}
Let $\Pi$, $\ket{\psi}$, $R_\Pi$, and $R_\psi$ be the same as in \cref{thm:qae}.
There exists a quantum algorithm that outputs $\hat{\eta}$ such that 
\[\abs{\hat{\eta}- \eta} \leq \eps \]
with probability at least $\frac{8}{\pi^2}$. The algorithm makes no more than
\[\pi \frac{\sqrt{\eta (1-\eta) + \eps}}{\eps}\]
calls to $R_\Pi$ and $R_\psi$.
\end{corollary}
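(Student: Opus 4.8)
The plan is to treat \cref{cor:qae} as a direct repackaging of \cref{thm:qae}: instead of fixing the number of oracle calls $m$ and reading off the resulting additive error, I would fix the target error $\eps$ and solve for how large $m$ must be. Since \cref{thm:qae} already supplies an estimator achieving $\abs{\hat\eta - \eta} \le \frac{2\pi\sqrt{\eta(1-\eta)}}{m} + \frac{\pi^2}{m^2}$ with probability at least $\frac{8}{\pi^2}$ using $m$ calls to the same reflections $R_\Pi$ and $R_\psi$, it suffices to choose $m$ large enough that this right-hand side is at most $\eps$; the projector $\Pi$, the reflections, and the $\frac{8}{\pi^2}$ success probability are then all inherited unchanged.

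Concretely, writing $a \coloneqq \eta(1-\eta)$, I would impose
\[
\frac{2\pi\sqrt{a}}{m} + \frac{\pi^2}{m^2} \le \eps .
\]
Substituting $t = 1/m$ turns this into the quadratic inequality $\pi^2 t^2 + 2\pi\sqrt{a}\,t - \eps \le 0$, and since the left-hand side is increasing in $t > 0$, the constraint is exactly $t \le \frac{1}{\pi}\bigl(\sqrt{a+\eps} - \sqrt{a}\bigr)$, i.e.
\[
m \ge \frac{\pi}{\sqrt{a+\eps}-\sqrt{a}} = \frac{\pi\bigl(\sqrt{a+\eps}+\sqrt{a}\bigr)}{\eps},
\]
where the last step rationalizes the denominator. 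Because the error bound of \cref{thm:qae} is monotonically decreasing in $m$, \emph{any} $m$ at least this threshold also achieves error at most $\eps$, so I would take $m$ to be the least integer meeting the bound and then simplify the threshold (bounding $\sqrt{a}\le\sqrt{a+\eps}$ in the numerator) to reach a bound of the stated shape $\pi\tfrac{\sqrt{\eta(1-\eta)+\eps}}{\eps}$.

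I do not expect a genuine obstacle here, as the argument is essentially mechanical. The only points requiring care are (i) correctly inverting the two-term quadratic rather than naively balancing the two error contributions separately, (ii) the harmless rounding of $m$ up to an integer, and (iii) the bookkeeping in passing from the exact threshold $\frac{\pi(\sqrt{a+\eps}+\sqrt{a})}{\eps}$ to the clean closed form quoted in the statement. I would also note that, as the query count is expressed in terms of the unknown quantity $\eta$, the choice of $m$ is formally a function of $\eta$, which matches exactly how the bound is stated and how it will be invoked downstream.
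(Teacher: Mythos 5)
Your approach is the same as the paper's---fix the target error $\eps$, invert the error bound of \cref{thm:qae}, and solve the resulting quadratic in $m$---and your inversion is in fact carried out correctly: writing $a \coloneqq \eta(1-\eta)$, the condition $\frac{2\pi\sqrt{a}}{m} + \frac{\pi^2}{m^2} \le \eps$ holds precisely when $m \ge \frac{\pi\left(\sqrt{a} + \sqrt{a+\eps}\right)}{\eps}$. The gap is in your last bookkeeping step: bounding $\sqrt{a} \le \sqrt{a+\eps}$ in the numerator yields $\frac{2\pi\sqrt{a+\eps}}{\eps}$, which is \emph{twice} the stated bound, not the stated bound itself. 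And since your (correct) exact threshold strictly exceeds $\pi\frac{\sqrt{a+\eps}}{\eps}$ whenever $a > 0$, no amount of harmless rounding can recover the stated constant---the query count in the corollary as written is insufficient by a factor of up to $2$. One can check this directly: substituting $m = \pi\frac{\sqrt{a+\eps}}{\eps}$ into $\eps m^2 - 2\pi\sqrt{a}\,m - \pi^2$ gives $\frac{\pi^2}{\eps}\left(a - 2\sqrt{a(a+\eps)}\right) < 0$ for $a > 0$, so the error guarantee fails at that $m$.

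It is worth noting that the paper's own one-line proof contains the same defect, and your more careful derivation exposes it: the paper's displayed chain compares against $\pi\frac{\sqrt{a} + \sqrt{a+\eps}}{2\eps}$, which is \emph{half} the true positive root of $\eps m^2 - 2\pi\sqrt{a}\,m - \pi^2 = 0$ and hence not a sufficient query count, so the chain does not establish the claimed bound. The damage is cosmetic rather than structural: replacing the bound by $\frac{\pi\left(\sqrt{a}+\sqrt{a+\eps}\right)}{\eps} \le \frac{2\pi\sqrt{a+\eps}}{\eps}$ repairs the corollary, and every downstream use survives---in \cref{thm:main-thm-alg-qae} the thresholds $4\pi k^3$ and $\sqrt{6}\pi k^3$ would double, but the $O\left(k^3\log(1/\delta)\right)$ query and $O\left(nk^3\log(1/\delta)\right)$ time bounds are unaffected. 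So: same route as the paper, correct quadratic algebra, but your final "simplification" step would fail as described, and you should instead weaken the stated constant (or keep the exact threshold) and note the factor-$2$ slack downstream.
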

\begin{proof}
By \cref{thm:qae}, this will require 
$m$ queries, where $m$ is a solution to the following quadratic equation:
\[
\frac{2 \pi \sqrt{\eta(1-\eta)}}{m} + \frac{\pi^2}{m^2} \leq \eps \Rightarrow m \geq \pi \frac{\sqrt{\eta (1-\eta) + \eps}}{\eps} \geq\pi \frac{\sqrt{\eta(1-\eta)} +\sqrt{\eta(1-\eta) + \eps}}{2\eps}.
\]
\end{proof}

With that, we are ready to explain the modifications to \cref{alg:distinguisher} that achieves a quartic speedup in the dependency on $k$. 

\begin{theorem}
\label{thm:main-thm-alg-qae}
Let $\ket{\psi}$ be an unknown $n$-qubit pure state prepared by a unitary $U$ for $n \geq 33$, and let $k \leq \frac{4}{5} 2^{n/12}$. There exists a quantum algorithm that distinguishes whether $\ket{\psi}$ is Haar-random or a state with stabilizer fidelity at least $\frac{1}{k}$, promised that one of these is The case. 
The algorithm uses $O\left(k^{3}\log(1/\delta)\right)$ applications of either $U$ or $U^\dagger$ and time $O\left(n k^3 \log(1/\delta)\right)$, and distinguishes the two cases with success probability at least $1-\delta$.
\end{theorem}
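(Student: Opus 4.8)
The plan is to mirror the structure of the proof of \cref{thm:main-thm-alg}, replacing only the estimation subroutine: instead of constructing the empirical mean $\hat\eta = \frac1m\sum_i X_i$ from $m$ independent Bell-difference-sampled measurements and invoking Hoeffding's inequality, I would estimate $\eta = \Ex_{x\sim q_\psi}[\abs{\braket{\psi|W_x|\psi}}^2]$ using quantum amplitude estimation via \cref{cor:qae}, at the cost of requiring access to the state-preparation unitary $U$ and its inverse $U^\dagger$. The correctness threshold is unchanged: by \cref{lem:expected-measurement}, in the stabilizer-fidelity case $\eta \geq 1/k^6$, while in the Haar-random case $\eta \leq 2^{-n/2}$ with overwhelming probability, and since $k \leq \frac45 2^{n/12}$ these two regimes are separated by a constant factor, so I would again threshold $\hat\eta$ at roughly $\frac{2}{3k^6}$.

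The key steps, in order, are as follows. First, I would express the quantity $\eta$ as an amplitude $\braket{\Psi|\Pi|\Psi} = \eta$ for a suitable purified state $\ket{\Psi}$ and projector $\Pi$ so that \cref{thm:qae} applies. The natural choice is to coherently implement Bell difference sampling alongside the measurement $W_x^{\otimes 2}$: build a unitary that, using $U$ a constant number of times, prepares a superposition over Weyl labels $x$ weighted by $q_\psi(x)$ (entangled with the relevant copies of $\ket{\psi}$) and appends an ancilla flagging the $+1$ outcome of $W_x^{\otimes 2}$; then $\Pi$ is the projector onto the $+1$-flag, and $\braket{\Psi|\Pi|\Psi}$ equals the $+1$-probability. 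Because the measurement outcome $X_i \in \{\pm1\}$ satisfies $\Ex[X_i] = \eta$, the $+1$-probability is $\frac{1+\eta}{2}$, so amplitude estimation of this probability yields an estimate of $\eta$ after an affine rescaling. Second, I would verify that the reflections $R_\Pi = 2\Pi - I$ and $R_\Psi = 2\ket{\Psi}\!\bra{\Psi} - I$ are each implementable using $O(1)$ applications of $U$, $U^\dagger$, and $O(n)$ Clifford gates (the latter for Bell difference sampling, which runs in $O(n)$ time by \cite{montanaro2017learning}). Third, I would apply \cref{cor:qae} with target accuracy $\eps = \Theta(1/k^6)$: since $\eta \leq 2^{-n/2} + 1/k^6 = O(1/k^6)$ in the hard case and $\eta = O(1)$ always, the factor $\sqrt{\eta(1-\eta)+\eps}$ is $O(\sqrt{1/k^6}) = O(1/k^3)$, so the query count $\pi\sqrt{\eta(1-\eta)+\eps}/\eps$ becomes $O(k^3)$. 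Fourth, I would boost the constant success probability $8/\pi^2$ of a single amplitude-estimation run to $1-\delta$ by taking the median of $O(\log(1/\delta))$ independent runs, giving the stated $O(k^3\log(1/\delta))$ query and $O(nk^3\log(1/\delta))$ time bounds.

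The main obstacle I anticipate is the first step: carefully setting up the coherent version of Bell difference sampling so that the amplitude being estimated is exactly (an affine function of) $\eta = \Ex_{x\sim q_\psi}[\abs{\braket{\psi|W_x|\psi}}^2]$, with the correct weighting by $q_\psi$. This requires checking that coherently preparing the Weyl-label register produces amplitudes whose squared magnitudes reproduce the Weyl distribution $q_\psi = 4^n(p_\psi \ast p_\psi)$ from \cref{lem:bell_diff_sampling}, and that entangling this with the $W_x^{\otimes 2}$ measurement on two further copies of $\ket\psi$ gives a joint acceptance amplitude equal to $\frac{1+\eta}{2}$ rather than some other functional of $p_\psi$. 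One subtlety is that amplitude estimation requires a pure reference state $\ket{\Psi}$ and a genuine reflection $R_\Psi$ about it; since Bell difference sampling ordinarily involves intermediate measurements, I would need to defer all measurements (principle of deferred measurement) and absorb the sampling randomness into ancilla registers so that the whole preparation is a single unitary, built from $O(1)$ calls to $U$ and $U^\dagger$, about which $R_\Psi$ reflects. Once this reduction is in place, the remaining steps are essentially bookkeeping on the parameters of \cref{cor:qae} together with a standard median-amplification argument.
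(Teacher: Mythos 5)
Your overall plan---replace the Hoeffding-based empirical mean with quantum amplitude estimation, keep the threshold at roughly $\frac{2}{3k^6}$, and median-amplify to confidence $1-\delta$---matches the paper's proof, but your encoding of $\eta$ as an amplitude has a genuine flaw that breaks the claimed $O(k^3)$ bound, and this encoding is precisely the step the paper has to get right. You flag the $+1$ outcome of the $W_x^{\otimes 2}$ measurement, so the amplitude you feed into \cref{cor:qae} is $a = \frac{1+\eta}{2}$, not $\eta$. The query count in \cref{cor:qae} is governed by the amplitude actually being estimated: with $a \approx \frac{1}{2}$ you have $\sqrt{a(1-a)+\eps} = \Theta(1)$, so achieving additive accuracy $\eps = \Theta(1/k^6)$ costs $\Theta(1/\eps) = \Theta(k^6)$ queries---only a quadratic speedup over the $k^{12}$ sampling bound, not the quartic one claimed. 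Your computation ``$\sqrt{\eta(1-\eta)+\eps} = O(1/k^3)$'' implicitly substitutes $\eta$ for the estimated amplitude, which your construction does not provide (and it is also internally inconsistent with your own remark that $\eta$ can be $\Theta(1)$). The paper avoids this by making the acceptance probability equal $\eta$ itself: it uses five copies, writes Bell difference sampling as the projector $\Pi_x = \sum_{y}\ketbra{W_y}{W_y}\otimes\ketbra{W_{x+y}}{W_{x+y}}$ acting on $\ket{\psi^{\otimes 4}}$, and---crucially---replaces your two-copy parity measurement with the rank-one projector $P_{\psi,x} = W_x U \ketbra{0^n}{0^n} U^\dagger W_x$ on a \emph{single} additional copy, whose acceptance probability is $\abs{\braket{\psi|W_x|\psi}}^2$ directly. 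Then $M = \sum_x \Pi_x \otimes P_{\psi,x}$ satisfies $\braket{\psi^{\otimes 5}|M|\psi^{\otimes 5}} = \eta$, the estimated amplitude is genuinely small, and $\pi\sqrt{\eta(1-\eta)+\eps}/\eps = O(k^3)$. Note this is exactly where $U^\dagger$ earns its keep: not merely to build the reflection $R_\psi$, as in your second step, but to convert ``measure $W_x$ on two copies and multiply outcomes'' into ``project onto $W_x\ket{\psi}$.''

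A second, more easily repaired, imprecision: even with the correct encoding you cannot run QAE at a fixed accuracy $\eps = \Theta(1/k^6)$ and bound the cost by a single worst-case $\eta$, since in the fidelity case $\eta$ ranges over $[\frac{1}{k^6}, 1]$ and $\sqrt{\eta(1-\eta)}$ can be $\Theta(1)$. The paper instead demands accuracy $\abs{\hat{\eta}-\eta} \le \eta - \frac{2}{3k^6}$, which grows with $\eta$, and verifies that the resulting query bound $\pi\sqrt{\eta(1-\eta)+\eta-\frac{2}{3k^6}}\big/\left(\eta-\frac{2}{3k^6}\right)$ is monotonically decreasing on $[\frac{2}{3k^6},1)$, hence maximized at $\eta_{\min}=\frac{1}{k^6}$, giving $m \ge 4\pi k^3$; a parallel computation with threshold gap $\frac{2}{3k^6} - 2^{-n/2} \ge \frac{1}{3k^6}$ (using $k \le \frac{4}{5}2^{n/12}$) handles the Haar-random case with $m \ge \sqrt{6}\pi k^3$. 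Your intuition that ``the hard case is $\eta$ near the threshold'' is the right one, but it must be made precise in this $\eta$-dependent way. Your remaining points---deferring measurements so the whole preparation is one unitary, the $O(n)$ per-query cost for implementing $W_x$ and the Bell-basis operations, and boosting $8/\pi^2$ to $1-\delta$ with $O(\log(1/\delta))$ repetitions---are all correct and agree with the paper.
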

\begin{proof}
We first define the \textit{Bell difference sampling projector} on $x$ as 
\[\Pi_x \coloneqq \sum_{y \in \F_2^{2n}} \ketbra{W_{y}}{W_y}\otimes \ketbra{W_{x+y}}{W_{x+y}}.\] 
Note that this is simply a compact way of writing the Bell difference sampling procedure: the probability of sampling $x$ is $q_\psi(x) = \lVert \Pi_x \ket{\psi^{\otimes 4}}
\rVert$.\footnote{Indeed, this is the way Gross, Nezami, and Walter \cite{gross2021schur} introduce Bell difference sampling.}
We can also perform the projective measurement $P_{\psi, x} \coloneqq W_x\ket{\psi}\!\!\bra{\psi} W_x = W_x U \ketbra{0}{0} U^\dagger W_x$, where this measurement is performed by applying $W_x$, $U^\dagger$, and then measuring in the computational basis.  
We can entangle $\Pi_x$ and $P_{\psi, x}$  to form the following projector: 
\[
M = \sum_{x \in \F_2^{2n}} \Pi_x \otimes P_{\psi, x}.
\]
Building $M$ involves controlled applications of $W_x$ according to the Bell difference sampling outcome. 
Observe that
\[
\braket{\psi^{\otimes 5}|M|\psi^{\otimes 5}} 
= \sum_{x \in \F_2^{2n}}\braket{\psi^{\otimes 4}|\Pi_x|\psi^{\otimes 4}} \cdot \braket{\psi|P_{\psi, x}|\psi}
= \Ex_{x \sim q_\psi} \lbrak \abs{\braket{\psi|W_x|\psi}}^2 \rbrak.  
\]

Hence, we can run QAE with the input projector $M$ and the input state $\ket{\psi^{\otimes5}}$, and the output will be an estimate of $\eta$ whose accuracy depends on $m$, the number of total calls to $R_\Pi$ and $R_\psi$. 

Proving the sample complexity bound will mimic \cref{thm:main-thm-alg}.
Suppose $\ket{\psi}$ is a state with stabilizer fidelity at least $\frac{1}{k}$. 
Define $\etamin \coloneqq \frac{1}{k^6}$, and note that for any state with stabilizer fidelity at least $\frac{1}{k}$, $\eta \geq \etamin$ due to \cref{lem:expected-measurement}.
For our algorithm to succeed, recall from the proof of \cref{thm:main-thm-alg} that we need 
\[
\abs{\hat{\eta} - \eta} \leq \abs{\frac{2}{3k^6} - \eta}.
\]
Therefore, we can run QAE with a fixed value of $m$ (to be specified later) for an estimate of $\eta$ whose accuracy is within $\pm \left(\eta - \frac{2}{3k^6}\right)$. 
By \cref{cor:qae}, 
\begin{align}\label{eq:qae-queries}
m \geq \pi \frac{\sqrt{\eta (1-\eta) + \eta - \frac{2}{3k^6}}}{\eta - \frac{2}{3k^6}}
\end{align}
queries suffice.
The chosen value of $m$ must work for all $\eta \in [\frac{1}{k^6}, 1]$. Note that \cref{eq:qae-queries} is monotonically decreasing for $\eta \in [\frac{2}{3k^6}, 1)$, and is therefore maximized by $\etamin$ for $\eta \in [\frac{1}{k^6} , 1]$. 
To succeed with probability at least $\frac{8}{\pi^2}$,
\[m \geq 4\pi k^3 \geq \pi \sqrt{12 k^6 - 9} = \pi \frac{\sqrt{\etamin (1-\etamin) + \etamin - \frac{2}{3k^6}}}{\etamin - \frac{2}{3k^6}}\] calls to $R_\Pi$ and $R_\psi$ suffices.

Now suppose $\ket\psi$ is a Haar-random state.
Again, by \cref{lem:expected-measurement}, we know that $\eta \leq 2^{-n/2}$ with probability $1-e^{-2\sqrt{2}}$ for $n \geq 33$.
Assuming $\eta \leq 2^{-n/2}$ and using \cref{cor:qae}, as long as we have
\[
m \geq \sqrt{6}\pi k^3 \geq \pi \frac{\sqrt{2^{-n/2} (1-2^{-n/2}) + \frac{2}{3k^6} - 2^{-n/2}}}{\frac{2}{3k^6} - 2^{-n/2}} \geq  \pi \frac{\sqrt{\eta (1-\eta) + \frac{2}{3k^6} - \eta}}{\frac{2}{3k^6} - \eta}
\]
queries to $R_{\Pi}$ and $R_\psi$, we obtain the correct answer with probability at least $\frac{8}{\pi^2}$.
In the inequalities above we use similar reasoning to the stabilizer fidelity $\frac{1}{k}$ case, combined with the fact that $2^{-n/2} \leq \frac{1}{3k^6}$.

Finally, since $R_\Pi$ and $R_\psi$ use a constant number of calls to $U$ and $U^\dagger$, 
the total number of calls is $O(k^3)$. 
Chernoff-Hoeffding can be used to bring the success probability from $3/4$ to $1-\delta$ using $6\ln(1/\delta)$ repetitions.
The runtime includes an extra factor of $O(n)$, due to the linear cost of both preparing $W_x$ and the Bell difference sampling projector, giving a $O\left(nk^3\log(1/\delta) \right)$ time complexity.
\end{proof}

\section{On the Tightness of Our Analysis}\label{sec:magic-extent}
We argue that the first part of \cref{lem:expected-measurement} is polynomially-close to optimal. We begin by computing the stabilizer extent and stabilizer fidelity of Clifford magic states.
The two technical ingredients involved in the computation are due to Bravyi et al.~\cite{Bravyi2019simulationofquantum}.

\begin{fact}[{\cite[Proposition 2]{Bravyi2019simulationofquantum}}]\label{prop:fidelty-extent-magic}
Let $\ket\psi$ be a Clifford magic state. Then, $\extent(\ket{\psi}) = \fidelity_{\stabset}(\ket{\psi})^{-1}$. 
\end{fact}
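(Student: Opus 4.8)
The plan is to establish the two inequalities separately. The direction $\extent(\ket\psi) \geq \fidelity_{\stabset}(\ket\psi)^{-1}$ is already in hand for \emph{every} pure state by \cref{claim:metric-relations}, so the entire content is the reverse bound $\extent(\ket\psi) \leq \fidelity_{\stabset}(\ket\psi)^{-1}$, which amounts to exhibiting a stabilizer decomposition of $\ket\psi$ whose squared $\ell_1$-norm matches the inverse fidelity. Since both $\extent$ and $\fidelity_{\stabset}$ are invariant under Clifford circuits (Cliffords permute $\stabset_n$), I would first reduce to the case where the magic state is a tensor product of single-qubit states, $\ket\psi = \bigotimes_{i=1}^m \ket{a_i}$.

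I would then assemble the reverse bound from three ingredients. First, stabilizer extent is trivially \emph{sub}-multiplicative: tensoring the optimal decompositions of the factors gives $\extent(\bigotimes_i \ket{a_i}) \leq \prod_i \extent(\ket{a_i})$, since a tensor product of stabilizer states is a stabilizer state. Second, a single-qubit base case $\extent(\ket{a}) = \fidelity_{\stabset}(\ket a)^{-1}$: on one qubit there are only six stabilizer states (the octahedron vertices), and writing $\ket a$ as the unique combination $\alpha\ket{\phi_1} + \beta\ket{\phi_2}$ of its two nearest stabilizer states, one checks by direct computation that $(\abs{\alpha} + \abs{\beta})^2 = \fidelity_{\stabset}(\ket a)^{-1}$; combined with \cref{claim:metric-relations} this forces equality. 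Third, stabilizer fidelity is \emph{multiplicative} on such products, $\fidelity_{\stabset}(\bigotimes_i \ket{a_i}) = \prod_i \fidelity_{\stabset}(\ket{a_i})$: the $\geq$ direction is immediate from tensoring the optimal single-qubit stabilizer states, while for $\leq$ I would argue that the optimal $m$-qubit stabilizer state can be taken to be a product of single-qubit stabilizer states, since any stabilizer state entangled across a cut has maximally mixed reduced states on its entangled support and hence (by Cauchy-Schwarz on the reduced overlap) achieves overlap at most $2^{-1/2}$ per entangled qubit with a product target, a penalty that is dominated by the per-qubit fidelity of a magic state (itself exceeding $2^{-1/2}$).

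Chaining these gives $\extent(\bigotimes_i \ket{a_i}) \leq \prod_i \extent(\ket{a_i}) = \prod_i \fidelity_{\stabset}(\ket{a_i})^{-1} = \fidelity_{\stabset}(\bigotimes_i \ket{a_i})^{-1}$, which together with the universal bound $\extent \geq \fidelity_{\stabset}^{-1}$ from \cref{claim:metric-relations} pins down the equality. I expect the main obstacle to be making the fidelity-multiplicativity step fully rigorous for all $m$ (controlling every entangled stabilizer competitor, not merely product ones) together with the clean single-qubit base case; the technically heaviest route, and the one I would ultimately cite from Bravyi et al.~\cite{Bravyi2019simulationofquantum}, phrases the reverse bound as an SDP/atomic-norm duality whose dual optimum is exactly the fidelity witness, making strong duality, and hence the equality, transparent.
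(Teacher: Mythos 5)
You have a genuine gap, and it is located exactly where your proof does all of its work: the opening reduction to tensor products of single-qubit states. A Clifford magic state (in the sense of \cite{Bravyi2019simulationofquantum}, where this Fact is Proposition 2) is a state $\ket{\psi}$ admitting a finite group $G$ of Clifford unitaries with $\tfrac{1}{|G|}\sum_{U \in G} U = \ketbra{\psi}{\psi}$; it is \emph{not} defined as a Clifford image of a product of single-qubit magic states, and the two classes differ. Concretely, $\ket{\mathrm{CCZ}} = \mathrm{CCZ}\ket{{+}{+}{+}}$ is a Clifford magic state (the conjugates $\mathrm{CCZ}\, X_i\, \mathrm{CCZ}^\dagger$ are Clifford and generate such a group), but it is not Clifford-equivalent to any product of single-qubit states: the multiset $\{\abs{\braket{\psi|W_x|\psi}}\}_{x}$ is Clifford-invariant, and for $\ket{\mathrm{CCZ}}$ it consists of one value $1$ and twenty-eight values $\tfrac{1}{2}$ (29 nonzero entries), whereas a product of single-qubit pure states has $\prod_i n_i$ nonzero entries with each $n_i \in \{2,3,4\}$, and $n_i = 2$ forces a stabilizer factor and hence a second modulus-$1$ entry; so the count would lie in $\{27,36,48,64\}$, never $29$. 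Your later ingredients also have problems: the single-qubit base case as you state it is false --- for the face state (Bloch direction $(1,1,1)/\sqrt{3}$) the two-nearest-stabilizer decomposition gives $(\abs{\alpha}+\abs{\beta})^2 \approx 1.69$ while $\fidelity_{\stabset}^{-1} = 3-\sqrt{3} \approx 1.27$ (a three-term decomposition is needed), and for a \emph{generic} off-axis single-qubit state one has $\extent > \fidelity_{\stabset}^{-1}$ strictly (a dual witness $\ket{\omega} = \ket{0} + \bar{z}\ket{1}$ shows $\sqrt{\extent}-1$ grows linearly in the Bloch angle while $\fidelity_{\stabset}^{-1}-1$ grows quadratically), so the equality is special to states with nontrivial Clifford symmetry. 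Finally, multiplicativity of stabilizer fidelity --- which you correctly flag as the heaviest step --- is known to fail for general tensor factors, and proving it even for single-qubit factors is essentially as hard as the proposition itself.

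For comparison: this paper does not prove the Fact at all (it imports it verbatim), and the proof in \cite{Bravyi2019simulationofquantum} is a three-line twirling argument that uses the group-theoretic definition directly and needs none of your machinery --- no reduction, no base case, no multiplicativity, no SDP duality (duality appears in their Proposition 1, not here). Let $\ket{\phi} \in \stabset_n$ attain $\fidelity_{\stabset}(\ket{\psi})$ and let $G$ be the Clifford group with $\tfrac{1}{|G|}\sum_{U\in G} U = \ketbra{\psi}{\psi}$. Applying this operator to $\ket{\phi}$ gives $\tfrac{1}{|G|}\sum_{U \in G} U\ket{\phi} = \braket{\psi|\phi}\ket{\psi}$, i.e., $\ket{\psi} = \sum_{U \in G}\tfrac{1}{|G|\braket{\psi|\phi}}\, U\ket{\phi}$, a decomposition into the stabilizer states $U\ket{\phi}$ with $\ell_1$-norm $\abs{\braket{\psi|\phi}}^{-1}$, so $\extent(\ket{\psi}) \le \fidelity_{\stabset}(\ket{\psi})^{-1}$; combining with \cref{claim:metric-relations} (which you correctly identified as the other direction) yields equality. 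Your route, once the base case is restricted to $T$-type factors (where your two-nearest computation is correct) and fidelity multiplicativity is supplied, would recover only the subclass actually used in \cref{fact:magic-extent}, namely $\ket{T^{\otimes m}}$ --- not the Fact as stated for all Clifford magic states.
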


\begin{fact}[{\cite[Proposition 1]{Bravyi2019simulationofquantum}}]\label{prop:extent-multiplicative}
Let $\{\ket{\psi_1}, \ket{\psi_2}, \ldots, \ket{\psi_L}\}$ be any set of
states such that each state $\ket{\psi_j}$ describes a system of at most $3$ qubits. Then,
\[
\extent( \ket{\psi_1} \otimes \ket{\psi_2} \otimes \ldots \otimes \ket{\psi_L}) = \prod_i \xi(\ket{\psi_i}).
\]
\end{fact}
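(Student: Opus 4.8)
The plan is to prove the two inequalities separately, since the statement is an equality. The inequality $\extent(\ket{\psi_1}\ot\cdots\ot\ket{\psi_L}) \le \prod_i \extent(\ket{\psi_i})$ holds for \emph{any} number of qubits and is the routine direction: taking minimal decompositions $\ket{\psi_i} = \sum_\alpha c^{(i)}_\alpha \ket{\phi^{(i)}_\alpha}$ and tensoring them yields a stabilizer decomposition of the product, because a tensor product of stabilizer states is again a stabilizer state; its $\ell_1$-cost is $\prod_i \norm{c^{(i)}}_1$, so the squared cost factorizes exactly. The real content is the reverse (supermultiplicativity) inequality, and this is where the three-qubit restriction becomes essential.

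For the supermultiplicativity direction I would pass to the convex dual. The quantity $\sqrt{\extent(\ket\psi)}$ is precisely the atomic norm $\norm{\ket\psi}_{\calA}$ whose atoms $\calA$ are the stabilizer states, so by convex duality
\[
\sqrt{\extent(\ket\psi)} = \max\left\{ \abs{\braket{w|\psi}} \;:\; \max_{\ket\phi \in \stabset_n}\abs{\braket{w|\phi}} \le 1 \right\}.
\]
Write $\norm{w}^\circ \coloneqq \max_{\ket\phi\in\stabset_n}\abs{\braket{w|\phi}}$ for the induced dual norm. Given optimal dual witnesses $w_1,\dots,w_L$ for the individual factors (so $\norm{w_i}^\circ \le 1$ and $\abs{\braket{w_i|\psi_i}} = \sqrt{\extent(\ket{\psi_i})}$), the tensor product $w_1\ot\cdots\ot w_L$ satisfies $\abs{\braket{w_1\ot\cdots\ot w_L | \psi_1\ot\cdots\ot\psi_L}} = \prod_i \sqrt{\extent(\ket{\psi_i})}$. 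Hence it suffices to show this product witness is dual-feasible, i.e.\ $\norm{w_1\ot\cdots\ot w_L}^\circ \le 1$. By induction it is enough to treat $L=2$ and peel off one factor at a time, so the goal reduces to submultiplicativity of the dual norm, $\norm{w_1\ot w_2}^\circ \le \norm{w_1}^\circ\norm{w_2}^\circ$, in the case where the peeled-off factor $A$ has at most three qubits (the reverse, supermultiplicativity of $\norm{\cdot}^\circ$, is automatic by testing against product stabilizer states). Equivalently, one is proving that the stabilizer fidelity is multiplicative across such a cut.

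The main obstacle — and the only place the $\le 3$-qubit hypothesis is used — is verifying $\max_{\ket\phi}\abs{\braket{\phi|w_1\ot w_2}} \le \norm{w_1}^\circ\norm{w_2}^\circ$ when the maximizing joint stabilizer state $\ket\phi$ is \emph{entangled} across the cut $A|B$. Here I would exploit the rigid structure of bipartite stabilizer states: across any cut a stabilizer state has a flat Schmidt spectrum, $\ket\phi = r^{-1/2}\sum_{j=1}^{r}\ket{a_j}\ot\ket{b_j}$, with $\{\ket{a_j}\}$ and $\{\ket{b_j}\}$ orthonormal stabilizer states and Schmidt rank $r = 2^k \le 2^{n_A} \le 8$. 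Substituting gives $\braket{\phi|w_1\ot w_2} = r^{-1/2}\sum_j \braket{a_j|w_1}\braket{b_j|w_2}$, where each term is individually controlled by the relevant dual norm; the difficulty is that a crude triangle inequality or Cauchy--Schwarz overcounts by a factor of $\sqrt r$, so one must use the \emph{finer} structure of stabilizer states on at most three qubits — that $\{\ket{a_j}\}$ is a highly structured stabilizer basis, together with the low-dimensional design properties of $\stabset_n$ — to recover the missing factor. I expect this feasibility lemma to be the crux of the entire argument: it is exactly where the proof breaks down for a factor on more than three qubits, consistent with the fact that tensor multiplicativity of the extent is not expected to survive for larger subsystems. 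Once the feasibility lemma is established, the chain $\prod_i \sqrt{\extent(\ket{\psi_i})} \le \sqrt{\extent(\ot_i\ket{\psi_i})}$ gives the reverse inequality, which combined with the easy direction yields the claimed equality.
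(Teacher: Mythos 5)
First, a framing point: the paper does not prove this statement at all — it imports it verbatim as Proposition 1 of Bravyi, Browne, Calpin, Campbell, Gosset, and Howard \cite{Bravyi2019simulationofquantum} — so the relevant comparison is with the proof in that cited source. Your skeleton matches it closely: the submultiplicative direction by tensoring minimal decompositions is correct and complete; the passage to the dual formulation $\sqrt{\extent(\ket\psi)} = \max\{\abs{\braket{w|\psi}} : \abs{\braket{w|\phi}} \le 1 \ \forall \ket\phi \in \stabset_n\}$, the reduction to dual feasibility of the product witness $w_1 \otimes \cdots \otimes w_L$, the peeling induction across a cut with one side of at most $3$ qubits, and the flat-Schmidt canonical form of bipartite stabilizer states (with stabilizer Schmidt vectors and rank $r = 2^k \le 8$) are exactly the ingredients of the original argument. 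Your observation that a crude triangle or Cauchy--Schwarz estimate overshoots by $\sqrt{r}$ is also accurate: feasibility against all stabilizer states in the code only yields $\bigl(\sum_j \abs{\braket{a_j|w_1}}^2\bigr)^{1/2} \le \sqrt{r}$, which is insufficient.

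The genuine gap is that the proposal stops precisely at the only hard step. Where you write that one must use ``the finer structure of stabilizer states on at most three qubits'' and ``low-dimensional design properties,'' you are flagging the crux, not proving it. The needed statement is concrete: with $u_j = \braket{a_j|w_1}$ and $v_j = \braket{b_j|w_2}$, one must show $r^{-1/2}\abs{\sum_{j=1}^r u_j v_j} \le 1$, knowing only that $w_1, w_2$ have overlap of modulus at most $1$ with every stabilizer state — in particular with every state of the form $r^{-1/2}\sum_j i^{\ell(j)}(-1)^{q(j)}\ket{a_j}$ for $\ell$ linear and $q$ quadratic over $\F_2^k$, which are the stabilizer states inside the code. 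In \cite{Bravyi2019simulationofquantum} this is handled by a dedicated lemma about aligning phases with such discrete $\{\pm 1, \pm i\}$-valued quadratic phase functions, proved by explicit analysis in dimensions $r \in \{2,4,8\}$; the argument genuinely breaks at $r = 16$, and indeed the later result of Heimendahl, Montealegre-Mora, Vallentin, and Gross (Quantum 5, 431 (2021)) that stabilizer extent is \emph{not} multiplicative in general shows any proof must use the small-factor hypothesis in an essential, non-generic way. Your appeal to design properties cannot substitute for this: $2$- and $3$-design identities control only low moments of the overlaps $\braket{a_j|w_1}$, giving exactly the $\ell_2$-type bounds you already identified as too lossy by $\sqrt r$, not the $\ell_1$-alignment inequality required. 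So the proposal is the right route with an honest but unfilled hole at its center; as a proof it is incomplete.
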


It is well known that the $m$-fold tensor product of $\ket{T} \coloneqq 2^{-1/2}( \ket{0} + e^{i \pi / 4}\ket{0})$ is a Clifford magic state.
Using the facts above, we can compute the stabilizer extent and stabilizer fidelity of $\ket{T^{\otimes m}}$.

\begin{fact}\label{fact:magic-extent}
\[\extent(\ket{T^{\otimes m}}) = \lpar \cos \pi/8 \rpar^{-2m}
\quad \text{and} \quad
F_{\stabset_m}(\ket{T^{\otimes m}}) = \lpar \cos \pi/8 \rpar^{2m}.
\]
\end{fact}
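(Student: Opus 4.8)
The plan is to combine \cref{prop:extent-multiplicative} with \cref{prop:fidelty-extent-magic} to reduce the computation to a single-qubit calculation for $\ket{T}$. First I would observe that $\ket{T^{\otimes m}}$ is a tensor product of single-qubit states, each of which describes a system of at most $3$ qubits, so \cref{prop:extent-multiplicative} applies and gives $\extent(\ket{T^{\otimes m}}) = \extent(\ket{T})^m$. This immediately reduces the extent computation to finding $\extent(\ket{T})$ for the single-qubit magic state.

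Next I would compute $\extent(\ket{T})$ directly. Since $\ket{T}$ is a single-qubit state, its optimal stabilizer decomposition is over the six single-qubit stabilizer states $\{\ket{0}, \ket{1}, \ket{+}, \ket{-}, \ket{+i}, \ket{-i}\}$. The natural decomposition writes $\ket{T}$ as a real combination of $\ket{+}$ and $\ket{-i}$ (or an appropriate pair), and a short optimization over the $\ell_1$-norm of the coefficients yields $\extent(\ket{T}) = (\cos \pi/8)^{-2}$. The key geometric fact is that $\ket{T}$ sits at a symmetric point relative to two nearby stabilizer states, each at angular distance $\pi/8$ on the Bloch sphere, so the optimal coefficients are governed by $\cos(\pi/8)$ and $\sin(\pi/8)$. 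Combining with the tensor-product formula gives $\extent(\ket{T^{\otimes m}}) = (\cos \pi/8)^{-2m}$.

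Finally, for the stabilizer fidelity, I would invoke \cref{prop:fidelty-extent-magic}, which asserts that for any Clifford magic state the extent and the inverse fidelity coincide, i.e. $\extent(\ket{\psi}) = F_{\stabset}(\ket{\psi})^{-1}$. Since $\ket{T^{\otimes m}}$ is a Clifford magic state, this yields $F_{\stabset_m}(\ket{T^{\otimes m}}) = \extent(\ket{T^{\otimes m}})^{-1} = (\cos \pi/8)^{2m}$, completing both parts of the claim.

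I expect the only nontrivial step to be verifying that the claimed single-qubit decomposition of $\ket{T}$ is actually optimal (rather than just an upper bound on the extent); the matching lower bound follows from \cref{prop:fidelty-extent-magic} together with the fact that the stabilizer fidelity of $\ket{T}$ is realized by the nearest stabilizer state at angular distance $\pi/8$, so $F_{\stabset}(\ket{T}) = \cos^2(\pi/8)$ and hence $\extent(\ket{T}) = \cos^{-2}(\pi/8)$ is tight. Everything else is a routine application of the two cited facts.
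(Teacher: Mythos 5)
Your proposal matches the paper's proof essentially verbatim: both reduce to the single-qubit case via \cref{prop:extent-multiplicative}, compute $F_{\stabset}(\ket{T}) = \cos^2(\pi/8)$ by explicit calculation over the six single-qubit stabilizer states, and convert between extent and fidelity via \cref{prop:fidelty-extent-magic}. Your extra step of exhibiting an explicit two-state decomposition of $\ket{T}$ is redundant---as you yourself note, the fidelity computation together with \cref{prop:fidelty-extent-magic} already pins down the extent exactly---and one small correction: the nearest stabilizer states sit at angle $\pi/8$ in Hilbert space, which is $\pi/4$ on the Bloch sphere, not $\pi/8$.
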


\begin{proof}
By \cref{prop:extent-multiplicative}, the stabilizer extent of $\ket{T^{\otimes{m}}}$ is simply the stabilizer extent of $\ket{T}$ raised to the power $m$. 
By \cref{prop:fidelty-extent-magic}, the stabilizer extent is the inverse of the stabilizer fidelity. Hence, the result follows simply by showing that the stabilizer fidelity of $\ket{T}$ is $\cos(\pi / 8)^2$, which can  be verified by explicit calculation over the $6$ different $1$-qubit stabilizer states. 
\end{proof}

Next, we compute $\eta$ for the state $\ket{T^{\otimes m}}$.

\begin{claim}\label{claim:eta-magic}
Let $\ket{\psi} = \ket{T^{\otimes m}}$ and define $\eta \coloneqq \Ex_{x \sim q_{\psi}}[2^n p_{\psi}(x)]$. Then, $\eta = (5/8)^m$.
\end{claim}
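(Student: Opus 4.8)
The plan is to reduce the computation to the single-qubit case via the product structure of $\ket{T^{\otimes m}}$, exactly as in the Haar-random analysis. Recall from the proof of the second part of \cref{lem:expected-measurement} that combining \cref{fact:blr-similarity} with \cref{prop:fourier-coefficients} yields the clean expression
\[
\eta = 32^m \sum_{x \in \F_2^{2m}} \wh{p_\psi}(x)^3 = 4^m \sum_{v,w \in \F_2^m} p_\psi(v,w)^3,
\]
so it suffices to evaluate $\sum_{v,w} p_\psi(v,w)^3$ for $\ket{\psi} = \ket{T^{\otimes m}}$.

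First I would establish that $p_\psi$ is a product distribution. Since $W_{(v,w)} = i^{v \cdot w} X^v Z^w = \bigotimes_{j=1}^m i^{v_j w_j} X^{v_j} Z^{w_j}$ factorizes across qubits (the global phase $i^{v\cdot w} = \prod_j i^{v_j w_j}$ splits as a product), and $\ket{T^{\otimes m}}$ is a tensor product, the expectation factorizes: $\braket{\psi | W_{(v,w)} | \psi} = \prod_{j=1}^m \braket{T | W_{(v_j, w_j)} | T}$. Squaring and inserting the $2^{-m}$ normalization shows
\[
p_\psi(v,w) = \prod_{j=1}^m p_T(v_j, w_j),
\]
where $p_T$ is the single-qubit characteristic distribution of $\ket{T}$. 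Consequently $\sum_{v,w \in \F_2^m} p_\psi(v,w)^3 = \big(\sum_{(a,b) \in \F_2^2} p_T(a,b)^3\big)^m$, reducing everything to a single qubit.

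Finally I would compute $p_T$ by hand. The four single-qubit Weyl operators are $W_{(0,0)} = I$, $W_{(0,1)} = Z$, $W_{(1,0)} = X$, and $W_{(1,1)} = Y$, and a direct calculation gives $\abs{\braket{T|I|T}}^2 = 1$, $\abs{\braket{T|Z|T}}^2 = 0$, and $\abs{\braket{T|X|T}}^2 = \abs{\braket{T|Y|T}}^2 = \tfrac12$. With the $2^{-1}$ normalization this yields $p_T(0,0) = \tfrac12$, $p_T(0,1) = 0$, and $p_T(1,0) = p_T(1,1) = \tfrac14$, so $\sum_{(a,b)} p_T(a,b)^3 = \tfrac18 + \tfrac1{64} + \tfrac1{64} = \tfrac{5}{32}$. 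Combining the three steps gives $\eta = 4^m (5/32)^m = (5/8)^m$, as claimed. There is no serious obstacle here; the only point needing care is the phase bookkeeping in the tensor factorization of $W_{(v,w)}$, so that the product decomposition of $p_\psi$ is exact rather than merely up to a phase.
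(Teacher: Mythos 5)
Your proof is correct and is essentially the paper's own argument in slightly different packaging: both exploit the tensor structure of $\ket{T^{\otimes m}}$ to reduce to the single-qubit Pauli expectations $\braket{T|I|T} = 1$, $\braket{T|X|T} = \braket{T|Y|T} = \tfrac{1}{\sqrt{2}}$, $\braket{T|Z|T} = 0$, and your factorization $\bigl(\sum_{(a,b)} p_T(a,b)^3\bigr)^m = (5/32)^m$ is exactly the paper's binomial enumeration $\frac{1}{2^{6m}}\sum_{k=0}^m \binom{m}{k} 2^k/2^{3k}$ after expanding the $m$-th power (the paper works with $\wh{p_\psi}$ where you work with $p_\psi$ via \cref{prop:fourier-coefficients}, a difference only of normalization). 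Your final worry about phase bookkeeping is moot, since $\abs{\braket{\psi|W_x|\psi}}^2$ is insensitive to the phase $i^{v\cdot w}$ anyway.
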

\begin{proof}
We begin by writing out $\ket{T}\!\!\bra{T}$ as a sum of Pauli matrices. By definition, 
$$
\ket{T}\!\!\bra{T} = \dfrac{1}{2}\lpar I + \dfrac{1}{\sqrt{2}}X + \dfrac{1}{\sqrt{2}}Y\rpar.
$$

We wish to compute $\sum_{x \in \F_2^{2m}}\widehat{p}_\psi(x)^3$.
We know that every such Pauli with nonzero $\widehat{p}_\psi(x)$ is a tensor product combination of $I$, $X$, and $Y$, so we enumerate over the number of indices where an $X$ or $Y$ appear.

\[
    \sum_{x \in \F_2^{2m}}\widehat{p}_{\psi}(x)^3 = \dfrac{1}{2^{6m}} \sum_{k=0}^m \binom{m}{k}\dfrac{1}{2^{3k}} \cdot 2^k = \dfrac{1}{64^m} \sum_{k=0}^m \binom{m}{k}\dfrac{1}{4^{k}} = \lpar \frac{5}{256}\rpar^m.
\]

Thus, by \cref{fact:blr-similarity},
\[
\eta = 32^m\sum_{x \in \F_2^{2m}}\widehat{p}_\psi(x)^3 = \lpar \frac{5}{8}\rpar^m. \qedhere
\]
\end{proof}

Combining \cref{claim:eta-magic} with \cref{lem:expected-measurement}, we have 
\[ \fidelity_{\stabset}(\ket{\psi}) \leq \eta^{1/c} = \left( \frac{5}{8}\right)^{m/c}\] 
for $c = 6$ (\cref{lem:expected-measurement}).
But, from \cref{fact:magic-extent}, we know that $\fidelity_{\stabset}(\ket{T^{\otimes m}}) = (\cos \pi/8)^{2m}$. Combining the two statements gives 
\[
(\cos \pi/8)^{2m} \leq (5/8)^{m/c}.
\]
$c \approx 2.97$ is the minimum $c$ that does not violate this inequality. 
Hence, one cannot hope for much more than a quadratic improvement in our bound.

\end{document}